\numberwithin{equation}{section}
\newtheorem{theorem}{Theorem}
\newtheorem{algorithm}{Algorithm}
\newtheorem{corollary}{Corollary}
\newtheorem{assumption}{Assumption}
\newtheorem{lemma}{Lemma}
\theoremstyle{definition}
\DeclareMathOperator{\E}{\text{E}}
\newcommand{\Epsilon}{\mathcal{E}}
\newcommand{\norm}[1]{\left \|#1\right \|}
\newcommand{\abs}[1]{\left\vert#1\right\vert}
\begin{document}

\title{Causal Inference by Quantile Regression Kink Designs}
\author{
Harold D. Chiang\thanks{harold.d.chiang@vanderbilt.edu. Department of Economics, Vanderbilt University.} \qquad Yuya Sasaki\thanks{yuya.sasaki@vanderbilt.edu. Department of Economics, Vanderbilt University.} \thanks{We would like to thank Patty Anderson and Bruce Meyer for kindly agreeing to our use of the CWBH data. We benefited from very useful comments by Han Hong (the editor), the associate editor, anonymous referees, Matias Cattaneo, Andrew Chesher, Antonio Galvao, Emmanuel Guerre, Blaise Melly, Jungmo Yoon, seminar participants at Academia Sinica, National Chengchi University, National University of Singapore, Otaru University of Commerce, Penn State University, Queen Mary University of London, Temple University, University of Pittsburgh, University of Surrey, University of Wisconsin-Madison and Vanderbilt University, and conference participants at AMES 2016, CeMMAP-SNU-Tokyo Conference on Advances in Microeconometrics, ESEM 2016, New York Camp Econometrics XI. All remaining errors are ours.}\\ Vanderbilt University
\bigskip\\
}
\date{
\today }
\maketitle

\begin{abstract}\setlength{\baselineskip}{6.0mm}
The quantile regression kink design (QRKD) is proposed by empirical researchers as a potential method to assess heterogeneous treatment effects under suitable research designs, but its causal interpretation remains unknown.
We propose a causal interpretation of the QRKD estimand.
Under flexible heterogeneity and endogeneity, the QRKD estimand measures a weighted average of heterogeneous marginal effects at respective conditional quantiles of outcome given a designed kink point.
In addition, we develop weak convergence results for the QRKD estimator as a local quantile process for the purpose of conducting statistical inference on heterogeneous treatment effects using the QRKD.
Applying our methods to the Continuous Wage and Benefit History Project (CWBH) data, we find significantly heterogeneous positive causal effects of unemployment insurance benefits on unemployment durations in Louisiana between 1981 and 1983.
These effects are larger for individuals with longer unemployment durations.
\bigskip\\
\textbf{Keywords:}\ causal inference, heterogeneous treatment effects, identification, regression kink design, quantile regression, unemployment duration.
\end{abstract}
\newpage

\section{Introduction}

Some recent empirical research papers, including
Nielsen, S\o rensen and Taber (2010), Landais (2015), Simonsen, Skipper and Skipper (2015), Card, Lee, Pei and Weber (2016), and Dong (2016), conduct causal inference via the regression kink design (RKD).
A natural extension of the RKD with a flavor of unobserved heterogeneity is the quantile RKD (QRKD), which is the object that we explore in this paper.
Specifically, consider the quantile derivative Wald ratio of the form
\begin{equation}\label{eq:wald}
QRKD(\tau) =
\frac {\lim_{x \downarrow x_0} \frac {\partial}{\partial x} Q_{Y\mid X}(\tau\mid  x) - \lim_{x \uparrow x_0} \frac {\partial}{\partial x} Q_{Y\mid X}(\tau \mid x)}{\lim_{x \downarrow x_0} \frac {d}{d x} b(x) - \lim_{x \uparrow x_0} \frac {d}{d x} b(x)}
\end{equation}
at a design point $x_0$ of a running variable $x$, where $Q_{Y \mid X}(\tau|x) := \inf\{y: F(y|x)\ge \tau \}$ denotes the $\tau$-th conditional quantile function of $Y$ given $X=x$, and $b$ is a policy function.
Note that it is analogous to the RKD estimand of Card, Lee, Pei and Weber (2016):
\begin{equation}\label{eq:mean_wald}
RKD =
\frac {\lim_{x \downarrow x_0} \frac {\partial}{\partial x} \E[Y\mid X=x] - \lim_{x \uparrow x_0} \frac {\partial}{\partial x} \E[Y\mid X=x]}{\lim_{x \downarrow x_0} \frac {d}{d x} b(x) - \lim_{x \uparrow x_0} \frac {d}{d x} b(x)},
\end{equation}
except that the conditional expectations in the numerator are replaced by the corresponding conditional quantiles.
While the QRKD estimand (\ref{eq:wald}) is of potential interest in the empirical literature for assessment of heterogeneous treatment effects, little seems known about its econometric theories.
Specifically, Landais (2011) considers (\ref{eq:wald}), but no formal theories of identification, estimation, and inference are provided.
This paper develops causal interpretation (identification) and estimation theories for the QRKD estimand (\ref{eq:wald}).
In addition, we also present a practical guideline of robust inference by pivotal simulations, a procedure for bandwidth selection, and statistical testing of heterogeneous treatment effects based on the QRKD.

To understand our objective, consider a structural relation $y = g(b,x,\epsilon)$, where the outcome $y$ is determined by observed factors $(b,x)$ and unobserved factors $\epsilon$.
The marginal causal effect of $b$ on $y$ for individual $i$ with $(b_i,x_i,\epsilon_i)$ is quantified by $g_1(b_i, x_i, \epsilon_i)$, where $g_1$ denotes the partial derivative of $g$ with respect to the first argument.
An estimand $\theta$ has a causal interpretation at $(b,x)$ if it admits 
\begin{equation}\label{eq:causal_interpretation}
\theta = \int g_1(b,x,\epsilon) d\mu(\epsilon)
\end{equation}
for some probability measure $\mu$ whose support is contained in that of $\epsilon$.
The literature has proposed this way of causal interpretations for major statistical estimands.
Examples include the OLS slope (Yitzhaki, 1996), the two stage least squares estimand under multivalued discrete treatments (Angrist and Imbens, 1995), an IV estimand under partial equilibrium (Angrist, Graddy and Imbens, 2000), a list of most common treatment effects (Heckman and Vytlacil, 2005), and the slope of the quantile regression (Kato and Sasaki, 2017).
In a similar spirit, we argue in the present paper that the QRKD estimand (\ref{eq:wald}) can be reconciled with the causal interpretation of the form (\ref{eq:causal_interpretation}).

Making causal interpretations of the QRKD estimand (\ref{eq:wald}) in the form (\ref{eq:causal_interpretation}) is perhaps more challenging than the mean RKD estimand (\ref{eq:mean_wald}) because the differentiation operator $\frac{d}{dx}$ and the conditional quantile do not `swap.'
For the mean RKD estimand (\ref{eq:mean_wald}), the interchangeability of the differentiation operator and the expectation (integration) operator allows each term of the numerator in (\ref{eq:mean_wald}) to be additively decomposed into two parts, namely the causal effects and the endogeneity effects.
Taking the difference of two terms in the numerator then cancels out the endogeneity effects, leaving only the causal effects.
This trick allows the mean RKD estimand (\ref{eq:mean_wald}) to have causal interpretations in the presence of endogeneity.
Due to the lack of such interchangeability for the case of quatiles, this trick is not straightforwardly inherited by the quantile counterpart (\ref{eq:wald}).
Having said this, we show in Section \ref{sec:causal_interpretation} that a similar decomposition is possible for the QRKD estimand (\ref{eq:wald}), and therefore argue that its causal interpretations are possible even under the lack of monotonicity.
Specifically, we show that the QRKD estimand corresponds to the quantile marginal effect under monotonicity and to a weighted average of marginal effects under non-monotonicity.

For estimation of the causal effects, we propose a sample-counterpart estimator for the QRKD estimand (\ref{eq:wald}) in Sections \ref{sec:estimation}.
To derive its asymptotic properties, we take advantage of the existing literature on uniform Bahadur representations for quantile-type loss functions, including Kong, Linton and Xia (2010), Guerre and Sabbah (2012), Sabbah (2014), and Qu and Yoon (2015a).
Qu and Yoon (2015b) apply the results of Qu and Yoon (2015a) to develop methods of statistical inference with quantile regression discontinuity designs (QRDD), which are closely related to our QRKD framework.
We take a similar approach with suitable modifications to derive asymptotic properties of our QRKD estimator.
Weak convergence results for the estimator as quantile processes are derived.
Applying the weak convergence results, we propose procedures for testing treatment significance and treatment heterogeneity following Koenker and Xiao (2002), Chernozhukov and Fern\'andez-Val (2005) and Qu and Yoon (2015b).
Simulation studies presented in Section \ref{sec:mc} support the theoretical properties.

Literature:
The method studied in this paper falls in the broad framework of design-based causal inference, including RDD and RKD.
There is an extensive body of literature on RDD by now -- see a historical review by Cook (2008) and surveys in the special issue of \textit{Journal of Econometrics} edited by Imbens and Lemieux (2008), Imbens and Wooldridge (2009; Sec. 6.4), Lee and Lemieux (2010), and Volume 38 of \textit{Advances in Econometrics} edited by Cattaneo and Escanciano (2016), as well as the references cited therein.
The first extension to quantile treatment effects in the RDD framework was made by Frandsen, Fr\"olich and Melly (2012).
More recently, Qu and Yoon (2015b) develop uniform inference methods with QRDD that empirical researchers can use to test a variety of important empirical questions on heterogeneous treatment effects.
While the RDD has a rich set of empirical and theoretical results including the quantile extensions, the RKD method which developed more recently does not have a quantile counterpart in the literature yet, despite potential demands for it by empirical researchers (e.g., Landais, 2011).
Our paper can be seen as a quantile extension to Card, Lee, Pei and Weber (2016) and a RKD counterpart of Qu and Yoon (2015b).

\section{Causal Interpretation of the QRKD Estimand}\label{sec:causal_interpretation} 

In this section, we develop some causal interpretations of the QRKD estimand (\ref{eq:wald}).
For the purpose of illustration, we first present a simple case with rank invariance in Section \ref{sec:identification_rank_invariance}.
It is followed by a formal argument for general cases in Section \ref{sec:identification}.

\subsection{Illustration: Causal Interpretation under Rank Invariance}\label{sec:identification_rank_invariance}

The causal relation of interest is represented by the structural equation
$$y=g(b,x,\epsilon).$$
The outcome $y$ is determined through the structural function $g$ by two observed factors, $b \in \mathbb{R}$ and $x \in \mathbb{R}$, and a scalar unobserved factor, $\epsilon \in \mathbb{R}$.
We assume that $g$ is monotone increasing in $\epsilon$, effectively imposing the rank invariance; causal interpretations in a more general setup with non-monotone $g$ and/or multivariate $\epsilon$ is established in Section \ref{sec:identification}.
The factor $b$ is a treatment input, and is in turn determined by the running variable $x$ through the structural equation
$$b = b(x)$$
for a known policy function $b$.
We say that $b$ has a kink at $x_0$ if $b'(x_0^+) := \lim_{x \to x_0^+}\frac{d b(x)}{d x} \ne \lim_{x \to x_0^-}\frac{d b(x)}{d x} =: b'(x_0^-)$ is true, where $x \rightarrow x_0^+$ and $x \rightarrow x_0^-$ mean $x \downarrow x_0$ and $x \uparrow x_0$, respectively.
Throughout this paper, we assume that the location, $x_0$, of the kink is known from a policy-based research design, as is the case with Card, Lee, Pei and Weber (2016).

\begin{assumption}\label{a:policy treatment}
$b'(x_0^+)\ne b'(x_0^-)$ holds, and $b$ is continuous on $\mathbb{R}$ and differentiable on $\mathbb{R} \setminus \{x_0\}$.
\end{assumption}

The structural partial effects are $g_1(b,x,\epsilon) := \frac{\partial}{\partial b}g(b,x,\epsilon)$, $g_2(b,x,\epsilon) := \frac{\partial}{\partial x}g(b,x,\epsilon) $ and $g_3(b,x,\epsilon) := \frac{\partial}{\partial \epsilon}g(b,x,\epsilon) $.
In particular, a researcher is interested in $g_1$ which measures heterogeneous partial effects of the treatment intensity $b$ on an outcome $y$.
While the structural partial effect $g_1$ is of interest, it is not clear if the QRKD estimand (\ref{eq:wald}) provides any information about $g_1$.
In this section, we argue that (\ref{eq:wald}) does have a causal interpretation in the sense that it measures the structural causal effect $g_1(b(x_0),x_0,\epsilon)$ at the $\tau$-th conditional quantile of $\varepsilon$ given $X=x_0$.

Under regularity conditions (to be discussed in Section \ref{sec:identification} in detail), some calculations yield the decomposition
\begin{equation}\label{eq:intuitive_decomposition}
\frac{\partial}{\partial x}Q_{Y \mid X}(\tau \mid x)
=
g_1(b(x),x,\epsilon) \cdot b^\prime(x) + g_2(b(x),x,\epsilon)
- \frac{\int_{-\infty}^{\epsilon} \frac{\partial}{\partial x} f_{\varepsilon \mid X}(e \mid x) de}{f_{\varepsilon \mid X}(\epsilon \mid x)} \cdot g_3(b(x),x,\epsilon),
\end{equation}
where $\tau = F_{\varepsilon \mid X}(\epsilon \mid x)$.
The first term on the right-hand side is the partial effect of the running variable $x$ on the outcome $y$ through the policy function $b$.
The second term is the direct partial effect of the running variable on the outcome $y$.
The third term measures the effect of endogeneity in the running variable $x$.
We can see that this third term is zero under exogeneity, $\frac{\partial}{\partial x} f_{\varepsilon \mid X} = 0$.
In order to get the causal effect $g_1(b(x),x,\epsilon)$ of interest through the QRKD estimand (\ref{eq:wald}), therefore, we want to remove the last two terms in (\ref{eq:intuitive_decomposition}).

Suppose that the designed kink condition of Assumption \ref{a:policy treatment} is true, but all the other functions, $g_1$, $g_2$, $g_3$, $1/f_{\varepsilon \mid X}$ and $\frac{\partial}{\partial x} f_{\varepsilon \mid X}$, in the right-hand side of (\ref{eq:intuitive_decomposition}) are continuous in $(b,x)$ at $(b(x_0),x_0)$.
Then, (\ref{eq:intuitive_decomposition}) yields
\begin{equation}\label{eq:id:monotone}
\frac {\frac {\partial}{\partial x} Q_{Y\mid X}(\tau\mid  x_0^+) - \frac {\partial}{\partial x} Q_{Y\mid X}(\tau \mid x_0^-)}{b^\prime(x_0^+) - b^\prime(x_0^-)}
= g_1(b(x_0),x_0,\epsilon),
\end{equation}
showing that the QRKD estimand (\ref{eq:wald}) measures the structural causal effect $g_1(b(x_0),x_0,\epsilon)$ of $b$ on $y$ for the subpopulation of individuals at the $\tau$-th conditional quantile of $\varepsilon$ given $X=x_0$.
This section provides only an informal argument for ease of exposition, but Section \ref{sec:identification} provides a formal mathematical argument under a general setup without the rank invariance assumption.

\subsection{General Result: Causal Interpretation without Rank Invariance}\label{sec:identification}

In this section, we continue to use the basic settings from Section \ref{sec:identification_rank_invariance} except that the unobserved factors $\epsilon$ are now allowed to be $M$-dimensional, as opposed to be a scalar, and that $g$ is now allowed to be non-monotone with respect to any coordinate of $\epsilon$.
As such, we can consider general structural functions $g$ without the rank invariance.
In this case, there can exist multiple values of $\epsilon$ corresponding to a single conditional quantile $\tau$ of $Y$ given $X=x_0$, and therefore the simple identifying equality (\ref{eq:id:monotone}) for the case of rank invariance cannot be established in general.
Furthermore, $QRKD(\tau)$ even fails to equal the average of the structural derivatives $g_1(b(x_0),x_0,\epsilon)$ for those $\epsilon$ that coincide with the $\tau$-th conditional quantile of $Y$ given $X=x_0$.
Nonetheless, we argue that $QRKD(\tau)$ represents a weighted average of the structural derivatives $g_1(b(x_0),x_0,\epsilon)$ for those $\epsilon$ that coincide with the $\tau$-th conditional quantile of $Y$ given $X=x_0$.

Define the lower contour set of $\epsilon$ evaluated by $g(b(x),x,\cdot)$ below a given level of $y$ as follows: 
$$V(y,x)=\{\epsilon \in \mathbb{R}^M |g(b(x),x,\epsilon)\le y \}.$$
Its boundary is denoted by $\partial V(y,x)$.
Furthermore, the velocities of the boundary $\partial V(y,x)$ at $\epsilon$ with respect to a change in $y$ and a change in $x$ are denoted by $\partial \upsilon(y,x;\epsilon)/\partial y$ and $\partial \upsilon(y,x;\epsilon)/\partial x$, respectively.
For a short hand notation, we write $h(x,\epsilon)=g(b(x),x,\epsilon)$ and $h_x (x,\epsilon)=\frac{\partial h(x,\epsilon)}{\partial x}$.
Under regularity conditions to be stated below, the implicit function theorem allows the velocities defined above to be explicitly written as $\partial \upsilon(y,x;\epsilon)/\partial y = 1/||\nabla_\epsilon h(x,\epsilon)||$ and $\partial \upsilon(y,x;\epsilon)/\partial x = -h_x(x,\epsilon)/||\nabla_\epsilon h(x,\epsilon)||$ for all $\epsilon \in V(y,x)$.
Let $\Sigma$ denote an $(M-1)$-dimensional rectangle, and we parameterize the manifold $\partial V(y,x)$ by $\Pi_{y,x}: \Sigma \rightarrow \partial V(y,x)$ for all $(y,x)$.
We refer to Padula (2011) for further details of these objects and notations.
Let $m^M$ and $H^{M-1}$ denote the Lebesgue measure on $\mathbb{R}^M$ and the Hausdorff measure\footnote{The Hausdorff measure is defined as follows. Define a function $H_p^{M-1}: 2^{\mathbb{R}^M} \rightarrow \mathbb{R}$ by $H_p^{M-1}(S) = \sup_{\delta>0} \inf \left\{ \sum_{i=1}^\infty (\text{diam} S_i)^{M-1} \ \left\vert \ \cup_{i=1}^\infty S_i \supset S, \ \text{diam} S_i < \delta \right.\right\}$.
We then define a restriction $H^{M-1}: \mathcal{B}(V) \rightarrow \mathbb{R}$ of $H_p^{M-1}$ to the Borel sigma algebra $\mathcal{B}(V)$ of a metric space $V \subset \mathbb{R}^M$ is a measure, and we call it the $(M-1)$-dimensional Hausdorff measure. Intuitively, $H^{M-1}$ measures a scaled area of Borel subsets of the $(M-1)$-dimensional manifold $V \subset \mathbb{R}^M$.} on $\partial V(y,x)$, respectively.
Letting $\mathcal{X} = supp(X)$, we make the following assumptions.

\begin{assumption}\label{a:regularity}
(i) $h( \cdot , \epsilon)$ is continuously differentiable on $\mathcal{X} \setminus \{ x_0 \}$ for all $\epsilon \in \Epsilon$ and $h(x,\cdot)$ is continuously differentiable for all $x \in \mathcal{X}$. 
(ii) $\norm{\nabla_\epsilon h(x, \ \cdot)} \neq 0$ on $\partial V(y,x)$ for all $(x,y) \in \mathcal{X} \times \mathcal{Y}$.
(iii) The conditional distribution of $\varepsilon$ given $X$ is absolutely continuous with respect to $m^M$, $f_{\varepsilon \mid X}$ is continuously differentiable, and $\mathcal{X} \ni x \mapsto f_{\varepsilon \mid X}( \cdot | x ) \in L^1(\mathbb{R}^M,m^M)$ is continuous.\footnote{That is, for all $\delta_1 > 0$ there exists $\delta_2 > 0$ such that $\abs{x'-x} < \delta_2$ implies $\int \abs{f_{\varepsilon \mid X}( \epsilon | x' ) - f_{\varepsilon \mid X}( \epsilon | x )} dm^M(\epsilon) < \delta_1$.}
(iv) $\int_{\partial V(y,x)} f_{\varepsilon \mid X}(\epsilon \mid x) dH^{M-1}(\epsilon) > 0$ for all $(x,y) \in \mathcal{X} \times \mathcal{Y}$.
\end{assumption}

\begin{assumption}\label{a:boundary}
(i) For $M = 1$: $\partial V(y,x)$ is a finite set, and $h(x, \cdot)$ is locally invertible with a continuously differentiable local inverse function in a neighborhood of each point in $\partial V(y,x)$. 
(ii) For $M > 1$: 
$\Sigma \times \mathcal{X} \ni (s,x) \mapsto \Pi_{y,x}(s) \in \mathbb{R}^M$ is continuous for all $y \in \mathcal{Y}$, and $\Sigma \times \mathcal{Y} \ni (s,y) \mapsto \Pi_{y,x}(s) \in \mathbb{R}^M$ is continuous for all $x \in \mathcal{X}$.
$\mathcal{X} \ni x \mapsto \partial \upsilon(y,x ;\Pi_{y,x}(\cdot))/\partial x \in L^1(\Sigma,m^{M-1})$ is continuous for all $y \in \mathcal{Y}$.\footnote{That is, $\forall \delta_1 > 0$ $\exists\delta_2 > 0$ such that $\abs{x'-x} < \delta_2$ implies $\int \abs{\partial \upsilon(y,x ;\Pi_{y,x'}(s))/\partial x - \partial \upsilon(y,x ;\Pi_{y,x}(s))/\partial x }ds < \delta_1$.}
$\mathcal{Y} \ni y \mapsto \partial \upsilon(y,x ;\Pi_{y,x}(\cdot))/\partial y \in L^1(\Sigma,m^{M-1})$ is continuous for all $x \in \mathcal{X}$.\footnote{That is, $\forall \delta_1 > 0$ $\exists\delta_2 > 0$ such that $\abs{y'-y} < \delta_2$ implies $\int \abs{\partial \upsilon(y,x ;\Pi_{y',x}(s))/\partial y - \partial \upsilon(y,x ;\Pi_{y,x}(s))/\partial y }ds < \delta_1$.}
\end{assumption}

\begin{assumption}\label{a:pq}
Let $\gamma(x,\epsilon) := \norm{\nabla_\epsilon h(x,\epsilon)}^{-1}$.
There exist $p \geqslant 1$ and $q \geqslant 1$ satisfying $p^{-1}+q^{-1} = 1$ such that $\norm{\gamma(x, \ \cdot \ )}_{L^p(\partial V(y,x), H^{M-1})} < \infty$ and $\norm{f_\varepsilon}_{L^q(\partial V(y,x), H^{M-1})} < \infty$ hold for all $(x,y) \in \mathcal{X} \times \mathcal{Y}$.
\end{assumption}

\begin{assumption}\label{a:DCT}
(i)
There exists $w_{y,x} \in L^1 (\partial V (y,x), H^{M-1})$ such that
 $|\gamma(x,\epsilon) h_x(x,\epsilon) f_{\varepsilon | X} (\epsilon |x) |\le w_{y,x}(\epsilon)$ and
$|\gamma(x,\epsilon) f_{\varepsilon | X} (\epsilon |x) |\le w_{y,x}(\epsilon)$ for all $\epsilon \in \partial V (y,x)$ for all $(y,x) \in \mathcal{Y} \times \mathcal{X}$.
(ii)
There exists $w_{y,x} \in L^1 (V (y,x), m^{M-1})$ such that
 $|\partial f_{\varepsilon | X} (\epsilon |x) / \partial x |\le w_{y,x}(\epsilon)$ for all $\epsilon \in V (y,x)$ for all $(y,x) \in \mathcal{Y} \times \mathcal{X}$.
\end{assumption}


Assumptions \ref{a:regularity}, \ref{a:boundary} and \ref{a:pq} are used to derive a structural decomposition of the quantile partial derivative -- see Sasaki (2015) for detailed discussions of these assumptions.
Assumption \ref{a:boundary} branches into two cases, depending on (i) $M=1$ or (ii) $M>1$.
We note that case (i) accommodates a non-monotone structure $g$ in a scalar unobservable $\varepsilon$, whereas case (ii) concerns about non-monotonicity due to multi-dimensional unobservables $\varepsilon$.
These two cases are stated separate because the restriction in case (ii) among others entails that $\partial V(y,x)$ is a connected set, which is too strong for case (i) with non-monotonicity.
In Assumptions \ref{a:regularity} (iv) and \ref{a:pq}, statements concern about integration of $f_{\varepsilon | X=x}$ on $\partial V(y,x)$.
This manifold $\partial V(y,x)$ has a Lebesgue measure zero, i.e., $m^M(\partial V(y,x))=0$.
On the other hand, the Hausdorff measure evaluates this Lebesgue null set positively, i.e.,  $H^{M-1}(\partial V(y,x)) > 0$.
Hence these assumptions are nontrivial statements.

The regularity conditions in Assumption \ref{a:DCT} facilitate the dominated convergence theorem to make a structural sense of the QRKD estimand (\ref{eq:wald}). Specifically, by the dominated convergence theorem, Assumption 5 (i) and (ii) together with Assumption 2 (iv) and 4 are sufficient for the existence of the reduced-form expressions $lim_{x\to x_0^+}\frac{\partial}{\partial x} Q_{Y|X}(\tau|x)$ and $lim_{x\to x_0^-}\frac{\partial}{\partial x} Q_{Y|X}(\tau|x)$.
With $\mathcal{B}(y,x)$ denoting the collection of Borel subsets of $\partial V(y,x)$, we define the function $\mu^{M-1}_{y,x}: \mathcal{B}(y,x) \rightarrow \mathbb{R}$ by
\begin{displaymath}
\mu^{M-1}_{y,x}(S) :=
\frac{\int_{s} \frac{1}{\norm{\nabla_\epsilon h(x,\epsilon)}} f_{\varepsilon |X} (\epsilon|x) dH^{M-1}(\epsilon)}
{\int_{\partial V(y,x)} \frac{1}{\norm{\nabla_\epsilon h(x,\epsilon)}} f_{\varepsilon|X} (\epsilon|x) dH^{M-1}(\epsilon)}
\qquad
\text{for all } S \in \mathcal{B}(y,x).
\end{displaymath}
Because the zero-dimensional Hausdorff measure $H^0$ is a counting measure, the case of $M=1$ yields
\begin{equation}\label{eq:mu0}
\mu^{0}_{y,x}(\{\epsilon\}) :=
\frac{\frac{1}{\norm{\nabla_\epsilon h(x,\epsilon)}} f_{\varepsilon |X} (\epsilon|x)}
{\sum_{\epsilon \in \partial V(y,x)} \frac{1}{\norm{\nabla_\epsilon h(x,\epsilon)}} f_{\varepsilon|X} (\epsilon|x)}
\qquad
\text{for all } \epsilon \in \partial V(y,x).
\end{equation}
The next theorem claims that this is a probability measure and gives weights with respect to which the QRKD estimand (\ref{eq:wald}) measures the average structural causal effect of the treatment intensity $b$ on an outcome $y$ for those individuals at the $\tau$-th conditional quantile of $Y$ given $X=x_0$.

\begin{theorem}\label{theorem:identification}
Suppose that Assumptions \ref{a:policy treatment}, \ref{a:regularity}, \ref{a:boundary}, \ref{a:pq} and \ref{a:DCT}  hold.
Let $\tau \in (0,1)$ and $y = Q_{Y | X}(\tau | x_0)$.
Then, $\mu^{M-1}_{y,x_0}$ is a probability measure on $\partial V(y,x_0)$, and
\begin{equation}\label{eq:id}
QRKD(\tau)
\ = \
\int_{\partial V(y,x_0)} g_1(b(x_0),x_0,\epsilon) d\mu^{M-1}_{y,x_0}(\epsilon)
\ = \
E_{\mu^{M-1}_{y,x_0}} \left[ g_1(b(x_0),x_0,\varepsilon)\right].
\end{equation}
\end{theorem}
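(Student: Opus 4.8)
The plan is to derive a structural decomposition of $\frac{\partial}{\partial x}Q_{Y\mid X}(\tau\mid x)$ that isolates $g_1$, and then to exploit the designed kink at $x_0$ to annihilate every term that varies continuously across $x_0$. First I would write the conditional distribution function in structural form: since $Y = h(X,\varepsilon) = g(b(X),X,\varepsilon)$,
\[
F_{Y\mid X}(y\mid x) = \Pr\big(\varepsilon\in V(y,x)\;\big|\;X = x\big) = \int_{V(y,x)} f_{\varepsilon\mid X}(\epsilon\mid x)\,dm^M(\epsilon).
\]
Because $Q_{Y\mid X}(\tau\mid x)$ solves $F_{Y\mid X}(y\mid x)=\tau$ in $y$ and $\partial_y F_{Y\mid X}(y\mid x)>0$ (justified below), the implicit function theorem gives $\frac{\partial}{\partial x}Q_{Y\mid X}(\tau\mid x) = -\,\partial_x F_{Y\mid X}(y\mid x)\big/\partial_y F_{Y\mid X}(y\mid x)$, evaluated at $y = Q_{Y\mid X}(\tau\mid x)$. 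I would also record that $x\mapsto Q_{Y\mid X}(\tau\mid x)$ is continuous at $x_0$, so $Q_{Y\mid X}(\tau\mid x)\to y := Q_{Y\mid X}(\tau\mid x_0)$ as $x\to x_0^{\pm}$.

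Second, I would compute the two partial derivatives of $F_{Y\mid X}$ by differentiating the volume integral $\int_{V(y,x)} f_{\varepsilon\mid X}(\epsilon\mid x)\,dm^M(\epsilon)$, converting the domain-perturbation contributions into surface integrals over $\partial V(y,x)$ through the coarea formula, exactly along the lines of Sasaki (2015) and Padula (2011). Differentiating in $y$ moves the boundary with velocity $\partial\upsilon(y,x;\epsilon)/\partial y = \gamma(x,\epsilon)$, giving $\partial_y F_{Y\mid X}(y\mid x) = \int_{\partial V(y,x)} \gamma(x,\epsilon) f_{\varepsilon\mid X}(\epsilon\mid x)\,dH^{M-1}(\epsilon)$, which is strictly positive and finite by Assumptions \ref{a:regularity}(ii),(iv) and \ref{a:pq} (Hölder with the exponents $p,q$). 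Differentiating in $x$ produces a boundary term with velocity $\partial\upsilon(y,x;\epsilon)/\partial x = -h_x(x,\epsilon)\gamma(x,\epsilon)$ plus a density-shift term, so $\partial_x F_{Y\mid X}(y\mid x) = -\int_{\partial V(y,x)} h_x(x,\epsilon)\gamma(x,\epsilon) f_{\varepsilon\mid X}(\epsilon\mid x)\,dH^{M-1}(\epsilon) + \int_{V(y,x)} \partial_x f_{\varepsilon\mid X}(\epsilon\mid x)\,dm^M(\epsilon)$. Assumptions \ref{a:regularity}, \ref{a:boundary}, \ref{a:pq} and \ref{a:DCT} are precisely what legitimizes differentiating under the integral sign, the coarea representation, and the separate bookkeeping for $M=1$ (finite $\partial V(y,x)$ with smooth local inverses) versus $M>1$ (the parametrization $\Pi_{y,x}$ of a connected manifold). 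Substituting $h_x(x,\epsilon) = g_1(b(x),x,\epsilon)\,b'(x) + g_2(b(x),x,\epsilon)$ and combining yields
\[
\frac{\partial}{\partial x}Q_{Y\mid X}(\tau\mid x) = \frac{\int_{\partial V(y,x)}\big[g_1\,b'(x) + g_2\big]\,\gamma\, f_{\varepsilon\mid X}\,dH^{M-1} \;-\; \int_{V(y,x)} \partial_x f_{\varepsilon\mid X}\,dm^M}{\int_{\partial V(y,x)} \gamma\, f_{\varepsilon\mid X}\,dH^{M-1}},
\]
the multivariate, possibly non-monotone analogue of (\ref{eq:intuitive_decomposition}).

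Third, I would take the one-sided limits $x\to x_0^{\pm}$. Since $b$ is continuous (Assumption \ref{a:policy treatment}), $b(x)\to b(x_0)$ from both sides; together with continuity of the quantile and of $f_{\varepsilon\mid X}$, the parametrization, and the velocities in $(x,y)$ (Assumptions \ref{a:regularity}(iii) and \ref{a:boundary}), and the dominating functions of Assumption \ref{a:DCT}, every integral above has the same limit whether $x\downarrow x_0$ or $x\uparrow x_0$ — the sole factor whose one-sided limits differ is $b'(x)$. Subtracting the two one-sided limits therefore cancels the $g_2$-integral and the $\partial_x f_{\varepsilon\mid X}$-integral and pulls out $b'(x_0^+)-b'(x_0^-)$:
\[
\lim_{x\downarrow x_0}\frac{\partial}{\partial x}Q_{Y\mid X}(\tau\mid x) - \lim_{x\uparrow x_0}\frac{\partial}{\partial x}Q_{Y\mid X}(\tau\mid x) = \big(b'(x_0^+)-b'(x_0^-)\big)\,\frac{\int_{\partial V(y,x_0)} g_1(b(x_0),x_0,\epsilon)\,\gamma(x_0,\epsilon)\, f_{\varepsilon\mid X}(\epsilon\mid x_0)\,dH^{M-1}(\epsilon)}{\int_{\partial V(y,x_0)}\gamma(x_0,\epsilon)\, f_{\varepsilon\mid X}(\epsilon\mid x_0)\,dH^{M-1}(\epsilon)}.
\]
Dividing by $b'(x_0^+)-b'(x_0^-)\neq 0$ yields $QRKD(\tau) = \int_{\partial V(y,x_0)} g_1(b(x_0),x_0,\epsilon)\,d\mu^{M-1}_{y,x_0}(\epsilon)$. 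Finally, $\mu^{M-1}_{y,x_0}$ is a probability measure because it is the normalization of the nonnegative set function $S\mapsto\int_S \gamma\, f_{\varepsilon\mid X}\,dH^{M-1}$ on $\mathcal{B}(y,x_0)$, which is countably additive since $H^{M-1}$ is, has total mass $1$ by construction, and has strictly positive, finite normalizing constant by Assumptions \ref{a:regularity}(ii),(iv) and \ref{a:pq}; the case $M=1$ collapses the surface integral to the finite sum in (\ref{eq:mu0}). I expect the main obstacle to be this second step: rigorously converting the $y$- and $x$-derivatives of the volume integral into boundary integrals over the Lebesgue-null manifold $\partial V(y,x)$ and controlling integrability there, which is exactly the purpose of Assumptions \ref{a:pq} and \ref{a:DCT}; once the decomposition is established, the kink-based cancellation is routine.
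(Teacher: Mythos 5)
Your proposal is correct and follows essentially the same route as the paper: decompose $\frac{\partial}{\partial x}Q_{Y\mid X}(\tau\mid x)$ into a weighted boundary average of $h_x = g_1 b' + g_2$ minus an endogeneity term (the paper obtains this via the Leibniz rule for $M=1$ and by citing Lemmas 1--2 of Sasaki (2015) for $M>1$, which are precisely the implicit-function/coarea computation you sketch), then take one-sided limits under dominated convergence so that everything except $b'(x)$ is continuous across $x_0$, difference to cancel the $g_2$ and $\partial_x f_{\varepsilon\mid X}$ terms, and divide by $b'(x_0^+)-b'(x_0^-)\neq 0$. The only cosmetic difference is the normalization constant attached to the Hausdorff measure in the paper's boundary integrals, which cancels in the weights and does not affect the final identity.
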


\begin{proof}
For the first part of the proof, we branch into two cases: (i) $M=1$ and (ii) $M>1$.
\\
(i) For $M = 1$: 
That $\mu^{M-1}_{y,x}$ is a probability measure on $\partial V(y,x)$ follows from (\ref{eq:mu0}) under Assumption \ref{a:pq}.
By Leibniz integral rule and the implicit function theorem under Assumptions \ref{a:regularity}, \ref{a:boundary} (i) and \ref{a:pq}, the QPD $\frac{\partial}{\partial x} Q_{Y\mid X}(\tau \mid x)$ exists and
\begin{align*}
\frac{\partial}{\partial x} Q_{Y \mid X}(\tau \mid x) &=
\frac{\sum_{\epsilon \in \partial V(y,x)} \frac{h_x(x,\epsilon)}{\abs{h_\epsilon(x,\epsilon)}} f_{\varepsilon \mid X} (\epsilon \mid x) - \int_{V(y,x)} \frac{\partial}{\partial x} f_{\varepsilon \mid X}(\epsilon \mid x) d\epsilon}
{\sum_{\epsilon \in \partial V(y,x)} \frac{1}{\abs{h_\epsilon(x,\epsilon)}} f_{\varepsilon \mid X} (\epsilon \mid x)}\\
&=\E_{\mu^{0}_{y,x}} [h_x(x,\varepsilon)]-A(y,x),
\end{align*}
where $A$ is defined by
\begin{equation*}
A(y,x) :=
\frac{\int_{V(y,x)} \frac{\partial}{\partial x} f_{\varepsilon \mid X}(\epsilon \mid x) d\epsilon}
{\sum_{\epsilon \in \partial V(y,x)} \frac{1}{\abs{h_\epsilon(x,\epsilon)}} f_{\varepsilon \mid X} (\epsilon \mid x)}
\end{equation*}
(ii) For $M > 1$: 
That $\mu^{M-1}_{y,x}$ is a probability measure on $\partial V(y,x)$ follows from Lemma 2 of Sasaki (2015) under Assumption \ref{a:pq}.
Next, by Lemma 1 of Sasaki (2015) under Assumptions \ref{a:regularity}, \ref{a:boundary} (ii) and \ref{a:pq}, the QPD $\frac{\partial}{\partial x} Q_{Y\mid X}(\tau \mid x)$ exists and
\begin{align*}
\frac{\partial}{\partial x} Q_{Y \mid X}(\tau \mid x) &=
\frac{\int_{\partial V(y,x)} \frac{h_x(x,\epsilon)}{\norm{\nabla_\epsilon h(x,\epsilon)}} \frac{f_{\varepsilon \mid X} (\epsilon \mid x) \cdot M \pi^{(M-1)/2}}{2^{M-1} \Gamma(\frac{M+1}{2})} dH^{M-1}(\epsilon) - \int_{V(y,x)} \frac{\partial}{\partial x} f_{\varepsilon \mid X}(\epsilon \mid x) dm^M(\epsilon)}
{\int_{\partial V(y,x)} \frac{1}{\norm{\nabla_\epsilon h(x,\epsilon)}} \frac{f_{\varepsilon \mid X} (\epsilon \mid x) \cdot M \pi^{(M-1)/2}}{2^{M-1} \Gamma(\frac{M+1}{2})} dH^{M-1}(\epsilon)}\\
&=\E_{\mu^{M-1}_{y,x}} [h_x(x,\varepsilon)]-A(y,x),
\end{align*}
where $\Gamma$ is the Gamma function and $A$ is defined by
\begin{equation*}
A(y,x) :=
\frac{ \int_{V(y,x)} \frac{\partial}{\partial x} f_{\varepsilon \mid X}(\epsilon \mid x) dm^M(\epsilon)}
{\int_{\partial V(y,x)} \frac{1}{\norm{\nabla_\epsilon h(x,\epsilon)}} \frac{f_{\varepsilon \mid X} (\epsilon \mid x) \cdot M \pi^{(M-1)/2}}{2^{M-1} \Gamma(\frac{M+1}{2})} dH^{M-1}(\epsilon)}
\end{equation*}

From this point on, we treat both cases (i) $M=1$ and (ii) $M>1$ together.
Note that $g_2 = \frac{\partial g}{\partial x}$ is continuous in $x$ by Assumption \ref{a:regularity} (i).
Also, $\mu^{M-1}_{y,x}(\epsilon)$ is continuous in $x$ for each fixed $y$ according to parts (i), (ii) and (iii) of Assumption \ref{a:regularity}.
Furthermore, Assumption \ref{a:regularity} (i), (ii), (iii) and (iv) imply that $A(y,x)$ is well-defined and is continuous in $x$ for all $y\in \mathcal{Y}$.
Therefore, applying the dominated convergence theorem under Assumptions \ref{a:regularity} (iv), \ref{a:pq} and \ref{a:DCT} yields
\begin{eqnarray*}
\lim_{x\to x_0^+} \frac {\partial}{\partial x}Q_{Y\mid X}(\tau \mid  x)
&=& \lim_{x\to x_0^+} \int_{\partial V(y,x)}  \{ h_x (x,\epsilon)\} d\mu^{M-1}_{y,x}(\epsilon)-\lim_{x\to x_0^+}A(y,x) \\
&=& \int_{\partial V(y,x_0)} \lim_{x\to x_0^+} \frac{\partial}{\partial x} \{ g(b(x),x,\epsilon)\} d\mu^{M-1}_{y,x_0}(\epsilon)-A(y,x_0) \\
&=& \int_{\partial V(y,x_0)} \lim_{x\to x_0^+} \{ g_1 (b(x),x,\epsilon)b'(x) + g_2 (b(x),x,\epsilon)  \} d\mu^{M-1}_{y,x_0}(\epsilon)- A(y,x_0) \\
&=&  \int \{ g_1 (b( x_0), x_0,\epsilon)b'( x_0^+) + g_2 (b( x_0), x_0,\epsilon)  \} d\mu^{M-1}_{y, x_0}(\epsilon)-A(y, x_0)
\end{eqnarray*}
Similarly, taking the limit from the left, we have
\begin{align*}
\lim_{x\to x_0^-} \frac {\partial}{\partial x}Q_{Y\mid X}(\tau \mid  x) =& \int_{\partial V(y,x_0)} \{ g_1 (b( x_0), x_0,\epsilon)b'( x_0^-) + g_2 (b( x_0), x_0,\epsilon)  \} d\mu^{M-1}_{y, x_0}(\epsilon)-A(y, x_0).
\end{align*}
Taking the difference of the right and left limits eliminates $\int_{\partial V(y,x_0)} g_2 (b( x_0), x_0,\epsilon) d\mu^{M-1}_{y, x_0}(\epsilon)-A(y, x_0)$, and thus produces
\begin{align*}
\lim_{x\to x_0^+} \frac {\partial}{\partial x} Q_{Y\mid X}(\tau\mid  x)-\lim_{x\to x_0^-} \frac {\partial}{\partial x} Q_{Y\mid X}(\tau \mid  x)
&= [b'(x_0^+)- b'(x_0^-)] E_{\mu^{M-1}_{y,x_0}} \left[ g_1(b(x_0),x_0,\varepsilon)\right].
\end{align*}
Finally, note that Assumption \ref{a:policy treatment} has $b'(x_0^+)- b'(x_0^-)\ne 0 $, and hence we can divide both sides of the above equality by $b'(x_0^+)- b'(x_0^-) $.
This gives the desired result.
\end{proof}

As is often the case in the treatment literature (e.g., Angrist and Imbens, 1995), this theorem shows a causal interpretation in terms of a weighted average.
Specifically, (\ref{eq:id}) shows that the QRKD estimand (\ref{eq:wald}) measures a weighted average of the heterogeneous causal effects $g_1(b(x_0),x_0,\epsilon)$ displayed on the right-hand side of (\ref{eq:id}).
Since the weights are positive on the support of the conditional distribution of $\varepsilon$ given $X=x_0$, the QRKD estimand is a strict convex combination of the ceteris paribus causal effects of $b$ on $y$ for those individuals at the $\tau$-th conditional quantile of $Y$ given $X=x_0$.

The weights given in the definition of $\mu^{M-1}_{y,x_0}$ are proportional to
$
{f_{\varepsilon |X} (\epsilon|x_0)} / {\norm{\nabla_\epsilon h(x_0,\epsilon)}}.
$
Since $f_{\varepsilon |X} (\epsilon|x_0)$ is the conditional density of the unobservables $\varepsilon$ given $X=x_0$, the discrepancy between the weighted and unweighted averages is imputed to the denominator, $\norm{\nabla_\epsilon h(x_0,\epsilon)}$.
For example, larger weights are assigned to those locations of $\epsilon \in \partial V(y,x_0)$ at which $\norm{\nabla_\epsilon h(x_0,\epsilon)}$ is smaller.
In other words, the QRKD emphasizes those locations of $\epsilon \in \partial V(y,x_0)$ at which the effects of unobservables $\varepsilon$ on the structure $g$ are smaller in magnitude.
On the other hand, the QRKD de-emphasizes those locations of $\epsilon \in \partial V(y,x_0)$ at which the effects of unobservables $\varepsilon$ on the structure $g$ are larger in magnitude.

One may worry about the obscurity of the causal interpretations under the `weighted' averages.
Note that the weighted average becomes an unweighted average when ${\norm{\nabla_\epsilon h(x_0,\epsilon)}}$ is constant in $\epsilon$.
There are some cases where the weight is constant.
As an example which is often relevant to empirical practices, the polynomial random coefficient models of the form
\begin{equation}\label{eq:random_coefficient}
g(b,x,\epsilon) = \epsilon_{00} + \sum_{\nu=1}^{p_b} \epsilon_{\nu 0} b^\nu + \sum_{\nu=1}^{p_x} \epsilon_{0 \nu} x^\nu + \sum_{\nu_b=1}^{p_b} \sum_{\nu_x=1}^{p_x} \epsilon_{\nu_b \nu_x} b^{\nu_b} x^{\nu_x}
\end{equation}
satisfies that ${\norm{\nabla_\epsilon h(x_0,\epsilon)}}$ is constant in $\epsilon = (\epsilon_{00},\epsilon_{10},\ldots,\epsilon_{p_b 0},\epsilon_{01},\ldots,\epsilon_{0 p_x},\epsilon_{11},\ldots,\epsilon_{p_b p_x})$.
Therefore, we obtain the following unweighted average causal interpretation for the QRKD estimand under this model.

\begin{corollary}
Suppose that the assumptions for Theorem \ref{theorem:identification} hold with (\ref{eq:random_coefficient}).
Let $\tau \in (0,1)$ and $y = Q_{Y | X}(\tau | x_0)$.
Then,
$$
QRKD(\tau)
\ = \
\int_{\partial V(y,x_0)} g_1(b(x_0),x_0,\epsilon) d\mu^{M-1}_{y,x_0}(\epsilon)
\ = \
E_{\mu^{M-1}_{y,x_0}} \left[ g_1(b(x_0),x_0,\varepsilon)\right].
$$
where
$$
\mu^{M-1}_{y,x}(S) :=
\frac{\int_{s} f_{\varepsilon |X} (\epsilon|x) dH^{M-1}(\epsilon)}
{\int_{\partial V(y,x)} f_{\varepsilon|X} (\epsilon|x) dH^{M-1}(\epsilon)}
\qquad
\text{for all } S \in \mathcal{B}(y,x).
$$
\end{corollary}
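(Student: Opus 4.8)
The plan is to exploit the fact that the polynomial random coefficient specification (\ref{eq:random_coefficient}) makes the structural function affine in the unobservables, so that the Jacobian norm $\norm{\nabla_\epsilon h(x,\epsilon)}$ appearing in the weights $\mu^{M-1}_{y,x}$ does not depend on $\epsilon$; once that is established the common factor cancels in the numerator and denominator defining $\mu^{M-1}_{y,x}$, and the corollary reduces to a restatement of Theorem \ref{theorem:identification}.

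First I would write $h(x,\epsilon) = g(b(x),x,\epsilon)$ and, indexing the coordinates of $\epsilon$ as $(\epsilon_{00},\epsilon_{10},\dots,\epsilon_{p_b p_x})$, observe from (\ref{eq:random_coefficient}) that $h(x,\epsilon) = \sum_k c_k(x)\,\epsilon_k$ where every coefficient $c_k(x)$ is one of $1$, $b(x)^\nu$, $x^\nu$, $b(x)^{\nu_b}x^{\nu_x}$, and hence depends only on $x$ (through $b(x)$ and $x$), not on $\epsilon$. Consequently $\nabla_\epsilon h(x,\epsilon) = (c_k(x))_k$ and $\norm{\nabla_\epsilon h(x,\epsilon)} = \bigl(\sum_k c_k(x)^2\bigr)^{1/2} =: \kappa(x)$ is constant in $\epsilon$ for each fixed $x$. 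Moreover $\kappa(x)\ge 1>0$, since the coordinate associated with $\epsilon_{00}$ contributes $c_k(x)=1$; this is consistent with Assumption \ref{a:regularity}(ii) and makes $\gamma(x,\epsilon)=\kappa(x)^{-1}$ a finite positive constant in $\epsilon$.

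Next I would substitute $\gamma(x,\epsilon)=\kappa(x)^{-1}$ into the definition of $\mu^{M-1}_{y,x}$: the numerator becomes $\int_S \norm{\nabla_\epsilon h(x,\epsilon)}^{-1} f_{\varepsilon\mid X}(\epsilon\mid x)\,dH^{M-1}(\epsilon) = \kappa(x)^{-1}\int_S f_{\varepsilon\mid X}(\epsilon\mid x)\,dH^{M-1}(\epsilon)$, and identically for the denominator integral over $\partial V(y,x)$, so the common factor $\kappa(x)^{-1}$ cancels and leaves exactly the simplified expression for $\mu^{M-1}_{y,x}(S)$ in the statement. The same cancellation holds verbatim in the case $M=1$ using the counting-measure form (\ref{eq:mu0}). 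Since the hypotheses of Theorem \ref{theorem:identification} are assumed to hold, that theorem gives $QRKD(\tau) = \int_{\partial V(y,x_0)} g_1(b(x_0),x_0,\epsilon)\,d\mu^{M-1}_{y,x_0}(\epsilon) = E_{\mu^{M-1}_{y,x_0}}[g_1(b(x_0),x_0,\varepsilon)]$ with the original weights, and we have just shown that these weights coincide with the simplified ones; this is the desired identity. There is no real obstacle here beyond this bookkeeping — the only point requiring care is the legitimacy of the cancellation, namely that $\kappa(x)$ is a genuine constant in $\epsilon$ (and finite and positive) for each fixed $x$, which is precisely the affineness observation above.
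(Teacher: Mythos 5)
Your proposal is correct and matches the paper's own (largely implicit) argument: the paper's justification for this corollary is exactly the observation that under the polynomial random coefficient model (\ref{eq:random_coefficient}) the structural function is linear in $\epsilon$, so $\norm{\nabla_\epsilon h(x,\epsilon)}$ depends only on $x$, the common factor cancels in the numerator and denominator of $\mu^{M-1}_{y,x}$, and the conclusion follows directly from Theorem \ref{theorem:identification}. Your added care about positivity and finiteness of the constant (via the coefficient $1$ on $\epsilon_{00}$) is a harmless and welcome bit of bookkeeping.
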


When the unobservable $\varepsilon$ is a scalar random variable (i.e., $M=1$), the Hausdorff measure $H^{M-1}$ becomes a counting measure $H^0$ on the zero-dimensional manifold $\partial V(y,x_0) \subset \mathbb{R}$.
In that case, (\ref{eq:id}) may be rewritten as
\begin{equation}\label{eq:weighted_sum}
QRKD(\tau)
\ = \
\sum_{\epsilon \in \partial V(y,x_0)} g_1(b(x_0),x_0,\epsilon) \cdot \mu^{M-1}_{y,x_0}(\{\epsilon\})
\ = \
E_{\mu^{M-1}_{y,x_0}} \left[ g_1(b(x_0),x_0,\varepsilon)\right].
\end{equation}
In particular, the case where $\partial V(y,x_0)$ is a singleton allows for the following straightforward causal interpretation for the QRKD estimand.

\begin{corollary}\label{cor:local_monotone}
Suppose that the assumptions for Theorem \ref{theorem:identification} hold with (\ref{eq:random_coefficient}).
Let $\tau \in (0,1)$ and $y = Q_{Y | X}(\tau | x_0)$.
If $\varepsilon$ is a sclar radom variable (i.e., $M=1$) and $\partial V(y,x_0)$ is a singleton, then
$$
QRKD(\tau)
\ = \
g_1(b(x_0),x_0,\epsilon(y,x_0)),
$$
where $\epsilon(y,x_0)$ is the sole element of $\partial V(y,x_0)$.
\end{corollary}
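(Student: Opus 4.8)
The plan is to read off the corollary directly from the conclusion of Theorem~\ref{theorem:identification}, observing that in the scalar case with a single boundary point the convex combination appearing there degenerates to a point mass.

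First I would invoke Theorem~\ref{theorem:identification}, which applies verbatim under the stated hypotheses: it guarantees that $\mu^{0}_{y,x_0}$ is a probability measure on $\partial V(y,x_0)$ and, because $M=1$, that the representation specializes to the weighted sum
\[
QRKD(\tau) \ = \ \sum_{\epsilon \in \partial V(y,x_0)} g_1(b(x_0),x_0,\epsilon)\,\mu^{0}_{y,x_0}(\{\epsilon\})
\]
as in (\ref{eq:weighted_sum}). Next I would note that Assumption~\ref{a:regularity}(iv) forces $\partial V(y,x_0)$ to be nonempty, since the ($H^0$-)integral of the strictly positive density $f_{\varepsilon\mid X}(\cdot\mid x_0)$ over it is strictly positive; combined with the hypothesis that $\partial V(y,x_0)$ is a singleton, this yields $\partial V(y,x_0)=\{\epsilon(y,x_0)\}$ with $\epsilon(y,x_0)$ unambiguously defined.

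Then, since $\mu^{0}_{y,x_0}$ is a probability measure whose total mass sits on the single point $\epsilon(y,x_0)$, we have $\mu^{0}_{y,x_0}(\{\epsilon(y,x_0)\})=1$; equivalently, in the defining ratio (\ref{eq:mu0}) the sum in the denominator consists of the lone term $\epsilon(y,x_0)$, so numerator and denominator coincide. Substituting this into the displayed weighted sum leaves exactly $g_1(b(x_0),x_0,\epsilon(y,x_0))$, which is the claim. I do not anticipate any genuine obstacle: the only points deserving to be written out are the nonemptiness of $\partial V(y,x_0)$ (so that $\epsilon(y,x_0)$ is well defined) and the elementary fact that a probability measure concentrated on a one-point set assigns it unit mass, so that the strict convex combination of Theorem~\ref{theorem:identification} collapses to a single evaluation. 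The random-coefficient specialization (\ref{eq:random_coefficient}) is inherited from the preceding corollary but plays no role here beyond its use inside Theorem~\ref{theorem:identification}.
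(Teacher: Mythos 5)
Your proposal is correct and follows essentially the same route as the paper, which states this corollary as an immediate specialization of Theorem \ref{theorem:identification} via the counting-measure formula (\ref{eq:weighted_sum}) for $M=1$: when $\partial V(y,x_0)$ is a singleton the probability measure $\mu^{0}_{y,x_0}$ is a unit point mass and the weighted sum collapses to the single evaluation $g_1(b(x_0),x_0,\epsilon(y,x_0))$. Your added remarks on nonemptiness of $\partial V(y,x_0)$ via Assumption \ref{a:regularity}(iv) and on the incidental role of (\ref{eq:random_coefficient}) are accurate but do not change the argument.
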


Note that this corollary is a generalization of (\ref{eq:id:monotone}), and admits the straightforward causal interpretation $QRKD(\tau) = g_1(b(x_0),x_0,\epsilon(y,x_0))$ without requiring the `global' monotonicity of $g$ in $\epsilon$.
To see the point in case, consider the structural function given by 
$$
g(b,x,\epsilon) = - 9b\epsilon + \frac{1}{3} b\epsilon^3 - 9x\epsilon + \frac{1}{3} x\epsilon^3.
$$
If $b(x_0)+x_0 \neq 0$, then this structure is not globally monotone in $\epsilon$ at $x=x_0$.
However, $\partial V(y,x_0)$ is a singleton (i.e., $g(b(x_0),x_0,\cdot)$ is locally monotone) for each value of $y \not\in [-18,18]$, and hence the causal interpretation $QRKD(\tau) = g_1(b(x_0),x_0,\epsilon(y,x_0))$ of Corollary \ref{cor:local_monotone} applies.
On the other hand, for each value of $y \in [-18,18]$, we can interpret the QRKD at most in terms of the weighted sum of the form (\ref{eq:weighted_sum}).

In either of these cases, heterogeneity in values of the QRKD estimand across quantiles $\tau$ can be used as evidence for heterogeneity in treatment effects.
Therefore, we can still conduct statistical inference for heterogeneous treatment effects based on the weak convergence results presented below in Section \ref{sec:estimation}.

\section{Estimation and Inference}\label{sec:estimation}

\subsection{The Estimator and Its Asymptotic Distribution}

We propose to estimate the QRKD estimand (\ref{eq:wald}) by its sample counterpart
\begin{equation}\label{eq:wald_estimator}
\widehat{QRKD}(\tau) \ = \
\frac {\hat{\beta}^+_1(\tau) - \hat{\beta}^-_1(\tau)}{ b^\prime (x^+_0) -  b^\prime(x^-_0)},
\end{equation}
where the two terms in the numerator are given by the $p$-th order local polynomial quantile smoothers
\begin{align*}
\hat \beta^+_1(\tau)=&\iota'_2\underset{(\alpha, \beta^+_1, \beta^-_1,...,\beta^+_p, \beta^-_p)\in \mathds{R}^{2p+1}}{\text{argmin}} \sum_{i=1}^{n}K\Big(\frac{x_i-x_0}{h_{n,\tau}}\Big)\rho_\tau \Big( y_i- \alpha - \sum_{v=1}^{p}(\beta^+_v d^+_i + \beta^-_v d^-_i)\frac{(x_i-x_0)^v}{v!} \Big)\\
\hat \beta^-_1(\tau)=&\iota'_3\underset{(\alpha, \beta^+_1, \beta^-_1,...,\beta^+_p, \beta^-_p)\in \mathds{R}^{2p+1}}{\text{argmin}} \sum_{i=1}^{n}K\Big(\frac{x_i-x_0}{h_{n,\tau}}\Big)\rho_\tau \Big( y_i- \alpha - \sum_{v=1}^{p}(\beta^+_v d^+_i + \beta^-_v d^-_i)\frac{(x_i-x_0)^v}{v!} \Big)
\end{align*}
for $\tau \in T$, where $T \subset (0,1)$ is a closed interval, $K$ is a kernel function, $\rho_\tau(u)=u(\tau-\mathds{1}\{u <0\})$,
 $d^+_i = \mathds{1}\{x_i> x_0\}$,
$d^-_i = \mathds{1}\{x_i< x_0\}$, and
$\iota_2 = [0,1,0,0,...,0]' $, $\iota_3 = [0,0,1,0,...,0]'\in \mathds{R}^{2p+1}$ for a fixed integer $p\ge 1$ of polynomial order. Notice that we are imposing the constraint that conditional quantile function $Q_{Y|X}(\tau|x)$ is continuous at $x_0$ like the estimator of Landais (2011). 
A researcher observing a sample $\{y_i,x_i\}_{i=1}^n$ of $n$ observations can compute (\ref{eq:wald_estimator}) to estimate (\ref{eq:wald}).

Our motivation to include the higher order terms in the local polynomial estimation is to implement a one-step bias correction for a local linear estimation that can accommodate optimal bandwidths -- see Remark 7 in Calonico, Cattaneo and Titiunik (2014) and Remark S.A.7 in the supplementary appendix of Calonico, Cattaneo and Titiunik (2014). 
That is, this estimator can be considered as the one-step bias corrected version of the local linear quantile smoother ($p=1$):
\begin{align*}
\underset{(\alpha, \beta^+_1, \beta^-_1)\in \mathds{R}^{3}}{\text{argmin}} \sum_{i=1}^{n}K\Big(\frac{x_i-x_0}{h_{n,\tau}}\Big)\rho_\tau \Big( y_i- \alpha - (\beta^+_v d^+_i + \beta^-_v d^-_i)(x_i-x_0) \Big).
\end{align*}

In the remainder of this section, we obtain weak convergence results for the quantile processes of $(\hat{\beta}^+_1(\tau),\hat{\beta}^-_1(\tau))$, which in turn yield a weak convergence result for the quantile process of the QRKD estimator of treatment effects.
Using these results, we propose methods to test hypotheses concerning heterogeneous treatment effects in Section \ref{sec:test}.
Define the kernel-dependent constant matrix $N=\int \bar u \bar u' K(u)du$, where 
$\bar u=[1, ud^+_u, ud^-_u,...,u^pd^+_u, u^pd^-_u ]'\in \mathds{R}^{2p+1}$, $d^+_u=\mathds{1}\{u>0\}$ and $d^-_u=\mathds{1}\{u<0\}$.
We assume that there exist constants $\underline{x}<x_0$ and $\bar{x}>x_0$
such that the following conditions are satisfied.

\begin{assumption}\label{a:consistency}
(i) (a) The density function $f_X( \cdot )$ exists and is continuously differentiable in a neighborhood of $x_0$ and $0< f_X (x_0) <\infty$.
    (b) $\{(y_i,x_i)\}^n_{i=1}$ is an i.i.d. sample of $n$ observations of the bivariate random vector $(Y,X)$. 
(ii) (a) 
		$f_{Y|X} (Q_{Y|X}( \ \cdot \ |x_0)|x_0)$ is Lipschitz on $T$. 
		 (b) There exist finite constants $f_L>0$, $f_U>0$, and $\xi >0$, such that $f_{Y|X} (Q_{Y|X}(\tau | x)+\eta|x)$ lies between $f_L$ and $f_U$ for all $\tau \in T$, $|\eta| \le \xi$ and $x \in [\underbar{x},\bar{x}].$ 
(iii) (a) $Q_{Y|X}(\ \cdot \ |x_0)$, $\partial Q_{Y|X}(\ \cdot \ |x^+_0)/ \partial \tau$, and $\partial Q_{Y|X}(\ \cdot \ |x^-_0)/ \partial \tau$ exist and are Lipschitz continuous on $T$. 
      (b) $Q_{Y|X}(\tau| \cdot )$ is continuous at $x_0$. 
			For $v=0,1,...,p+1$, $(x,\tau) \mapsto \partial^{v} Q_{Y|X}(\tau|x)/ \partial x^{v}$ exists and is Lipschitz continuous on $\{(x,\tau)|x\in(x_0,\bar{x}], \tau \in T\}$ and  $\{(x,\tau)|x\in[\underbar{x},x_0), \tau \in T \}$. 
(iv) The kernel $K$ is compactly supported, Lipschitz, differentiable, and satisfying $K(\cdot)\ge 0$, $\int K(u)du=1$, $\int uK(u)du=0$ and $\norm{K}_\infty<\infty$. The matrix $N$ is positive definite. 
(v) The bandwidths satisfy $h_{n,\tau}=c(\tau)h_n$, where $nh_n^3 \rightarrow \infty$ and $nh^{2p+3}_n\to 0$ as $n\to \infty$, and $c(\cdot)$ is Lipschitz continuous satisfying $0 < \underline{c} \le c(\tau) \le \overline{c}  < \infty$ for all $\tau \in T.$ 
\end{assumption}

Parts (i)--(v) of this assumption correspond to Assumptions 1--5, respectively, of Qu and Yoon (2015a), adapted to our framework.
Part (i) (a) requires smoothness of the density of the running variable.
This can be interpreted as the design requirement for absence of endogenous sorting across the kink point $x_0$.
The i.i.d assumption in part (i) (b) is usually considered to be satisfied for micro data of random samples.
Part (ii) concerns about regularities of the conditional density function of $Y$ given $X$.
It requires sufficient smoothness, but does not rule out quantile regression kinks at $x_0$, which is the main crucial assumption for our identification argument.
Part (iii) concerns about regularities of the conditional quantile function of $Y$ given $X$.
Like part (ii), it does not rule out quantile regression kinks at $x_0$.
Part (iv) prescribes requirements for kernel functions to be chosen by users.
In Section \ref{sec:mc} for simulation studies, we propose an example of such a choice to satisfy this requirement.
Finally, part (v) specifies admissible rates at which the bandwidth parameters diminish as the sample size becomes large.
It obeys the standard rate for a first-order derivative estimation, but we also require its uniformity over quantiles $\tau$ in $T$.
While $nh^{2p+3}_n\to 0$ is required for a valid inference with higher order bias reduction, it is not necessary for the uniform Bahadur representation to hold.
We note that, with a bias correction of order $p > 1$, the optimal bandwidth for local linear estimation is compatible with this assumption.
Under this set of assumptions, we obtain uniform Bahadur representations for the component estimators, $\hat\beta^+ (\tau)$ and $\hat\beta^- (\tau)$, of our interest similarly to Qu and Yoon (2015a) -- see Lemma \ref{lemma:bahadur} in Appendix \ref{sec:lemma:bahadur}.

For conciseness of the statements, we write parts (ii) and (iii) of Assumption \ref{a:consistency} in terms of high-level objects, but it will be more interpretable if they were stated in terms of the structural primitives, $g$ and $f_{\varepsilon|X}$.
We provide sufficient conditions below. 
We introduce the short-hand notations
\begin{align*}
&f_1(y,x)=\int_{\partial V(y,x)} \frac{h_x(x,\epsilon)}{\norm{\nabla_\epsilon h(x,\epsilon)}} f_{\varepsilon \mid X} (\epsilon \mid x)  dH^{M-1}(\epsilon)\\
&f_2(y,x)=\int_{V(y,x)} \frac{\partial}{\partial x} f_{\varepsilon \mid X}(\epsilon \mid x) dm^M(\epsilon)\\
&f_3(y,x)=\int_{\partial V(y,x)} \frac{1}{\norm{\nabla_\epsilon h(x,\epsilon)}} f_{\varepsilon \mid X} (\epsilon \mid x) dH^{M-1}(\epsilon)
\end{align*}
where $h$, $V$ and $\partial V$ are defined in Section \ref{sec:identification}.
Lemma \ref{lemma:primitive} in Appendix \ref{sec:lemma:primitive} shows that Assumption \ref{a:primitive} stated below in terms of the structural primitives is sufficient for the aforementioned high-level conditions in parts (ii) and (iii) of Assumption \ref{a:consistency}. 
Define $y^*=\underset{(\tau,x)\in T\times [\underline x, \overline x]}{\sup} \inf\{y\in \mathcal{Y}|\int_{V(y,x)} f_{\varepsilon|X}(\epsilon|x)dm^M(\epsilon)\ge \tau \}$ and $y_*=\underset{(\tau,x)\in T\times [\underline x, \overline x]}{\inf} \inf\{y\in \mathcal{Y}|\int_{V(y,x)} f_{\varepsilon|X}(\epsilon|x)dm^M(\epsilon)\ge \tau \}$.

\begin{assumption}\label{a:primitive}
(i) $\frac{\partial^q }{\partial x^j \partial y^{q-j}}f_1$ and $\frac{\partial^q }{\partial x^j \partial y^{q-j}}f_2$ exist for each $0\le j$, $ q \le p-1$, $j+q\le p-1$ and are Lipschitz on $[y_*,y^*]\times[\underline x, x_0)$ and $[y_*,y^*]\times (x_0,\overline x]$.
(ii) $f_3$ is Lipschitz on $[y_*,y^*]\times[\underline x, \overline x]$.  $\frac{\partial^q }{\partial x^j \partial y^{q-j}}f_3$ exists for each $0\le j\le q \le p$ and is Lipschitz on $[y_*,y^*]\times[\underline x, x_0)$ and $[y_*,y^*]\times (x_0,\overline x]$. 
(iii) For each $\kappa \in (0,\infty)$, there exist finite positive constants $f'_L(\kappa)$ and $f'_U(\kappa)$ such that $0<f'_L(\kappa)< f_3(y,x)< f'_U(\kappa)<\infty$ uniformly in $(y,x)$ on $ [-\kappa,\kappa]\times [\underline x, \overline x]$.
(iv) $y_*>-\infty$ and $y^*<\infty$.
(v)  For each $\tau\in T$, $x\mapsto\inf\{y\in \mathcal{Y}|\int_{V(y,x)} f_{\varepsilon|X}(\epsilon|x)dm^M(\epsilon)\ge \tau \}$ is $p-1$-time differentiable on $[\underline x,\overline x]$. Furthermore, $\tau\mapsto\inf\{y\in \mathcal{Y}|\int_{V(y,x_0)} f_{\varepsilon|X}(\epsilon|x_0)dm^M(\epsilon)\ge \tau \}$ is Lipschitz on $T$.
(vi) $f_{\varepsilon|X}(\cdot|x_0)$ is Lipschitz.

\end{assumption} 

We now state weak convergence results for our component estimators.
\begin{theorem}\label{theorem:weakconv1}
Suppose that Assumption \ref{a:consistency} holds.
Let $\mathds{Z}_n = \mathds{Z}_n( \cdot , \cdot)$ be defined by\small
\begin{align*}
&\begin{bmatrix}
\mathds{Z}_n(\tau_1,2)\\
\mathds{Z}_n(\tau_2,3)
\end{bmatrix}
=
\\
&\begin{bmatrix}
 \sqrt{nh_{n,\tau_1}^3 }\Big( \hat{\beta}^+_1 (\tau_1)- \frac{\partial Q_{Y|X}(\tau_1|x^+_0) }{\partial x} - h^p_{n,{\tau_1}} \frac{\iota'_2 (N)^{-1}}{(p+1)!} \int_{\mathds{R}} \bar u \Big( \frac{\partial^{p+1} Q_{Y|X}(\tau_1|x^+_0)}{\partial x^{p+1}}d^+_u + \frac{\partial^{p+1} Q_{Y|X}(\tau_1|x^-_0)}{\partial x^{p+1}}d^-_u \Big)  u^{p+1}     K(u) du\Big)\\
  \sqrt{nh_{n,\tau_2}^3 }\Big( \hat{\beta}^-_1 (\tau_2)- \frac{\partial Q_{Y|X}(\tau_2|x^-_0) }{\partial x} - h^p_{n,{\tau_2}}  \frac{\iota'_3 (N)^{-1}}{(p+1)!}\int_{\mathds{R}} \bar u \Big( \frac{\partial^{p+1} Q_{Y|X}(\tau_2|x^+_0)}{\partial x^{p+1}}d^+_u + \frac{\partial^{p+1} Q_{Y|X}(\tau_2|x^-_0)}{\partial x^{p+1}}d^-_u \Big)  u^{p+1}     K(u) du\Big)
\end{bmatrix}
\end{align*}\normalsize
We have the weak convergence
$
\mathds{Z}_n\Rightarrow G 
$
for a tight zero mean Gaussian process $G:\Omega\mapsto \ell^\infty(T\times \{2,3\})$ with covariance function given by
\begin{align*}
E[G(\tau_1,j_1)G(\tau_2,j_2)]=\frac{\iota'_{j_1} N^{-1}  T(\tau_1,\tau_2) N^{-1}\iota_{j_2} (\tau_1\wedge \tau_2 -\tau_1\tau_2)}{f_X(x_0)f_{Y|X}(Q_{Y|X}(\tau_1|x_0)|x_0)f_{Y|X}(Q_{Y|X}(\tau_2|x_0)|x_0)}
\end{align*}
for each $r,s \in T$, where
$T(\tau_1,\tau_2)=(c(\tau_1)c(\tau_2))^{-1/2}\int_{\mathds{R}}\bar u(\tau_1) \bar u' (\tau_2)K(\frac{u}{c(\tau_1)})K(\frac{u}{c(\tau_2)})du$ and
$\bar u(\tau)=\big[1,\frac{u}{c(\tau)}d^+_u,\frac{u}{c(\tau)}d^-_u,...,\frac{u^p}{c^p(\tau)}d^+_u,\frac{u^p}{c^p(\tau)}d^-_u \big]'$. 
\end{theorem}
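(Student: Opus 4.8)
The plan is to establish the weak convergence of $\mathds{Z}_n$ by first obtaining a uniform Bahadur representation for each component estimator $\hat\beta^+_1(\tau)$ and $\hat\beta^-_1(\tau)$, then applying a functional central limit theorem for the resulting leading empirical-process terms. I would begin by invoking Lemma \ref{lemma:bahadur} (the uniform Bahadur representation adapted from Qu and Yoon (2015a)), which under Assumption \ref{a:consistency} expresses
$\sqrt{nh_{n,\tau}^3}\big(\hat\beta^+_1(\tau) - \partial Q_{Y|X}(\tau|x_0^+)/\partial x - \text{bias}_{n}(\tau)\big)$
as a normalized sum
$\frac{\iota_2' N^{-1}}{\sqrt{nh_{n,\tau}}}\sum_{i=1}^n \bar u_i K\big((x_i-x_0)/h_{n,\tau}\big)\,\psi_\tau\big(y_i - Q_{Y|X}(\tau|x_i)\big)/f_{Y|X}(Q_{Y|X}(\tau|x_0)|x_0)$
plus a remainder that is $o_p(1)$ uniformly over $\tau \in T$, where $\psi_\tau(u) = \tau - \mathds{1}\{u<0\}$. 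The bias term here is exactly the higher-order polynomial bias subtracted off in the definition of $\mathds{Z}_n$, since the local polynomial smoother of order $p$ annihilates lower-order bias and the $h^p_{n,\tau}$ term captures the leading $(p+1)$-th derivative contribution. The analogous representation holds for $\hat\beta^-_1(\tau)$ with $\iota_3$ in place of $\iota_2$.

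Next I would treat the stochastic leading terms as a sequence of processes in $\ell^\infty(T\times\{2,3\})$ and verify the two ingredients for weak convergence: convergence of finite-dimensional distributions and asymptotic tightness (stochastic equicontinuity). For the finite-dimensional distributions, I would fix $(\tau_1,j_1),\dots,(\tau_k,j_k)$ and apply a Lyapunov or Lindeberg-Feller CLT to the triangular array; the covariance computation reduces to evaluating $\mathrm{Cov}$ of the kernel-weighted score terms, using $\mathds{E}[\psi_{\tau_1}(Y-Q_{Y|X}(\tau_1|X))\psi_{\tau_2}(Y-Q_{Y|X}(\tau_2|X))\mid X] = \tau_1\wedge\tau_2 - \tau_1\tau_2$ (the standard quantile cross-moment identity), conditioning on $X$ near $x_0$, a change of variables $u = (x_i-x_0)/h_n$, and continuity of $f_X$ at $x_0$ to pull out $f_X(x_0)$. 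The bandwidth ratio $h_{n,\tau} = c(\tau)h_n$ produces the $(c(\tau_1)c(\tau_2))^{-1/2}$ factor and the rescaled design vectors $\bar u(\tau)$ inside $T(\tau_1,\tau_2)$; the two $f_{Y|X}$ factors in the denominator come from the two Bahadur-linearization normalizations. This yields precisely the stated covariance kernel. For tightness, I would check stochastic equicontinuity of the score process indexed by $\tau$, which follows from the Lipschitz continuity of $\tau\mapsto Q_{Y|X}(\tau|x_0)$ and $\tau\mapsto c(\tau)$ (Assumption \ref{a:consistency}(iii),(v)) together with boundedness of $\psi_\tau$ and the kernel; the relevant function class is Donsker by standard bracketing or VC arguments, mirroring the argument in Qu and Yoon (2015a).

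The main obstacle I anticipate is controlling the Bahadur remainder uniformly over $\tau\in T$ while simultaneously accommodating the higher-order polynomial terms and the $\tau$-dependent bandwidth $h_{n,\tau}=c(\tau)h_n$. The higher-order local polynomial construction means the design matrix $N$ and the vector $\bar u$ carry the $d_u^+, d_u^-$ split, so one must verify that $N$ remains invertible (Assumption \ref{a:consistency}(iv)) and that the $o_p(1)$ remainder in the Bahadur expansion is genuinely uniform — this requires the rate conditions $nh_n^3\to\infty$ and $nh_n^{2p+3}\to 0$ to balance the stochastic and bias terms at all $\tau$. Everything else — the covariance algebra, the CLT, the tightness — is essentially a careful but routine adaptation of Qu and Yoon (2015a, 2015b) to the two-sided kink setup, so I would state Lemma \ref{lemma:bahadur} first, then assemble finite-dimensional convergence and tightness, and conclude via the portmanteau characterization of weak convergence in $\ell^\infty(T\times\{2,3\})$.
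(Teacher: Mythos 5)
Your proposal follows essentially the same route as the paper's proof: invoke the uniform Bahadur representation (Lemma \ref{lemma:bahadur}) to reduce $\mathds{Z}_n$ to the leading score process up to a uniform $o_p(1)$ term, establish finite-dimensional convergence via a CLT using the cross-moment identity $E[(\tau_1-\mathds{1}\{Y\le Q(\tau_1|X)\})(\tau_2-\mathds{1}\{Y\le Q(\tau_2|X)\})\mid X]=\tau_1\wedge\tau_2-\tau_1\tau_2$ together with a kernel change of variables to obtain $T(\tau_1,\tau_2)$ and $f_X(x_0)$, and obtain tightness via the stochastic equicontinuity result adapted from Qu and Yoon (2015a), which the paper cites as Lemma \ref{lemma:b3} rather than re-deriving. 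The only cosmetic discrepancy is that your displayed Bahadur linearization omits the $f_X(x_0)$ factor in the normalization (it appears in Lemma \ref{lemma:bahadur} and is needed for the stated covariance), which does not affect the validity of the overall argument.
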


This result can be established by adapting Qu and Yoon (2015a) to our framework, and a proof is provided in Appendix \ref{sec:weakconv1}.
In this theorem, we explicitly write the $p$-th order bias terms for the purpose of emphasizing on what is the smallest order of biases.
However, this $p$-th order bias term goes away in large sample as $nh^{2p+3}_{n,\tau}$ goes to zero uniformly in $\tau \in T$ under the optimal bandwidths for local linear estimators.
In other words, this bias term can be considered to be negligible in the weak convergence result. 
The following weak convergence result for the QRKD estimator (\ref{eq:wald_estimator}) follows from Theorem \ref{theorem:weakconv1}.

\begin{corollary}\label{corollary:QRKD}
Suppose that Assumptions \ref{a:policy treatment} and \ref{a:consistency} hold.
We have
\begin{align*}
\sqrt{nh_{n,\tau}^3} \Big( \widehat{QRKD}(\tau) - QRKD(\tau) \Big)
\Rightarrow &  Y(\tau)=
\frac{G(\tau,2)-G(\tau,3)}{\left( b'(x^+_0)-b'(x^-_0) \right) }
\end{align*}
where $Y$ is a zero mean Gaussian process with covariance function
\begin{align*}
&EY(\tau_1)Y(\tau_2)=(\tau_1 \wedge \tau_2-\tau_1\tau_2)\times\\&\frac{\iota'_2 N^{-1}T(\tau_1,\tau_2)N^{-1}\iota_2+\iota'_3 N^{-1}T(\tau_1,\tau_2)N^{-1}\iota_3-\iota'_2 N^{-1}T(\tau_1,\tau_2)N^{-1}\iota_3-\iota'_3 N^{-1}T(\tau_1,\tau_2)N^{-1}\iota_2}{f_X(x_0)f_{Y|X}(Q_{Y|X}(\tau_1|x_0)|x_0)f_{Y|X}(Q_{Y|X}(\tau_2|x_0)
|x_0)(b'(x^+_0)-b'(x^-_0))^2}
\end{align*} 
for all $\tau_1$, $\tau_2\in T$.
\end{corollary}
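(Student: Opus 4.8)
The plan is to derive the weak convergence of the QRKD process directly from Theorem \ref{theorem:weakconv1} by applying the continuous mapping theorem to a fixed bounded linear functional. First, I would recall that $\widehat{QRKD}(\tau) = (\hat\beta_1^+(\tau) - \hat\beta_1^-(\tau))/(b'(x_0^+) - b'(x_0^-))$ and that $QRKD(\tau) = (\partial_x Q_{Y|X}(\tau|x_0^+) - \partial_x Q_{Y|X}(\tau|x_0^-))/(b'(x_0^+)-b'(x_0^-))$ by definition (\ref{eq:wald}); the denominator is a nonzero constant by Assumption \ref{a:policy treatment}. Since the bandwidth is $h_{n,\tau} = c(\tau)h_n$ with $c$ bounded away from $0$ and $\infty$, and since $nh_n^{2p+3}\to 0$ forces $nh_{n,\tau}^{2p+3}\to 0$ uniformly in $\tau\in T$, the $p$-th order bias terms appearing inside the definition of $\mathds{Z}_n(\tau,2)$ and $\mathds{Z}_n(\tau,3)$ are $o(1)$ uniformly in $\tau$: indeed each such term is $\sqrt{nh_{n,\tau}^3}\cdot h_{n,\tau}^p$ times a bounded factor (bounded because the $(p+1)$-th derivatives of $Q_{Y|X}$ are Lipschitz on the relevant one-sided neighborhoods by Assumption \ref{a:consistency}(iii), $N$ is invertible by (iv), and $K$ is compactly supported), and $(nh_{n,\tau}^3)^{1/2}h_{n,\tau}^p = (nh_{n,\tau}^{2p+3})^{1/2}\to 0$.

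Next I would write
\begin{align*}
\sqrt{nh_{n,\tau}^3}\Big(\widehat{QRKD}(\tau) - QRKD(\tau)\Big)
&= \frac{\mathds{Z}_n(\tau,2) - \mathds{Z}_n(\tau,3)}{b'(x_0^+)-b'(x_0^-)} + R_n(\tau),
\end{align*}
where $R_n(\tau)$ collects exactly the bias terms just discussed and satisfies $\sup_{\tau\in T}|R_n(\tau)| \to 0$. The map $\Phi:\ell^\infty(T\times\{2,3\})\to\ell^\infty(T)$ defined by $\Phi(z)(\tau) = (z(\tau,2)-z(\tau,3))/(b'(x_0^+)-b'(x_0^-))$ is linear and bounded (hence continuous), so by Theorem \ref{theorem:weakconv1} and the continuous mapping theorem $\Phi(\mathds{Z}_n) \Rightarrow \Phi(G) = Y$ in $\ell^\infty(T)$, where $Y(\tau) = (G(\tau,2)-G(\tau,3))/(b'(x_0^+)-b'(x_0^-))$. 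Adding the uniformly negligible $R_n$ and invoking Slutsky's lemma for the $\ell^\infty(T)$-topology gives $\sqrt{nh_{n,\tau}^3}(\widehat{QRKD}(\tau)-QRKD(\tau)) \Rightarrow Y$. Since $G$ is a tight zero-mean Gaussian process, so is its continuous linear image $Y$.

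It remains to compute the covariance function of $Y$. Using bilinearity,
\begin{align*}
E\,Y(\tau_1)Y(\tau_2) = \frac{E[G(\tau_1,2)G(\tau_2,2)] + E[G(\tau_1,3)G(\tau_2,3)] - E[G(\tau_1,2)G(\tau_2,3)] - E[G(\tau_1,3)G(\tau_2,2)]}{(b'(x_0^+)-b'(x_0^-))^2},
\end{align*}
and substituting the covariance formula from Theorem \ref{theorem:weakconv1} for each of the four terms—noting the common factor $(\tau_1\wedge\tau_2 - \tau_1\tau_2)/(f_X(x_0)f_{Y|X}(Q_{Y|X}(\tau_1|x_0)|x_0)f_{Y|X}(Q_{Y|X}(\tau_2|x_0)|x_0))$—yields precisely the stated expression. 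I do not anticipate a genuine obstacle here: the argument is a routine application of the continuous mapping and Slutsky machinery in $\ell^\infty(T)$. The only point requiring a little care is the uniform (in $\tau$) control of the $p$-th order bias remainder $R_n$, which is exactly where Assumption \ref{a:consistency}(iii) (Lipschitz $(p+1)$-th derivatives on the one-sided neighborhoods) and part (v) (the rate $nh_n^{2p+3}\to 0$ together with $c(\cdot)$ bounded) are used.
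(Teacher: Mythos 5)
Your proposal is correct and takes essentially the same route as the paper: the paper likewise obtains the result by applying the delta method (equivalently, the continuous mapping theorem for the bounded linear map $z\mapsto (z(\cdot,2)-z(\cdot,3))/(b'(x_0^+)-b'(x_0^-))$) to Theorem \ref{theorem:weakconv1} and then computes the covariance by bilinearity using the covariance formula for $G$. Your explicit uniform control of the $p$-th order bias remainder via $(nh_{n,\tau}^{2p+3})^{1/2}\to 0$ merely spells out what the paper treats as negligible under Assumption \ref{a:consistency}(v) in the discussion following Theorem \ref{theorem:weakconv1}.
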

The random process $Y(\cdot)$ has mean zero, as $G(\cdot,2)$ and $G(\cdot,3)$ do. 
In practice, we can compute its covariance structure by using the pivotal method suggested in Qu and Yoon (2015a) -- see Appendix \ref{sec:bias} for a practical guide on its implementation. 
To account for higher variance from the conditional quantiles at the localities where the conditional density is small, we may also consider the following standardized version of the weak convergence results. 
Let $\sigma^s(\tau):=\{EY^2(\tau)\}^{1/2}$, and $\widehat \sigma^s(\tau)$ be the uniformly consistent standard error estimate based on the pivotal method (Section \ref{sec:bias}). An application of Slutsky's theorem and the continuous mapping theorem to Corollary \ref{corollary:QRKD} leads to the next result.

\begin{corollary}\label{corollary:student}
Suppose that Assumptions \ref{a:policy treatment} and \ref{a:consistency} hold.
If $\sigma^s(\cdot)$ is uniformly bounded away from $0$ on $T$, then we have
\begin{align*}
\sqrt{nh_{n,\tau}^3} \Big( \frac{\widehat{QRKD}(\tau)}{\widehat\sigma^s(\tau)} - \frac{QRKD(\tau) }{\sigma^s(\tau)}\Big)
\Rightarrow & Y^{std}(\tau):=\frac{Y(\tau)}{\sigma^s(\tau)}=
\frac{G(\tau,2)-G(\tau,3)}{ \sigma^s(\tau)( b'(x^+_0)-b'(x^-_0)) }
\end{align*}
where $Y^{std}$ is a zero mean Gaussian process with covariance function
\begin{align*}
&EY^{std}(\tau_1)Y^{std}(\tau_2)=(\tau_1 \wedge \tau_2-\tau_1\tau_2)\times\\&\frac{\iota'_2 N^{-1}T(\tau_1,\tau_2)N^{-1}\iota_2+\iota'_3 N^{-1}T(\tau_1,\tau_2)N^{-1}\iota_3-\iota'_2 N^{-1}T(\tau_1,\tau_2)N^{-1}\iota_3-\iota'_3 N^{-1}T(\tau_1,\tau_2)N^{-1}\iota_2}{\sigma^s(\tau_1)\sigma^s (\tau_2)f_X(x_0)f_{Y|X}(Q_{Y|X}(\tau_1|x_0)|x_0)f_{Y|X}(Q_{Y|X}(\tau_2|x_0)
|x_0)(b'(x^+_0)-b'(x^-_0))^2}
\end{align*}
and $E(Y^{std}(\tau))^2=1$ for all $\tau$, $\tau_1$, $\tau_2\in T$.
\end{corollary}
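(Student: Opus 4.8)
The plan is to reduce the claim to Corollary \ref{corollary:QRKD} by an algebraic decomposition that isolates the studentization error and then to control it with Slutsky's lemma and the continuous mapping theorem in $\ell^\infty(T)$. For each $\tau \in T$ write
\begin{align*}
\sqrt{nh_{n,\tau}^3}\Big(\frac{\widehat{QRKD}(\tau)}{\widehat\sigma^s(\tau)}-\frac{QRKD(\tau)}{\sigma^s(\tau)}\Big)
&=\frac{1}{\widehat\sigma^s(\tau)}\,\sqrt{nh_{n,\tau}^3}\big(\widehat{QRKD}(\tau)-QRKD(\tau)\big)\\
&\quad+\,QRKD(\tau)\,\sqrt{nh_{n,\tau}^3}\Big(\frac{1}{\widehat\sigma^s(\tau)}-\frac{1}{\sigma^s(\tau)}\Big).
\end{align*}
First I would treat the leading term. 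By Corollary \ref{corollary:QRKD}, $\sqrt{nh_{n,\cdot}^3}\big(\widehat{QRKD}(\cdot)-QRKD(\cdot)\big)\Rightarrow Y(\cdot)$ in $\ell^\infty(T)$; since $\widehat\sigma^s$ is uniformly consistent for $\sigma^s$ and, by hypothesis, $\sigma^s$ is bounded away from $0$ on $T$, weak convergence jointly with convergence in probability to the (deterministic) limit $\sigma^s$ gives $(\sqrt{nh_{n,\cdot}^3}(\widehat{QRKD}-QRKD),\widehat\sigma^s)\Rightarrow(Y,\sigma^s)$ in $\ell^\infty(T)^2$. The map $(z,w)\mapsto z/w$ is continuous (uniformly on $T$) at points $(z,\sigma^s)$ with $\sigma^s$ bounded away from $0$, so the extended continuous mapping theorem (Slutsky's lemma for $\ell^\infty(T)$-valued processes) yields that the leading term converges weakly to $Y(\cdot)/\sigma^s(\cdot)$.

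Next I would show the remainder term is $o_p(1)$ uniformly on $T$. Under Assumption \ref{a:consistency}, $QRKD(\cdot)$ is continuous — hence bounded — on the compact set $T$, and both $\widehat\sigma^s$ and $\sigma^s$ are bounded away from $0$ on $T$ with probability approaching one, so $\big|\tfrac1{\widehat\sigma^s(\tau)}-\tfrac1{\sigma^s(\tau)}\big|\lesssim|\widehat\sigma^s(\tau)-\sigma^s(\tau)|$ uniformly in $\tau$, and it suffices to establish
$$\sup_{\tau\in T}\sqrt{nh_{n,\tau}^3}\,\big|\widehat\sigma^s(\tau)-\sigma^s(\tau)\big|=o_p(1).$$
This is where the construction of $\widehat\sigma^s$ in Section \ref{sec:bias} enters: the pivotal-method standard error is assembled from the known quantities $b'(x_0^\pm)$ and the kernel constants ($N$, $T(\tau_1,\tau_2)$, $c(\cdot)$) together with one-dimensional nonparametric estimates of $f_X(x_0)$ and of $f_{Y|X}(Q_{Y|X}(\tau|x_0)\mid x_0)$; under the bandwidth conditions of Assumption \ref{a:consistency}(v) these one-dimensional smoothing estimators converge, uniformly in $\tau$, at rates strictly faster than $(nh_{n,\tau}^3)^{-1/2}$ (variance terms of order $(nh_n)^{-1/2}$ and bias terms annihilated by $nh_n^{2p+3}\to0$), and composing with the Lipschitz map $\sigma\mapsto1/\sigma$ away from $0$ propagates this rate to $\widehat\sigma^s-\sigma^s$. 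I expect this uniform-rate step to be the only non-mechanical part of the argument, and therefore the main obstacle. (If one states the result only at $QRKD(\tau)\equiv0$ — the configuration relevant for the significance test — the remainder term vanishes identically and bare Slutsky suffices.)

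Combining the two displays, $\sqrt{nh_{n,\cdot}^3}\big(\widehat{QRKD}(\cdot)/\widehat\sigma^s(\cdot)-QRKD(\cdot)/\sigma^s(\cdot)\big)\Rightarrow Y(\cdot)/\sigma^s(\cdot)=:Y^{std}(\cdot)$ in $\ell^\infty(T)$. It remains to read off the properties of $Y^{std}$. Since $\sigma^s$ is a fixed, continuous, strictly positive function, multiplication by $1/\sigma^s$ is a bounded linear operator, so $Y^{std}$ is a continuous linear image of the tight zero-mean Gaussian process $Y$ and is hence itself a tight zero-mean Gaussian process; its covariance is $E[Y^{std}(\tau_1)Y^{std}(\tau_2)]=E[Y(\tau_1)Y(\tau_2)]/(\sigma^s(\tau_1)\sigma^s(\tau_2))$, which upon substituting the covariance of $Y$ from Corollary \ref{corollary:QRKD} gives the stated formula, and $E[(Y^{std}(\tau))^2]=E[Y(\tau)^2]/\sigma^s(\tau)^2=1$ directly from the definition $\sigma^s(\tau)=\{EY^2(\tau)\}^{1/2}$. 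This completes the plan.
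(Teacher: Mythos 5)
Your argument is correct and follows the same route the paper takes: the paper's entire proof of this corollary is the sentence preceding it, namely that the result follows from Slutsky's theorem and the continuous mapping theorem applied to Corollary \ref{corollary:QRKD}, which is exactly your treatment of the leading term (divide the weak limit by the uniformly consistent, uniformly positive $\widehat\sigma^s$) together with your reading-off of the covariance of $Y^{std}$ and of $E(Y^{std}(\tau))^2=1$ from $\sigma^s(\tau)=\{EY^2(\tau)\}^{1/2}$. Where you go beyond the paper is the cross term $QRKD(\tau)\sqrt{nh_{n,\tau}^3}\bigl(1/\widehat\sigma^s(\tau)-1/\sigma^s(\tau)\bigr)$ produced by the asymmetric studentization in the statement ($\widehat{QRKD}$ scaled by $\widehat\sigma^s$ but $QRKD$ by $\sigma^s$): the paper's one-line proof is silent on it, and you are right that mere uniform consistency of $\widehat\sigma^s$ --- which is all the paper posits --- does not make it vanish; one needs either $\sup_{\tau\in T}\sqrt{nh_{n,\tau}^3}\,\abs{\widehat\sigma^s(\tau)-\sigma^s(\tau)}=o_p(1)$, as you require, or to invoke the corollary only where it is actually used, i.e.\ under $H^S_0$ where $QRKD\equiv 0$ (Corollary \ref{corollary:test}), in which case bare Slutsky suffices, exactly as you note parenthetically. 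Your heuristic for the rate --- the pivotal-method standard error is built from $\hat f_X$, $\hat f_{Y|X}$ and the local quantile level estimate, all lower-dimensional smoothers converging uniformly faster than $(nh_{n,\tau}^3)^{-1/2}$ with biases controlled by $nh_n^{2p+3}\to 0$ --- is plausible but only sketched, and since the paper never asserts more than uniform consistency of $\widehat\sigma^s$, this rate would have to be added as an explicit hypothesis or verified for the estimator of Appendix \ref{sec:bias}. In short: same core Slutsky/continuous-mapping argument as the paper for the main term and the covariance computation, plus an honest accounting of a remainder the paper ignores; what your version buys is a proof valid for general $QRKD(\cdot)$, at the price of one additional (sketched, not proved) uniform rate condition on $\widehat\sigma^s$.
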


These weak convergence results are applicable for many purposes.
They are readily applicable to computing uniform confidence bands for the QRKD.
Of particular interest may be the uniform tests regarding heterogeneous treatment effects.
We discuss them in Section \ref{sec:test}.

\subsection{Testing for Heterogeneous Treatment Effects}\label{sec:test}

Researchers are often interested in the following hypotheses regarding heterogeneous treatment effects.
\begin{eqnarray*}
\text{Treatment Significance} &H^S_0:& QRKD(\tau) = 0 \quad\text{for all } \tau \in T.
\\
\text{Treatment Heterogeneity} &H^H_0:& QRKD(\tau) = QRKD(\tau^\prime) \quad\text{for all } \tau, \tau^\prime \in T.
\end{eqnarray*}
By the result in Section \ref{sec:identification_rank_invariance}, under the case of rank invariance, these hypotheses regarding $QRKD$ are equivalent to the corresponding structural hypotheses:
\begin{eqnarray*}
&H^S_0& \Longleftrightarrow \quad g_1(b(x_0),x_0,Q_{\varepsilon | X=x_0}(\tau)) = 0 \quad\text{for all } \tau \in T.
\\
&H^H_0& \Longleftrightarrow \quad g_1(b(x_0),x_0,Q_{\varepsilon | X=x_0}(\tau)) = g_1(b(x_0),x_0,Q_{\varepsilon | X=x_0}(\tau^\prime)) \quad\text{for all } \tau, \tau^\prime \in T.
\end{eqnarray*}
Furthermore, by the result in Section \ref{sec:identification}, even under the general case without rank invariance, the hypotheses regarding $QRKD$ are logically implied by the corresponding structural hypotheses, i.e.,
\begin{eqnarray*}
&H^S_0& \Longleftarrow \quad g_1(b(x_0),x_0,\epsilon) = 0 \quad\text{for all } \epsilon \in \mathbb{R}^M.
\\
&H^H_0& \Longleftarrow \quad g_1(b(x_0),x_0,\epsilon) = g_1(b(x_0),x_0,\epsilon^\prime) \quad\text{for all }\epsilon, \epsilon^\prime \in \mathbb{R}^M.
\end{eqnarray*}
Therefore, by the contrapositive logic, a rejection of the null hypothesis $H^S_0$ implies a rejection of the structural hypothesis of uniform zero.
Likewise, a rejection of the null hypothesis $H^H_0$ implies a rejection of the structural hypothesis of homogeneity. For these logical equivalences or implications, the hypotheses $H_0^S$ and $H_0^H$ may well be of great practical interest.

Both of the two hypotheses, $H_0^S$ and $H_0^H$, are considered in Koenker and Xiao (2002), Chernozhukov and Fern\'andez-Val (2005) and Qu and Yoon (2015b), among others.
Following the approach of these preceding papers, the two hypotheses, $H^S_0$ and $H^H_0$, may be tested using the statistics
\begin{eqnarray*}
WS_n(T)&=&\sup_{\tau \in T}\sqrt{n h_{n,\tau}^3}\big|\widehat{QRKD}(\tau)\big| \qquad\text{and}
\\
WH_n(T)&=&\sup_{\tau \in T}\sqrt{n h_{n,\tau}^3}\bigg|\widehat{QRKD}(\tau)- |T|^{-1}\int_{T} \widehat{QRKD}(\tau') d\tau^\prime \bigg|,
\end{eqnarray*}
or their standardized versions
\begin{eqnarray*}
WS^{std}_n(T)&=&\sup_{\tau \in T}\sqrt{n h_{n,\tau}^3}\bigg|\frac{\widehat{QRKD}(\tau)}{\widehat\sigma^{s} (\tau)}\bigg| \qquad\text{and}
\\
WH^{std}_n(T)&=&\sup_{\tau \in T}\sqrt{n h_{n,\tau}^3}\bigg|\frac{\widehat{QRKD}(\tau)- |T|^{-1}\int_{T} \widehat{QRKD}(\tau') d\tau^\prime}{\widehat\sigma^{h}(\tau)} \bigg| \qquad,
\end{eqnarray*}
respectively, where $|T|$ denotes the length (Lebesgue measure) of interval $T \subset (0,1)$, and $\widehat \sigma^h (\tau)$ denotes the uniformly consistent standard error estimate of $\sigma^h(\tau):=\{E[\phi'_{QRKD}(Y)(\tau)]^2\}^{1/2}$ based on the pivotal method (Section \ref{sec:bias}).

For the second term in the statistic $WH_n(T)$, we could also substitute a mean RKD estimator in place of $|T|^{-1}\int_{T} \widehat{QRKD}(\tau')d\tau^\prime $. Nonetheless, we use the above definition for its convenient feature that it is written as a functional only of $\widehat{QRKD}( \cdot )$.
Consequences of Corollary \ref{corollary:QRKD} are the following asymptotic distributions of these test statistics, a proof of which is provided in Appendix \ref{sec:corollary:test}.

\begin{corollary}\label{corollary:test}
Suppose that Assumptions \ref{a:policy treatment} and \ref{a:consistency} hold.
If $\sigma^s( \ \cdot \ )$ and  $\sigma^h( \ \cdot \ )$ are bounded away from zero uniformly on $T$, then\\
(i) $WS_n(T) \Rightarrow \sup_{\tau \in T} |Y(\tau)|$ and $WS^{std}_n(T) \Rightarrow \sup_{\tau \in T} |Y(\tau)/\sigma^s(\tau)|$ under the null hypothesis $H^S_0$;\\
(ii) $WH_n(T) \Rightarrow \sup_{\tau \in T} |\phi'_{QRKD}(Y)(\tau)|$ and $WH^{std}_n(T) \Rightarrow \sup_{\tau \in T} |\phi'_{QRKD}(Y)(\tau)/\sigma^h(\tau)|$ under the null hypothesis $H^H_0$, where $\phi'_{QRKD}$ $(\lambda)(\tau)=\lambda(\tau)-|T|^{-1}\int_T \lambda(\tau^\prime) d\tau^\prime$ for all $\lambda \in \ell^\infty(T)$, the space of all bounded, measurable, real-valued functions defined on $T$.
\end{corollary}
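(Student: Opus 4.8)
\textbf{Proof proposal for Corollary \ref{corollary:test}.}

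The plan is to derive both parts as direct consequences of the weak convergence in Corollary \ref{corollary:QRKD} together with the continuous mapping theorem, after identifying each test statistic as a continuous functional of the process $\sqrt{nh_{n,\tau}^3}(\widehat{QRKD}(\tau)-QRKD(\tau))$ evaluated under the relevant null hypothesis. First I would treat part (i). Under $H^S_0$ we have $QRKD(\tau)=0$ for all $\tau\in T$, so $WS_n(T)=\sup_{\tau\in T}\sqrt{nh_{n,\tau}^3}|\widehat{QRKD}(\tau)-QRKD(\tau)|$, which is exactly $\|\cdot\|_{\infty,T}$ applied to the process appearing in Corollary \ref{corollary:QRKD}. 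Since the supremum norm $\lambda\mapsto\sup_{\tau\in T}|\lambda(\tau)|$ is continuous on $\ell^\infty(T)$, the continuous mapping theorem yields $WS_n(T)\Rightarrow\sup_{\tau\in T}|Y(\tau)|$. For the standardized version, I would first note that the map $\lambda\mapsto\lambda/\sigma^s$ is continuous on $\ell^\infty(T)$ whenever $\sigma^s$ is bounded away from zero on $T$, which is assumed; composing with the supremum norm and invoking the continuous mapping theorem again gives $WS^{std}_n(T)\Rightarrow\sup_{\tau\in T}|Y(\tau)/\sigma^s(\tau)|$. One should be mildly careful here that $\widehat\sigma^s$ rather than $\sigma^s$ appears in the statistic, so I would instead write $WS^{std}_n(T)$ as the supremum of $\sqrt{nh_{n,\tau}^3}\widehat{QRKD}(\tau)/\widehat\sigma^s(\tau)$, use the uniform consistency of $\widehat\sigma^s$ for $\sigma^s$ together with the assumed lower bound, and apply Slutsky's theorem in $\ell^\infty(T)$ (as already set up for Corollary \ref{corollary:student}) before taking the continuous supremum norm.

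For part (ii), the key observation is that $\phi'_{QRKD}(\lambda)(\tau)=\lambda(\tau)-|T|^{-1}\int_T\lambda(\tau')d\tau'$ is a bounded linear, hence continuous, operator on $\ell^\infty(T)$, and that under $H^H_0$ the function $\tau\mapsto QRKD(\tau)$ is constant on $T$, so $\phi'_{QRKD}$ annihilates it: $\phi'_{QRKD}(QRKD)(\tau)=QRKD(\tau)-|T|^{-1}\int_T QRKD(\tau')d\tau'=0$. Therefore, by linearity of $\phi'_{QRKD}$,
\begin{align*}
\sqrt{nh_{n,\tau}^3}\Big(\widehat{QRKD}(\tau)-|T|^{-1}\!\!\int_T\widehat{QRKD}(\tau')d\tau'\Big)
&=\phi'_{QRKD}\!\Big(\sqrt{nh_{n,\cdot}^3}\big(\widehat{QRKD}-QRKD\big)\Big)(\tau),
\end{align*}
where I have used that $\sqrt{nh_{n,\tau}^3}$ factors through $c(\tau)$ (Assumption \ref{a:consistency}(v)) and can be absorbed into the process as in Corollary \ref{corollary:QRKD}. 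Applying the continuous mapping theorem with the continuous map $\lambda\mapsto\sup_{\tau\in T}|\phi'_{QRKD}(\lambda)(\tau)|$ to the weak limit of Corollary \ref{corollary:QRKD} gives $WH_n(T)\Rightarrow\sup_{\tau\in T}|\phi'_{QRKD}(Y)(\tau)|$. The standardized version follows identically to part (i): divide by $\widehat\sigma^h$, invoke its uniform consistency for $\sigma^h$ and the assumed uniform lower bound, use Slutsky's theorem in $\ell^\infty(T)$, and then apply the continuous supremum-norm map, yielding $WH^{std}_n(T)\Rightarrow\sup_{\tau\in T}|\phi'_{QRKD}(Y)(\tau)/\sigma^h(\tau)|$.

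The main obstacle — and really the only nonroutine point — is justifying that one may replace $\widehat\sigma^s$ and $\widehat\sigma^h$ by their population counterparts uniformly over $T$, i.e.\ the uniform consistency claims $\sup_{\tau\in T}|\widehat\sigma^s(\tau)-\sigma^s(\tau)|\to_p 0$ and similarly for $\widehat\sigma^h$, which underpin the Slutsky step in $\ell^\infty(T)$. These are asserted in the text as properties of the pivotal method discussed in Section \ref{sec:bias} / Appendix \ref{sec:bias}; I would cite that construction (following Qu and Yoon, 2015a) and verify that the covariance functions in Corollaries \ref{corollary:QRKD} and \ref{corollary:student} are continuous in $\tau$ and bounded away from degeneracy under the maintained assumptions, so that the estimated standardizers inherit uniform consistency. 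Everything else is a mechanical application of the continuous mapping theorem and linearity of $\phi'_{QRKD}$, once it is noted that constants lie in the kernel of $\phi'_{QRKD}$ and that the $\sqrt{nh_{n,\tau}^3}$ scaling is the same one embedded in the limiting process $Y$.
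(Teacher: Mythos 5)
Your argument is correct and essentially matches the paper's: both parts rest on the weak convergence of $\sqrt{nh_{n,\tau}^3}(\widehat{QRKD}(\tau)-QRKD(\tau))$ from Corollaries \ref{corollary:QRKD} and \ref{corollary:student}, the observation that $H^S_0$ (resp.\ $H^H_0$) makes the statistic a continuous functional of that centered process, and Slutsky's theorem with uniform consistency of $\widehat\sigma^s,\widehat\sigma^h$ and the uniform lower bound for the standardized versions. The only cosmetic difference is that you apply the continuous mapping theorem directly to the bounded linear functional $\phi'_{QRKD}$ (using that constants lie in its kernel under $H^H_0$), whereas the paper phrases the identical step as a functional delta method argument with an explicit Hadamard-differentiability check of $\phi$ — for a bounded linear map these are the same thing, so no substantive gap.
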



\subsection{Covariates}\label{sec:covariates}

In empirical researches, we often face the circumstances where covariates are observed in addition to the basic variables. Under a mean regression setting, Calonico, Cattaneo, Farrell and Titiunik (2016) have investigated regression discontinuity using covariates. 
This subsection presents an extension of the QRKD baseline method and its asymptotic results to models with covariates. 
Let $\textbf{W}= (W_1,...,W_k)$ denote the covariate random vector of dimension $k \in \mathbb{N}$.
We suppose that the model is compatible with the following partial linear structure:
\begin{equation}\label{eq:covariates}
y=g(b(x),x,\epsilon)+\textbf W'\theta(\epsilon)=Q_{Y|X}(\epsilon|x)+\textbf W'\theta(\epsilon)=Q_{Y|X}(\epsilon|x,\textbf W').
\end{equation}
where $\varepsilon$ is normalized to $\varepsilon\sim Uniform(0,1)$.
We focus on this simple quantile regression representation with additive covariates and a univariate $\varepsilon$ in this section to provide a practical solution in the presence of covariates.
We could maintain the non-separability of covariates and the multi-dimensionality of $\varepsilon$ by naively extending the baseline framework, but such a naive extension would be doomed to a non-practicality in the curse of dimensionality.
For the model (\ref{eq:covariates}) which we consider, we are able to obtain the same convergence rate for the estimator as in the baseline estimator.

Adding ${\textbf W}'\gamma$ to the baseline estimator, we propose
\begin{align*}
\hat\beta^+_1(\tau)=& \iota_2 ' \underset{(\alpha, \beta^+_1, \beta^-_1,...,\beta^+_p, \beta^-_p,\gamma')' \in \mathds{R}^{1+2p+k}}{\text{argmin}} \sum_{i=1}^{n}K\Big(\frac{x_i-x_0}{h_{n,\tau}}\Big)\rho_\tau \Big( y_i- \alpha - \sum_{v=1}^{p}(\beta^+_v d^+_i + \beta^-_v d^-_i)\frac{(x_i-x_0)^v}{v!} + \textbf W'_i\gamma  \Big)\\
\hat\beta^-_1(\tau)=&\iota_3 ' \underset{(\alpha, \beta^+_1, \beta^-_1,...,\beta^+_p, \beta^-_p,\gamma')' \in \mathds{R}^{1+2p+k}}{\text{argmin}} \sum_{i=1}^{n}K\Big(\frac{x_i-x_0}{h_{n,\tau}}\Big)\rho_\tau \Big( y_i- \alpha - \sum_{v=1}^{p}(\beta^+_v d^+_i + \beta^-_v d^-_i)\frac{(x_i-x_0)^v}{v!} + \textbf W'_i\gamma\Big)
\end{align*}
With these local linear estimators, the QRKD is estimated in turn by
$$
\widehat{QRKD}_{cov}(\tau) = \frac{\hat\beta^+_1(\tau)-\hat\beta^-_1(\tau)}{b'(x_0^+)-b'(x_0^-)}.
$$
For convenience of concisely presenting assumptions and results, we introduce the following short-hand notations: $ \tilde u=[1, ud^+_u, ud^-_u,..., u^p d^+_u, u^p d^-_u, \textbf v' ]'\in \mathds{R}^{1+2p+k}$ where $\textbf v=[v_1,...,v_k]'\in \mathds{R}^{k}$, $R=\int_{\mathds{R}^{k+1}}  \tilde u \tilde u'K(u)f_{\textbf W|X}(\textbf v|x_0)dudv_1...dv_k$, and $\Gamma(\tau)=\int_{\mathds{R}^{k+1}}  \tilde u \tilde u'K(u)f_{Y|\textbf W X}(g(b(x_0),x_0,\tau)+\textbf v'\theta(\tau)|\textbf v,x_0)$ $f_{\textbf W|X}(\textbf v|x_0)dudv_1...dv_k$. 
Most of the required assumptions stated in Assumption \ref{a:consistency_cov} below are direct analogues of Assumption \ref{a:consistency}. 
Let
	$\underline y\le \underset{(\epsilon,\textbf W,x)\in T \times supp(\textbf W) \times([\underline x, x_0)\cup(x_0,\overline x]) }{\inf}$ $g(b(x),x,\epsilon)+ \textbf W'\theta(\epsilon)$ and $\overline y \ge \underset{(\epsilon,\textbf W,x)\in T \times supp(\textbf W) \times ([\underline x, x_0)\cup(x_0,\overline x])  }{\sup} g(b(x),x,\epsilon)+ \textbf W'\theta(\epsilon)$.
Consider the following conditions.

\begin{assumption}\label{a:consistency_cov}
	(i) (a) $\{(y_i,x_i, \textbf W'_i)\}^n_{i=1}$ is an i.i.d. sample of n observations of $k+2$ dimensional random vector $(Y,X, \textbf W')$. Random vector ${\textbf W}$ has a compact support. (b) $f_{\textbf W|X}$ is continuously differentiable in $x$ on $[\underline x, x_0)$ and $(x_0,\overline x]$. $f_X$ is continuously differentiable at $x_0$.
	(ii) (a) $f_{Y|\textbf W X}$ is continuous on $[\underline y, \overline y]\times supp(\textbf W)\times[\underline x, \overline x]$ and is continuously differentiable and Lipschitz on $[\underline y, \overline y]\times supp(\textbf W)\times[\underline x, x_0)$ and $[\underline y, \overline y]\times supp(\textbf W)\times( x_0, \overline x]$. (b) There exist finite constants $f_L>0$ and $f_U>0$, such that $f_{Y|\textbf WX} (g(b(x),x,\epsilon)+ \textbf W'\theta(\epsilon)+\eta|\textbf W',x)$ lies between $f_L$ and $f_U$ for all $\epsilon \in T$, $|\eta| \le \infty$ and $(x,\textbf W') \in [\underbar{x},\bar{x}]\times supp(W).$
	(iii) (a) $g(b(x_0),x_0,\epsilon)$ and $\frac{\partial}{\partial \epsilon}g(b(x_0),x_0,\epsilon)$ exist and are Lipschitz continuous in $\epsilon$ on $T$. Each coordinate of $\theta(\epsilon)$ is continuously differentiable and their derivatives are Lipschitz continuous in $\epsilon$ on $T$. 
	(b) $g(b(x),x,\epsilon)$ is continuous in $x$ at $x_0$. For $v=0,1,...,p+1$, $(x,\epsilon) \mapsto \frac{\partial^{v}}{\partial x^{v}}[g(b(x),x,\epsilon)]$ exists and is Lipschitz continuous on $\{(x,\epsilon)|x\in(x_0,\bar{x}], \epsilon \in T\}$ and  $\{(x,\epsilon)|x\in[\underbar{x},x_0), \epsilon \in T \}$. 
	(iv) The kernel $K$ is compactly supported, Lipschitz, differentiable, and satisfying $K(\cdot)\ge 0$, $\int K(u)du=1$, $\int uK(u)du=0$ and $\norm{K}_\infty<\infty$. 
	The matrices $R$ and $\Gamma(\epsilon)$ are positive definite for each $\epsilon\in T$ and the entries of their inverse matrices are uniformly bounded functions in $\epsilon\in T$. 
	(v) The bandwidths satisfy $h_{n,\epsilon}=c(\epsilon)h_n$, where $nh_n^3 \rightarrow \infty$ and $nh^{2p+3}_n\to 0$ as $n\to \infty$, and $c(\cdot)$ is Lipschitz continuous satisfying $0 < \underline{c} \le c(\epsilon) \le \overline{c}  < \infty$ for all $\epsilon \in T.$
\end{assumption}

The following theorem states weak convergence results for the model (\ref{eq:covariates}) with covariates, analogously to Theorem \ref{theorem:weakconv1} and Corollary \ref{corollary:QRKD} for the baseline model. The proofs are similar to their baseline counterparts and are therefore omitted.

\begin{theorem}\label{theorem:weak_conv_cov}
	Suppose that Assumption \ref{a:consistency_cov} holds for (\ref{eq:covariates}).
	Define $\mathds{X}''_n$ by
	\begin{align*}
	&\begin{bmatrix}
	\mathds{X}''_n(\tau_1,2)\\
	\mathds{X}''_n(\tau_2,3)
	\end{bmatrix}
	= \begin{bmatrix}
	\sqrt{nh_{n,\tau_1}^3}\Big( \hat{\beta}^+_1 (\tau_1)- \frac{\partial Q_{Y|X}(\tau_1|x^+_0) }{\partial x} \Big)  \Big)\\
	\sqrt{nh_{n,\tau_2}^3}\Big( \hat{\beta}^-_1 (\tau_2)- \frac{\partial Q_{Y|X}(\tau_2|x^-_0) }{\partial x}  \Big)  \Big)
	\end{bmatrix}.
	\end{align*}
	There exists a tight zero mean Gaussian process $G_{cov}:\Omega\mapsto\ell^\infty(T\times \{2,3\})$ with covariance function
	\begin{align*}
EG_{cov}(\tau_1,j_1)G_{cov}(\tau_2,j_2)=\frac{\iota'_{j_1}(\Gamma(\tau_1))^{-1}  \tilde T(\tau_1,\tau_2) (\Gamma(\tau_2))^{-1}\iota_{j_2} (\tau_1\wedge \tau_2 -\tau_1\tau_2)}{f_X(x_0)},
	\end{align*}
	where $\tilde T(\tau_1,\tau_2)=(c(\tau_1)c(\tau_2))^{-1/2}\int \tilde u(\tau_1) \tilde u'(\tau_2)K(u/c(\tau_1))K(u/c(\tau_2))f_{\textbf W|X}(v_1,...,v_k|x_0)dudv_1...dv_k$ and $\tilde u(\tau)=[1, ud^+_u/(\tau),ud^-_u/(\tau),...,(ud^+_u/c(\tau))^p,(ud^-_u/c(\tau))^p,v_1,...,v_k]'\in \mathds{R}^{1+2p+k}$,
	such that
$
	\mathds{X}''_n\Rightarrow G_{cov}.
$	
	Consequently, if Assumption \ref{a:policy treatment} also holds, then
	\begin{align*}
	\sqrt{nh^3_{n,\tau}}\Big(\widehat{ QRKD}_{cov}(\tau)-QRKD_{cov}(\tau)\Big)\Rightarrow Y_{cov}(\tau):=\frac{G_{cov}(\tau,2)-G_{cov}(\tau,3)}{b'(x_0^+)-b'(x_0^-)}.
	\end{align*}
\end{theorem}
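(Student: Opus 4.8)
The plan is to reproduce, step for step, the proof of Theorem~\ref{theorem:weakconv1} and the deduction of Corollary~\ref{corollary:QRKD} from it, with the $(2p+1)$-dimensional local-polynomial design $\bar u$ replaced by the $(1+2p+k)$-dimensional vector $\tilde u$ that appends the covariate block $\mathbf{v}$, and with the conditional-density weight $f_{\mathbf{W}|X}$ carried through the Gram-type matrices. First I would use the partially linear structure (\ref{eq:covariates}): since $Q_{Y|X}(\epsilon|x,\mathbf{W}')=Q_{Y|X}(\epsilon|x)+\mathbf{W}'\theta(\epsilon)$, the population minimizer of the localized $\rho_\tau$-criterion over $\mathds{R}^{1+2p+k}$ is, up to a local-polynomial approximation error, the vector stacking $Q_{Y|X}(\tau|x_0)$, the one-sided derivatives $\partial^v Q_{Y|X}(\tau|x_0^\pm)/\partial x^v$ for $v=1,\dots,p$, and $\theta(\tau)$; the smoothness of $Q_{Y|X}(\tau|\cdot)$ on each side of $x_0$ and of $\theta(\cdot)$ in Assumption~\ref{a:consistency_cov}(iii) legitimizes this. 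I would then center the criterion at this vector and rescale the $v$-th polynomial coordinate by the appropriate power $(nh_{n,\tau}^{1+2v})^{1/2}$ and each covariate coordinate by $(nh_{n,\tau})^{1/2}$, so that for every $\tau$ the objective becomes a convex function of the rescaled deviation whose argmin is exactly the quantity appearing in $\mathds{X}''_n$.

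The core step is a uniform-in-$\tau$ Bahadur representation for $(\hat\beta^+_1(\tau),\hat\beta^-_1(\tau))$ in the covariate model, i.e.\ the covariate analogue of Lemma~\ref{lemma:bahadur}. Following Qu and Yoon (2015a), I would (a) show that the expected Hessian of the localized criterion converges uniformly in $\tau\in T$ to $\Gamma(\tau)$ --- this uses the continuity and Lipschitz regularity of $f_{Y|\mathbf{W}X}$ and $f_{\mathbf{W}|X}$ from Assumption~\ref{a:consistency_cov}(i)(b),(ii) and the kernel conditions in (iv) --- with positive definiteness and a uniformly bounded inverse supplied by Assumption~\ref{a:consistency_cov}(iv); (b) linearize the subgradient, whose leading term is the kernel-weighted sample average of $\tilde u_i(\tau-\mathds{1}\{y_i\le Q_{Y|X}(\tau|x_i,\mathbf{W}_i')\})$; and (c) invoke a Pollard--Knight convexity argument together with a maximal inequality for the VC-type class indexed by $(\tau,\text{local parameter})$ to control the remainder uniformly on $T$ at rate $o_p(1)$. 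After the rescaling, the higher-order bias of the slope estimate is $O(h_{n,\tau}^{p})$, so multiplying by $\sqrt{nh_{n,\tau}^3}$ leaves $O(\sqrt{nh_n^{2p+3}})=o(1)$ under Assumption~\ref{a:consistency_cov}(v); this is why no bias term appears in the definition of $\mathds{X}''_n$.

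With the Bahadur representation in hand, I would apply a functional central limit theorem to its leading linear term. The index class over $\tau\in T$ is Donsker by the Lipschitz hypotheses on $g(b(x_0),x_0,\cdot)$, $\theta(\cdot)$, $f_{Y|\mathbf{W}X}$ and $c(\cdot)$ together with the kernel regularity of Assumption~\ref{a:consistency_cov}(ii)--(v); standard kernel-localization algebra at $x_0$, combined with $\mathrm{Cov}(\tau_1-\mathds{1}\{U\le\tau_1\},\tau_2-\mathds{1}\{U\le\tau_2\})=\tau_1\wedge\tau_2-\tau_1\tau_2$ for $U\sim Uniform(0,1)$, gives finite-dimensional Gaussian limits whose covariance contains the kernel-overlap matrix $\tilde T(\tau_1,\tau_2)$ and the factor $f_X(x_0)$ in the denominator; sandwiching by $\Gamma(\tau)^{-1}$ and extracting coordinates with $\iota_2,\iota_3$ yields exactly the stated covariance of $G_{cov}$, hence $\mathds{X}''_n\Rightarrow G_{cov}$ in $\ell^\infty(T\times\{2,3\})$. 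Finally, under Assumption~\ref{a:policy treatment} the map $(a,b)\mapsto (a-b)/(b'(x_0^+)-b'(x_0^-))$ is linear and continuous on $\ell^\infty(T\times\{2,3\})$, so the continuous mapping theorem gives $\sqrt{nh_{n,\tau}^3}(\widehat{QRKD}_{cov}(\tau)-QRKD_{cov}(\tau))\Rightarrow Y_{cov}$, exactly as Corollary~\ref{corollary:QRKD} follows from Theorem~\ref{theorem:weakconv1}.

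The main obstacle is the second step: proving the Bahadur representation \emph{uniformly} over $\tau\in T$ while the covariate coefficient $\gamma$ is profiled out. Unlike the running-variable polynomial terms, the covariate block enters the design globally rather than only through the shrinking window around $x_0$, so the stochastic-equicontinuity and maximal-inequality bookkeeping of Qu and Yoon (2015a) must be re-verified for the enlarged design; compactness of $supp(\mathbf{W})$ and smoothness of $\theta(\cdot)$ make this go through, but it is the one place where genuine work is required beyond transcribing the baseline proof.
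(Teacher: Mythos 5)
Your proposal is correct and follows essentially the route the paper intends: the paper omits this proof precisely because it is the covariate analogue of the proof of Theorem \ref{theorem:weakconv1} (uniform Bahadur representation \`a la Qu and Yoon (2015a) with the enlarged design vector $\tilde u$ and Hessian $\Gamma(\tau)$, finite-dimensional CLT using $\tau_1\wedge\tau_2-\tau_1\tau_2$, tightness via stochastic equicontinuity, bias killed by $nh_n^{2p+3}\to 0$) followed by the linear-map/delta-method step of Corollary \ref{corollary:QRKD}. Your sketch reproduces exactly this, and your closing observation that the covariate block requires re-verifying the maximal-inequality bookkeeping is the right place to flag the only genuinely new work.
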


\section{Simulation Studies}\label{sec:mc}

In this section, we report the performance of our causal inference methods using simulated data.
The main building blocks for the model consist of the policy function $b$, the outcome production function $g$, and the joint distribution of $(x,\varepsilon)$.
Consider the following policy function with a kink at $x_0 = 0$.
\begin{equation*}
b(x) = \begin{cases} -x & \text{if } x \leqslant 0 \\ x & \text{if } x > 0\end{cases}
\end{equation*}
For convenience of assessing the performance of our estimator for homogeneous treatment effects and heterogeneous treatment effects, we consider the following three outcome structures.
\begin{eqnarray*}
\text{Structure 0:} && g(b,x,\epsilon) = 0.0 b + 1.0 x + 0.1 x^2 + \epsilon
\\
\text{Structure 1:} && g(b,x,\epsilon) = 0.5 b + 1.0 x + 0.1 x^2 + \epsilon
\\
\text{Structure 2:} && g(b,x,\epsilon) = F_{\varepsilon | X=x_0} (\epsilon) b + 1.0 x + 0.1 x^2 + \epsilon
\end{eqnarray*}
where $F_{\varepsilon | X=x_0}$ denotes the conditional CDF of $\varepsilon$ given $X=x_0$.
Note that Structures 0 and 1 entail homogeneous treatment effects, while Structure 2 entails heterogeneous treatment effects across quantiles $\tau$ as follows.
\begin{eqnarray*}
\text{Structure 0:} && g_1(b,x,Q_{\varepsilon | X=x_0}(\tau)) = 0.0
\\
\text{Structure 1:} && g_1(b,x,Q_{\varepsilon | X=x_0}(\tau)) = 0.5
\\
\text{Structure 2:} && g_1(b,x,Q_{\varepsilon | X=x_0}(\tau)) = \tau
\end{eqnarray*}
To allow for endogeneity, we generate the primitive data according to
$$\left(\begin{array}{c} x_i \\ \varepsilon_i \end{array}\right) \stackrel{i.i.d.}{\sim} N\left(\left(\begin{array}{c}0 \\ 0\end{array}\right),\left(\begin{array}{cc}\sigma_X^2 & \rho \sigma_X \sigma_\varepsilon \\ \rho \sigma_X \sigma_\varepsilon & \sigma_\varepsilon^2 \end{array}\right)\right),$$ 
where $\sigma_X = 1.0$ and $\sigma_\varepsilon = \rho = 0.5$.
For estimation, we use the tricube kernel function $K$ defined by
\begin{equation*}
K(u) = \frac{70}{81} \left( 1 - \abs{u}^3 \right)^3 \mathbbm{1}\{\abs{u} < 1\}.
\end{equation*}
We set $p=2$, and the bandwidths are selected with the choice rule based on the MSE minimization for local linear estimator -- see Appendix \ref{sec:guide_to_practice_bandwidth} for details.

Figure \ref{fig:montecarlo_structure12} shows simulated distributions of the QRKD estimates under Structure 1 (left) and Structure 2 (right).
The top row, the middle row, and the bottom row report results for the sample sizes of $N=1,000$, $2,000$, and $4,000$, respectively.
In each graph, the horizontal axis measures quantiles $\tau$, while the vertical axis measures the QRKD.
The true QRKD is indicated by solid gray lines.
Note that it is constant at $0.5$ in the left column for Structure 1, while it is increasing in $\tau$ in the right column for Structure 2.
The other broken curves indicate the 5-th, 10-th, 50-th, 90-th, and 95-th percentiles of the simulated distributions of the QRKD estimates based on Monte Carlo 2,500 iterations.
Observe that the displayed distribution shrinks for each structure at each quantile $\tau$ as the sample size $N$ increases.
The biases appear to be minor relative to the variances, which is consistent with our employment of the bias corrected estimation approach.

In order to more quantitatively analyze the finite sample pattern, we summarize some basic statistics for the simulated distributions in Table \ref{tab:montecarlo_structure12} for Structure 1 (top panel) and Structure 2 (bottom panel).
In each panel, the three column groups list the absolute biases ($\mid$Bias$\mid$), the standard deviations (SD), and the root mean squared errors (RMSE).
For each structure at each quantile $\tau$, we again observe that SD and RMSE decrease as the sample size $N$ increases.
The biases are minor relative to the variances.
These patterns are of course consistent with our previous discussions on Figure \ref{fig:montecarlo_structure12}.

\begin{figure}
	\caption{Simulated distributions of QRKD estimates.}
	\centering
	\begin{tabular}{cc}
		Structure 1; $N=1,000$ & Structure 2; $N=1,000$ \\
		\includegraphics[width=0.40\textwidth]{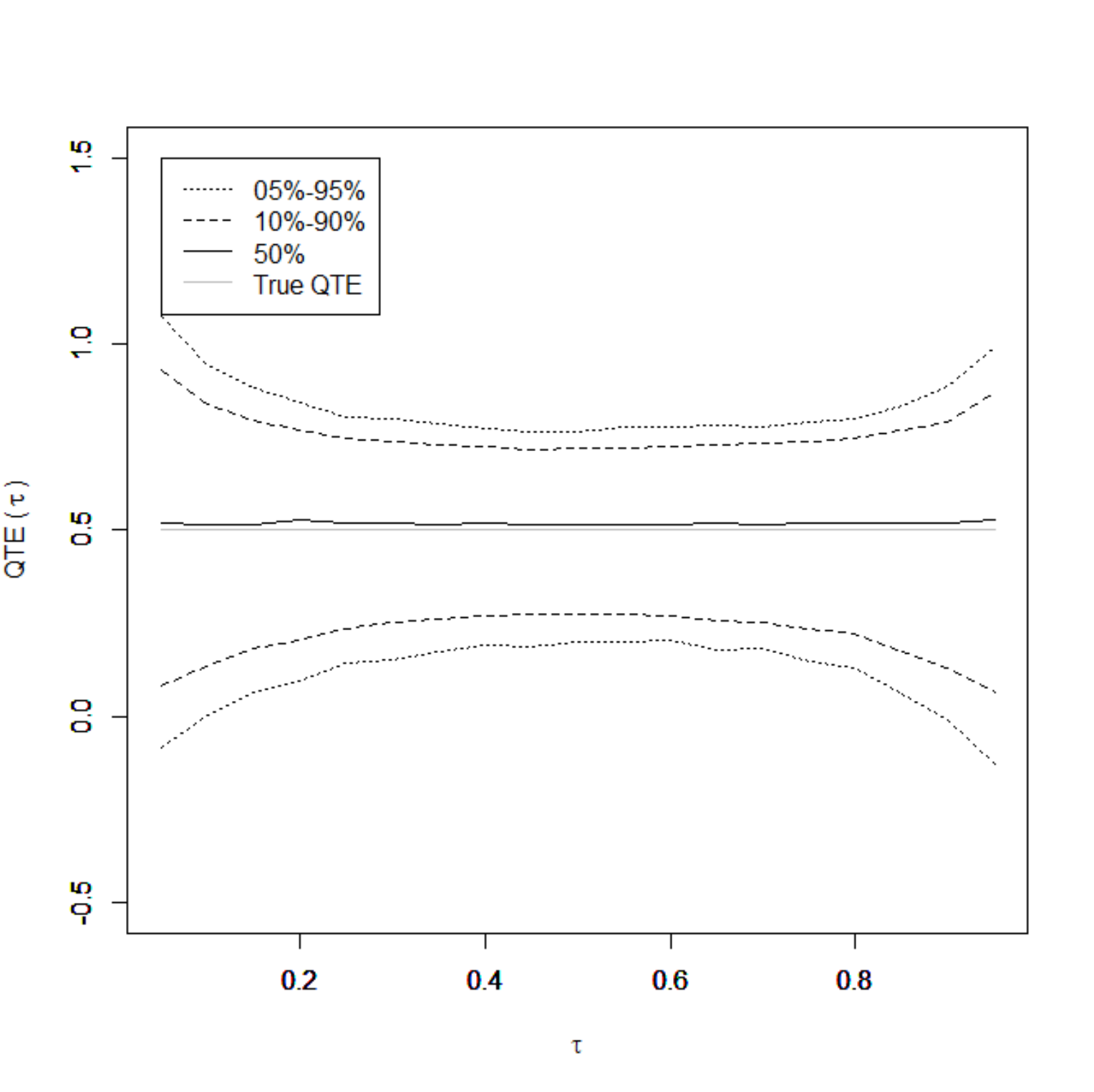} &
		\includegraphics[width=0.40\textwidth]{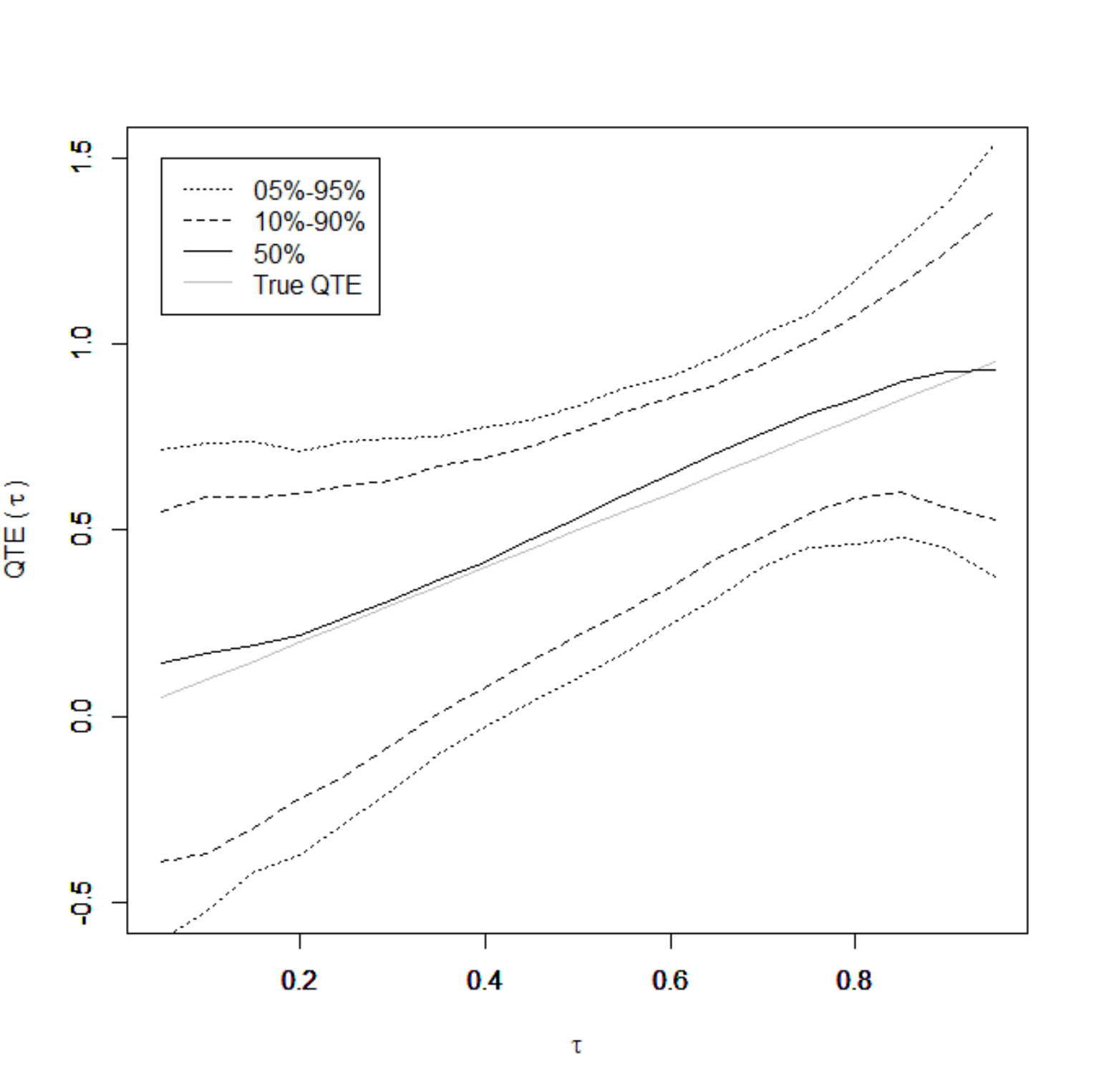}
		\\
		Structure 1; $N=2,000$ & Structure 2; $N=2,000$ \\
		\includegraphics[width=0.40\textwidth]{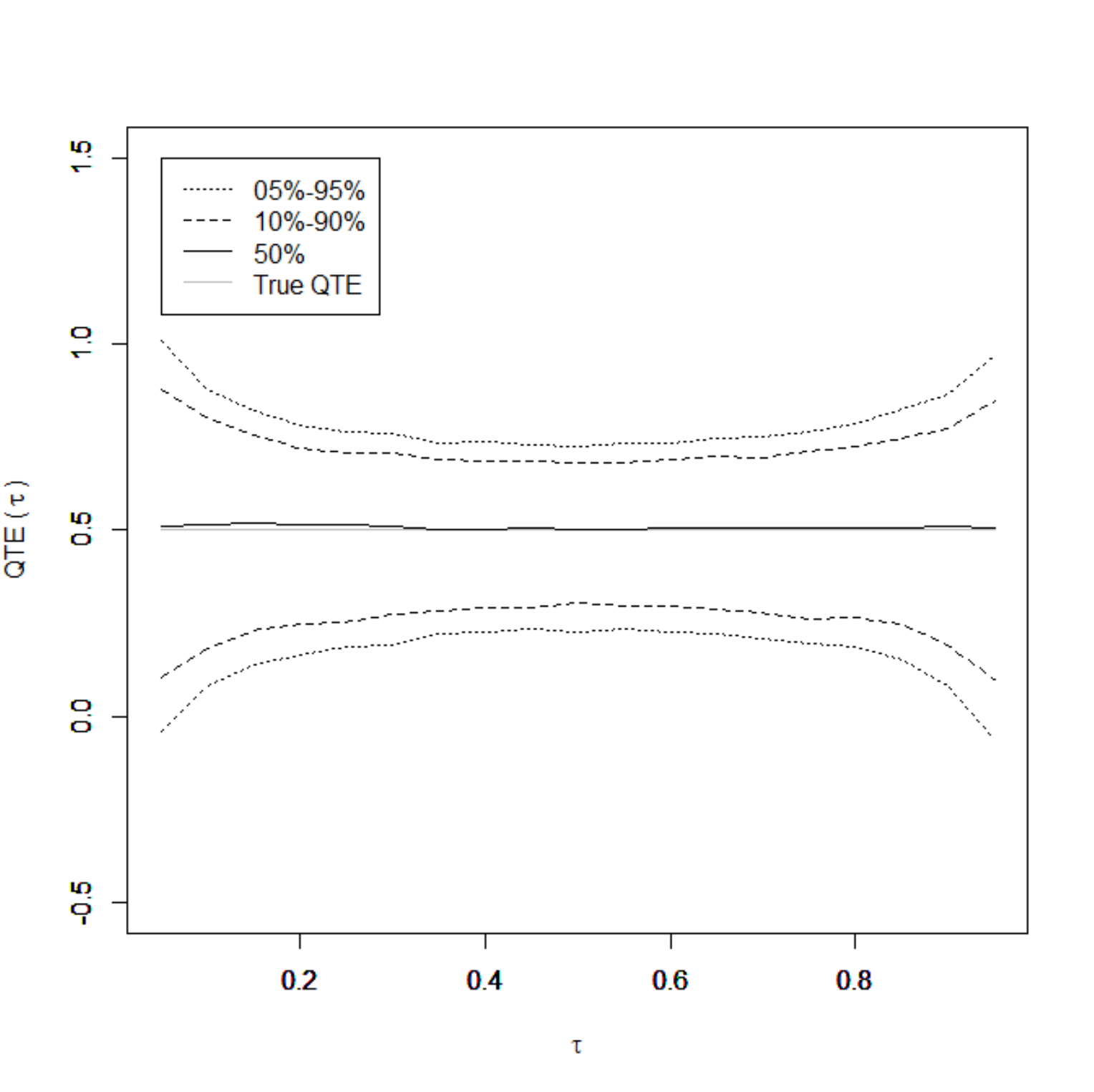} &
		\includegraphics[width=0.40\textwidth]{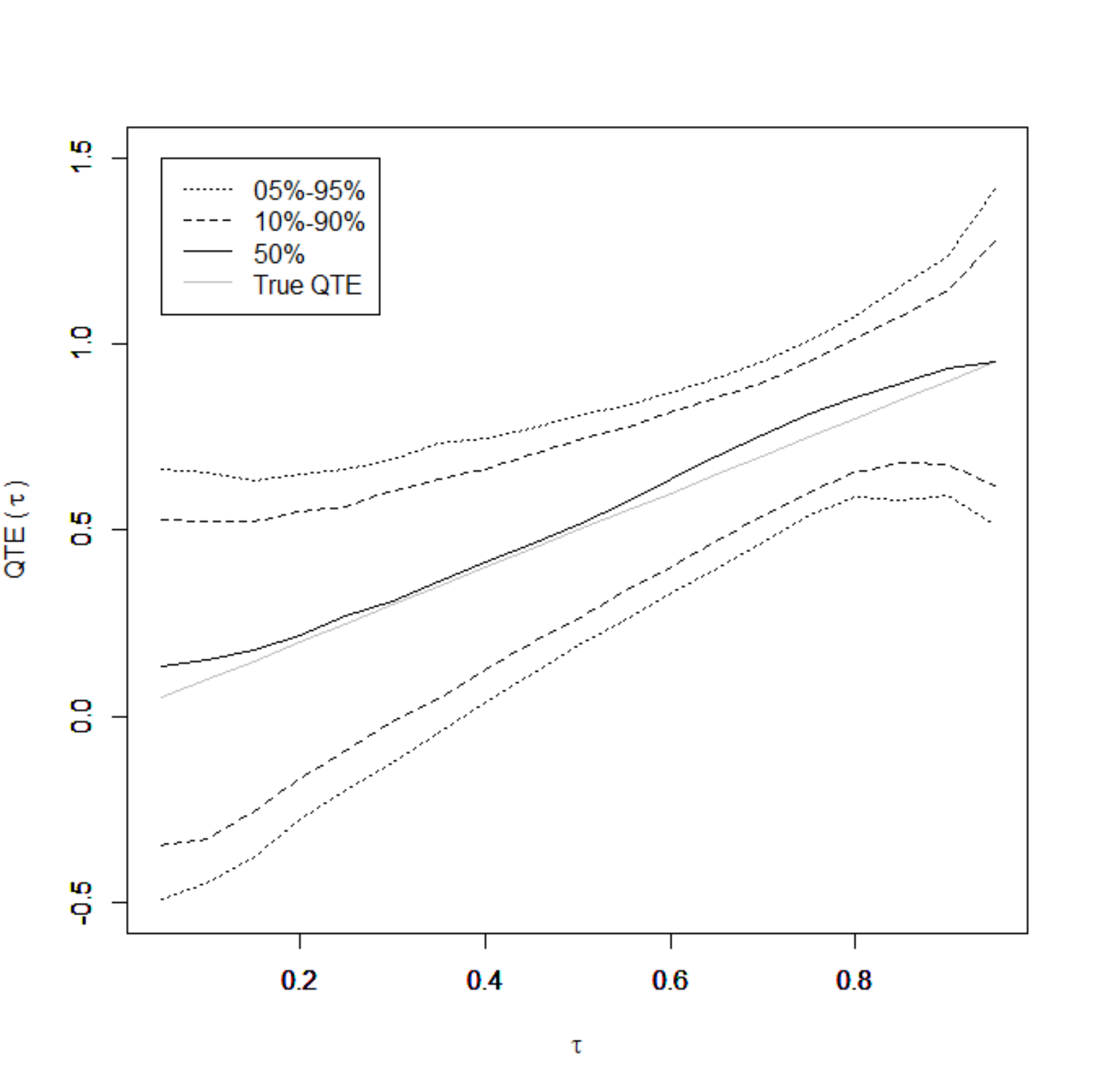}
		\\
		Structure 1; $N=4,000$ & Structure 2; $N=4,000$ \\
		\includegraphics[width=0.40\textwidth]{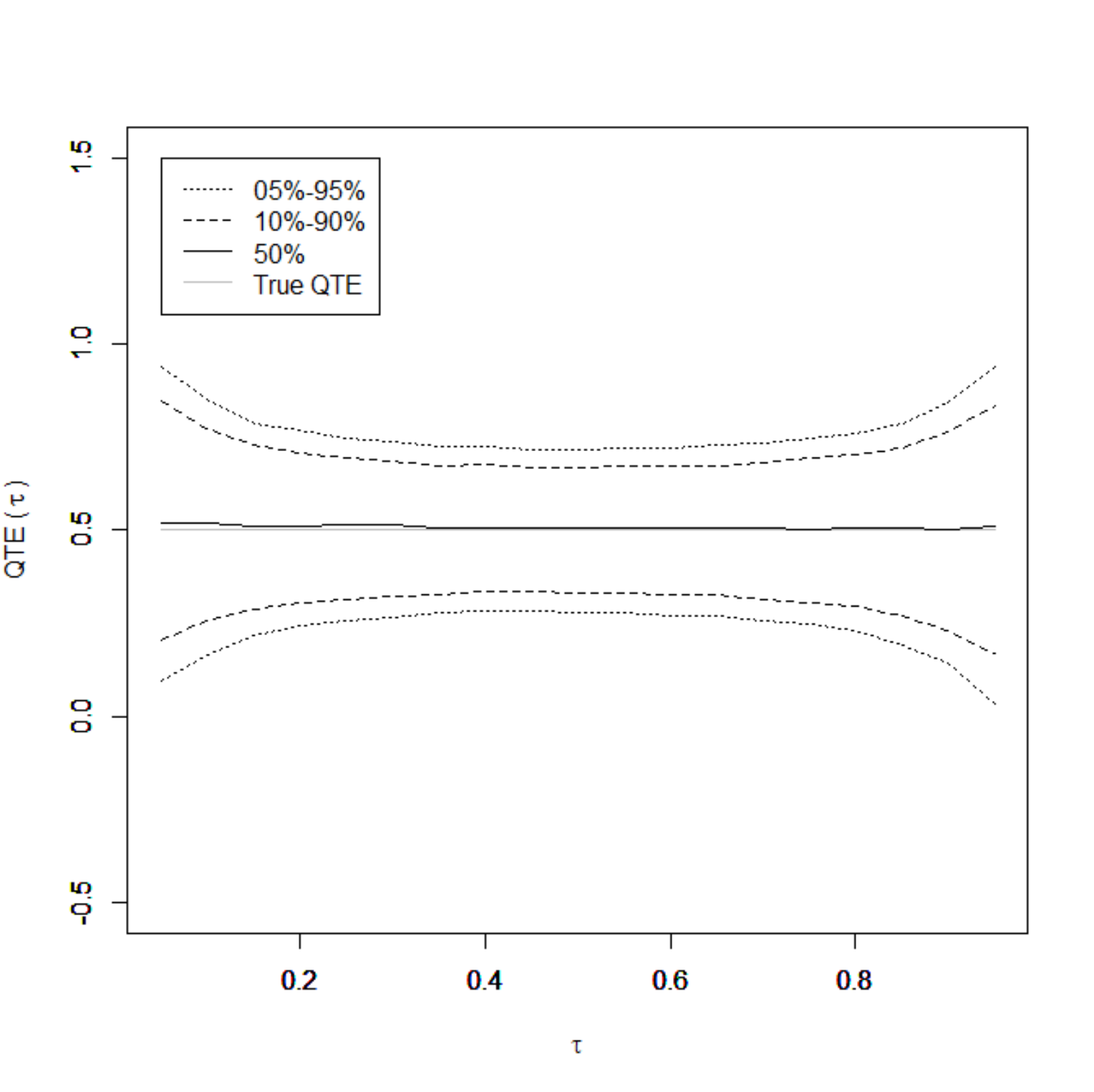} &
		\includegraphics[width=0.40\textwidth]{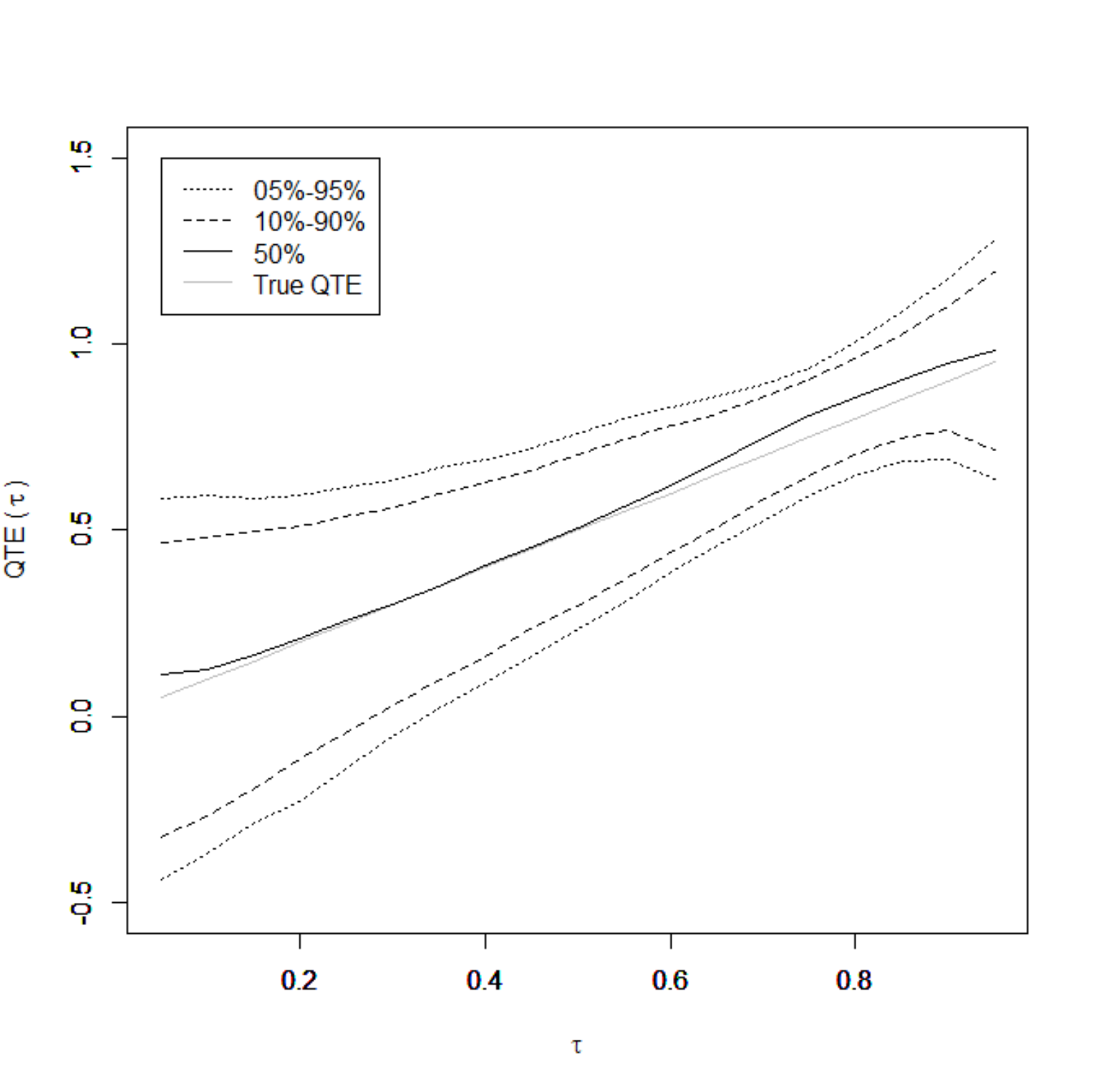}
	\end{tabular}
	\label{fig:montecarlo_structure12}
\end{figure}

\begin{table}
	\centering
		\begin{tabular}{rlllrlllrlll}
		\hline\hline
		\small Structure 1 & \multicolumn{3}{c}{$\mid$Bias$\mid$} && \multicolumn{3}{c}{SD} && \multicolumn{3}{c}{RMSE}\\
		\cline{2-4}\cline{6-8}\cline{10-12}
$N=$ &\small 1000	&\small 2000	&\small 4000	&	&\small 1000	&\small 2000	&\small 4000	&	&\small 1000	&\small 2000	&\small 4000\\
\cline{2-4}\cline{6-8}\cline{10-12}
$\tau=$ 0.10	&0.00 & 0.00 & 0.02 &  & 0.28 & 0.24 & 0.20 &  & 0.28 & 0.24 & 0.20 \\
$\tau=$ 0.20	&0.00 & 0.00 & 0.01 &  & 0.22 & 0.19 & 0.16 &  & 0.22 & 0.19 & 0.16 \\
$\tau=$ 0.30	&0.00 & 0.00 & 0.01 &  & 0.19 & 0.17 & 0.14 &  & 0.19 & 0.17 & 0.14 \\
$\tau=$ 0.40	&0.01 & 0.00 & 0.01 &  & 0.18 & 0.15 & 0.13 &  & 0.18 & 0.16 & 0.13 \\
$\tau=$ 0.50	&0.00 & 0.00 & 0.00 &  & 0.18 & 0.15 & 0.13 &  & 0.18 & 0.15 & 0.13 \\
$\tau=$ 0.60	&0.00 & 0.00 & 0.00 &  & 0.18 & 0.16 & 0.14 &  & 0.18 & 0.16 & 0.14 \\
$\tau=$ 0.70	&0.00 & 0.00 & 0.00 &  & 0.19 & 0.17 & 0.14 &  & 0.19 & 0.17 & 0.14 \\
$\tau=$ 0.80	&0.00 & 0.00 & 0.00 &  & 0.21 & 0.18 & 0.16 &  & 0.21 & 0.18 & 0.16 \\
$\tau=$ 0.90	&0.01 & 0.00 & 0.00 &  & 0.28 & 0.24 & 0.21 &  & 0.28 & 0.24 & 0.21 \\
\hline
		\small Structure 2 & \multicolumn{3}{c}{$\mid$Bias$\mid$} && \multicolumn{3}{c}{SD} && \multicolumn{3}{c}{RMSE}\\
		\cline{2-4}\cline{6-8}\cline{10-12}
$N=$ &\small 1000	&\small 2000	&\small 4000	&	&\small 1000	&\small 2000	&\small 4000	&	&\small 1000	&\small 2000	&\small 4000\\
\cline{2-4}\cline{6-8}\cline{10-12}
$\tau=$ 0.10	&0.04 & 0.03 & 0.02 &  & 0.38 & 0.34 & 0.29 &  & 0.38 & 0.34 & 0.29\\
$\tau=$ 0.20	&0.00 & 0.00 & 0.00 &  & 0.33 & 0.28 & 0.24 &  & 0.33 & 0.28 & 0.24\\
$\tau=$ 0.30	&0.00 & 0.00 & 0.00 &  & 0.28 & 0.25 & 0.21 &  & 0.28 & 0.25 & 0.21\\
$\tau=$ 0.40	&0.00 & 0.00 & 0.00 &  & 0.25 & 0.22 & 0.18 &  & 0.25 & 0.22 & 0.18\\
$\tau=$ 0.50	&0.01 & 0.01 & 0.00 &  & 0.22 & 0.19 & 0.16 &  & 0.23 & 0.19 & 0.16\\
$\tau=$ 0.60	&0.02 & 0.02 & 0.02 &  & 0.20 & 0.17 & 0.14 &  & 0.21 & 0.17 & 0.14\\
$\tau=$ 0.70	&0.04 & 0.03 & 0.03 &  & 0.19 & 0.15 & 0.11 &  & 0.19 & 0.15 & 0.12\\
$\tau=$ 0.80	&0.04 & 0.04 & 0.04 &  & 0.21 & 0.15 & 0.11 &  & 0.21 & 0.16 & 0.12\\
$\tau=$ 0.90	&0.02 & 0.02 & 0.04 &  & 0.28 & 0.20 & 0.15 &  & 0.28 & 0.20 & 0.15\\
		\hline\hline
		\end{tabular}
	\caption{Simulated finite-sample statistics of the QRKD estimates.}
	\label{tab:montecarlo_structure12}
\end{table}

Finally, we present uniform inference results using the techniques introduced in Section \ref{sec:test}.
Figure \ref{fig:mc_inference} shows acceptance probabilities for the 95\% level uniform test of significance (panel A) and the 95\% level uniform test of heterogeneity (panel B) based on 2,500 iterations.
Panel A shows that the acceptance probability for the test of the null hypothesis of insignificance converges to the nominal probability 95\% for Structure 0, while the acceptance probability decreases toward zero as the sample size increase for each of Structure 1 and Structure 2.
These results are consistent with the construction of Structure 0, Structure 1, and Structure 2.
Structure 0 exhibits the uniform zero QRKD, while neither of Structure 1 nor Structure 2 has the uniform zero QRKD.
Panel B shows that the acceptance probability for the test of the null hypothesis of homogeneity converges to the nominal probability 95\% for Structure 0 and Structure 1, while the acceptance probability decreases toward zero as the sample size increases for Structure 2.
These results are again consistent with the construction of Structure 0, Structure 1, and Structure 2.
Each of Structure 0 and Structure 1 exhibits a constant QRKD across $\tau$, while Structure 2 has non-constant QRKD across $\tau$.

\begin{figure}
	\centering
	\begin{tabular}{cc}
		\multicolumn{2}{c}{(A) Acceptance Probabilities for the 95\% Level Test of Significance}\\
		Without Standardization & With Standardization \\
		\includegraphics[width=0.5\textwidth]{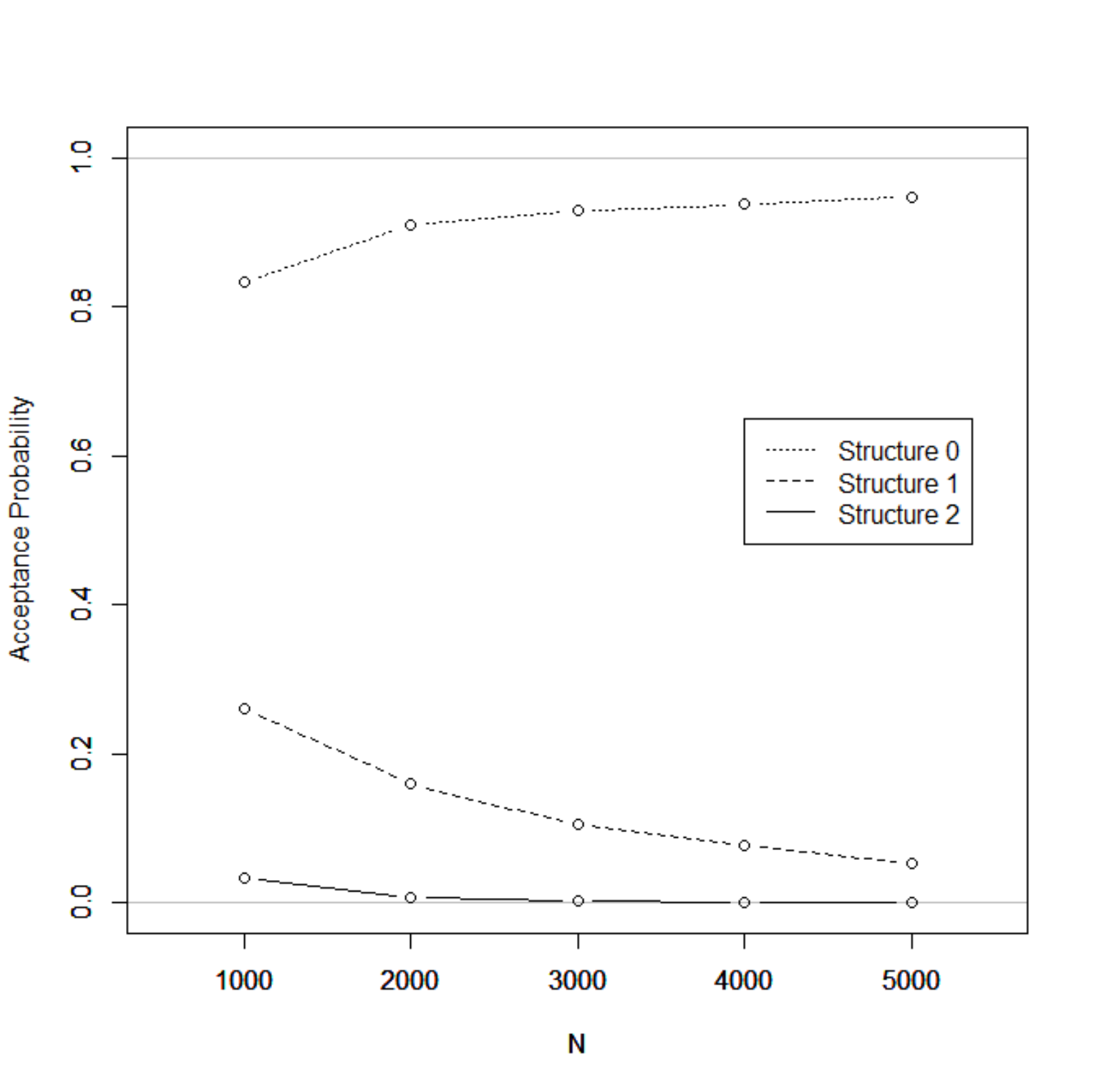} &
		\includegraphics[width=0.5\textwidth]{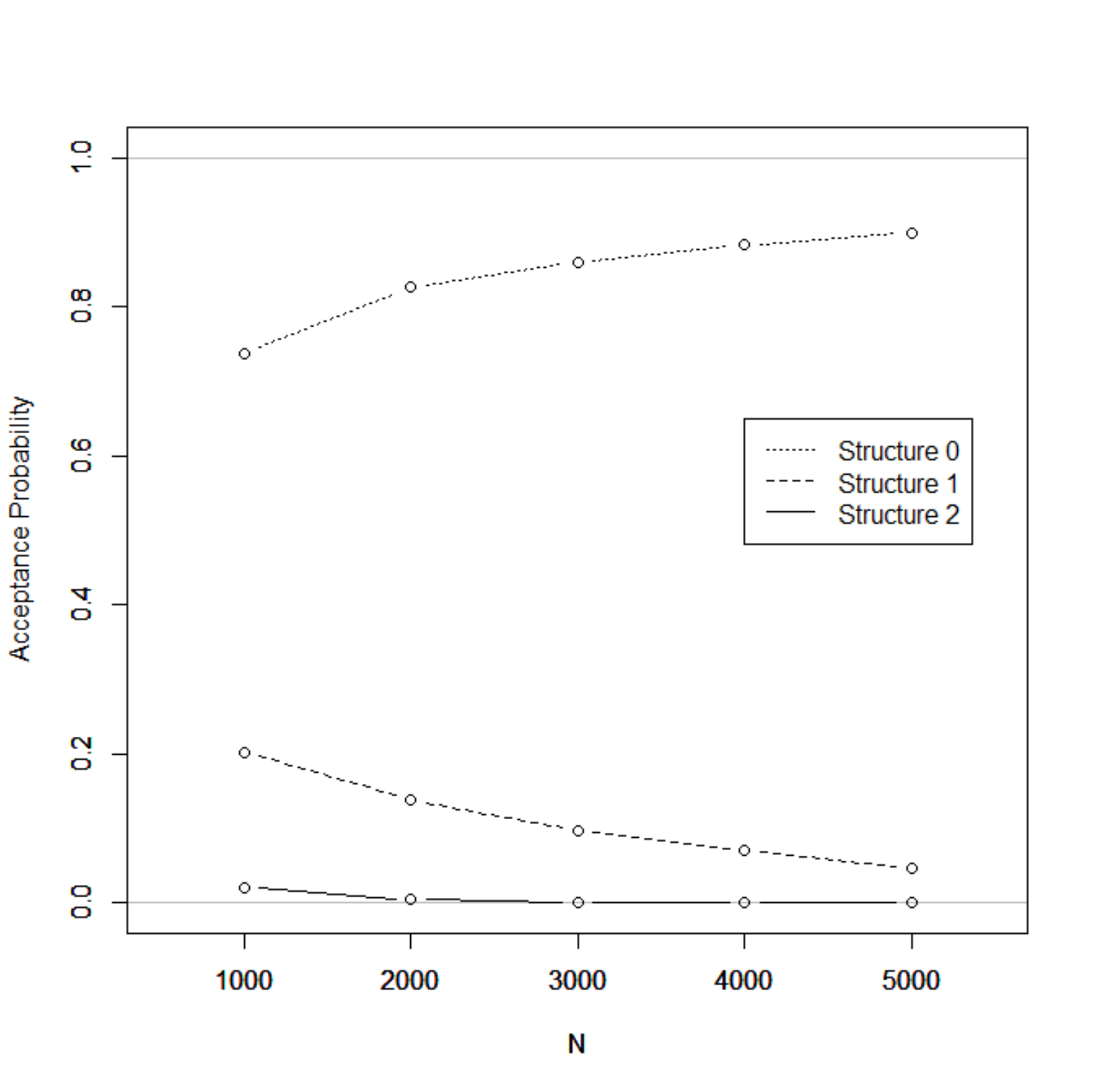} \\
		\multicolumn{2}{c}{(B) Acceptance Probabilities for the 95\% Level Test of Heterogeneity}\\
		Without Standardization & With Standardization \\
		\includegraphics[width=0.5\textwidth]{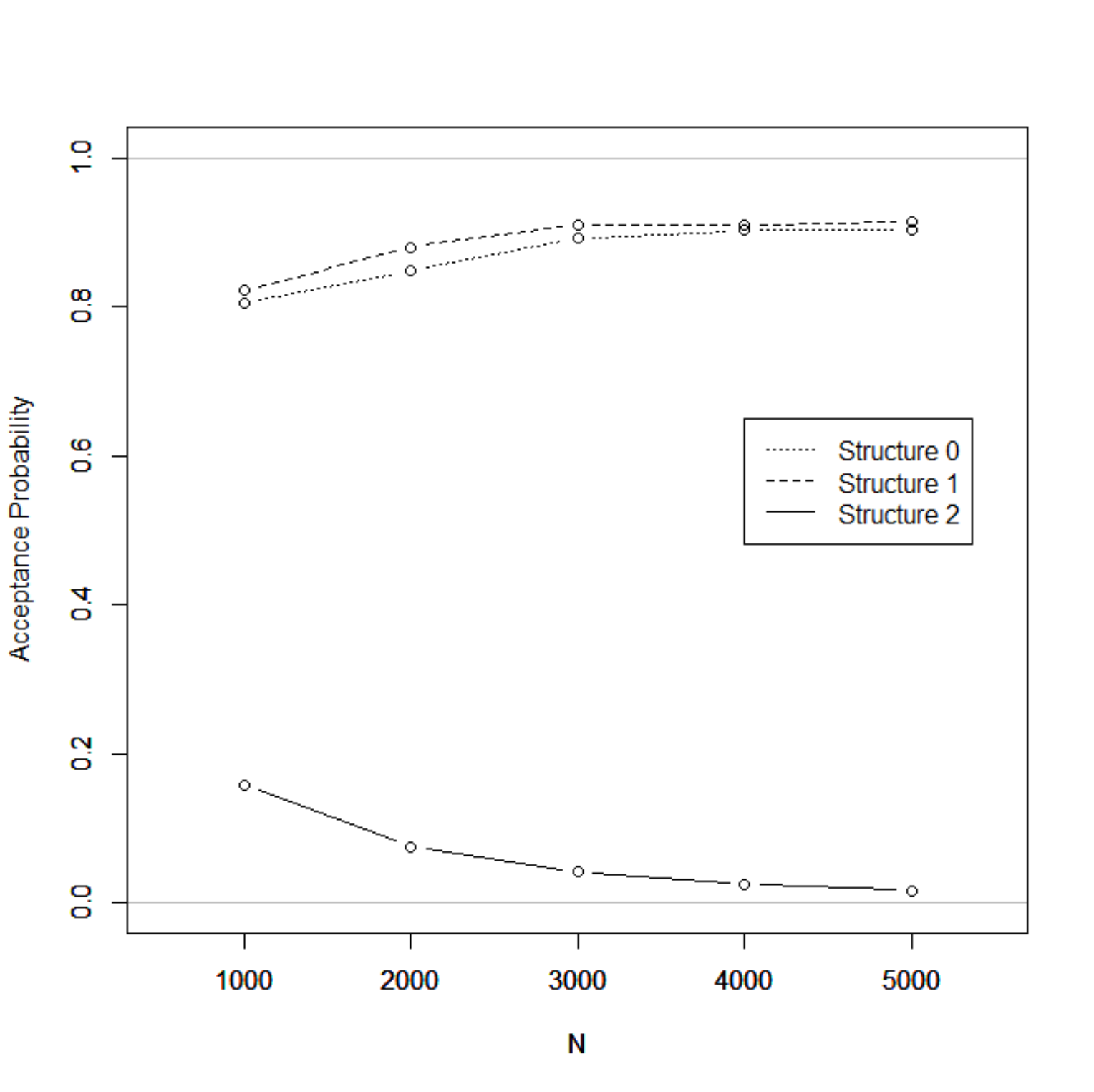} &
		\includegraphics[width=0.5\textwidth]{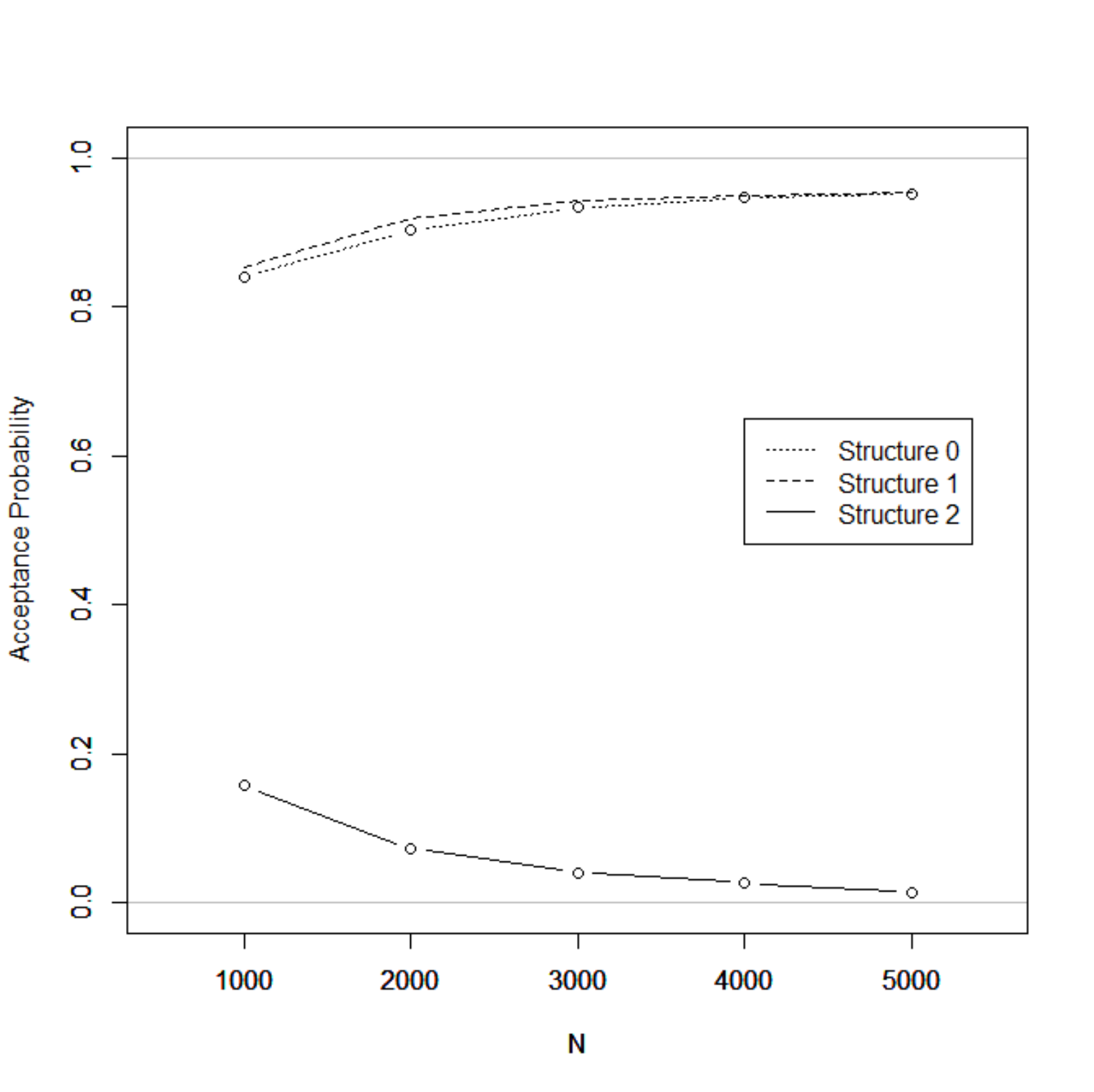}
	\end{tabular}
	\caption{Acceptance probabilities for the 95\% level uniform test of significance (panel A) and the 95\% level uniform test of heterogeneity (panel B) based on 2,500 replications.}
	\label{fig:mc_inference}
\end{figure}

\section{An Empirical Illustration}

In labor economics, causal effects of the unemployment insurance (UI) benefits on the duration of unemployment are of interest from policy perspectives.
Landais (2015) proposes an empirical strategy using the RKD to identify the causal effects of UI on the duration.
Using the data set of the Continuous Wage and Benefit History Project (CWBH -- see Moffitt, 1985), Landais estimates the effects of benefit amounts on the duration of unemployment.
In this section, we apply our QRKD methods, and aim to discover potential heterogeneity in these causal effects.
Using quantiles in this application also has an advantage of informing a likely direction of the selection bias of the mean RKD estimator that stems from not observing the mass of employed individuals at the low quantile ($y=0$).

In all of the states in the United States, a compensated unemployed individual receives a weekly benefit amount $b$ that is determined as a fraction $\tau_1$ of his or her highest earning quarter $x$ in the base period (the last four completed calendar quarters immediately preceding the start of the claim) up to a fixed maximum amount $b_{max}$, i.e. $b=\min \{\tau_1 \cdot x,\ b_{max}\}$.
The both parameters, $\tau_1$ and $b_{max}$, of the policy rule vary from state to state.
Furthermore, the ceiling level $b_{max}$ changes over time within a state.
For these reasons, empirical analysis needs to be conducted for each state for each restricted time period.
The potential duration of benefits is determined in a somewhat more complicated manner.
Yet, it also can be written as a piecewise linear and kinked function of a fraction of a running variable $x$ in the CWBH data set.

Following Landais (2015), we make our QRKD empirical illustration by using the CWBH data for Louisiana.
The data cleaning procedure is conducted in the same manner as in Landais.
As a result of the data processing, we obtain the same descriptive statistics (up to deflation) as those in Landais for those variables that we use in our analysis.
For the dependent variable $y$, we consider both the claimed number of weeks of UI and the actually paid number of weeks.
For the running variable $x$, we use the highest quarter wage in the based period.
The treatment intensity $b$ is computed by using the formula $b(x)=\min\{ (1/25) \cdot x , \ b_{max} \}$, with a kink where the maximum amount is $b_{max} = \$4,575$ for the period between September 1981 and September 1982 and $b_{max} = \$5,125$ for the period between September 1982 and December 1983.

Table \ref{tab:empirical1981} summarizes empirical results for the time period between September 1981 and September 1982.
Table \ref{tab:empirical1982} summarizes empirical results for the time period between September 1982 and December 1983.
In each table, we display the RKD results by Landais (2015) for a reference.
In the following rows, the QRKD estimates are reported with respective standard errors in parentheses for quantiles $\tau \in \{0.10, \cdots, 0.90\}$.
At the bottom of each table, we report the p-values for the test of significance and the test of heterogeneity.

\begin{table}
	\centering
		September 1981 -- September 1982
		\begin{tabular}{lccccccc}
			\hline\hline
			Dependent Variable &&& \multicolumn{2}{c}{UI Claimed} && \multicolumn{2}{c}{UI Paid} \\
			\hline
			RKD (Landais, 2015) &&& 0.038 & (0.009) && 0.040 & (0.009)\\
			\hline
			QRKD & $\tau=0.10$   && 0.000 & (0.010) && 0.022 & (0.008)\\
			     & $\tau=0.20$   && 0.037 & (0.011) && 0.036 & (0.011)\\
			     & $\tau=0.30$   && 0.053 & (0.012) && 0.060 & (0.011)\\
			     & $\tau=0.40$   && 0.070 & (0.013) && 0.070 & (0.012)\\
			     & $\tau=0.50$   && 0.081 & (0.014) && 0.080 & (0.013)\\
			     & $\tau=0.60$   && 0.093 & (0.015) && 0.089 & (0.016)\\
			     & $\tau=0.70$   && 0.086 & (0.015) && 0.068 & (0.012)\\
			     & $\tau=0.80$   && 0.154 & (0.024) && 0.142 & (0.022)\\
			     & $\tau=0.90$   && 0.145 & (0.017) && 0.159 & (0.016)\\
			\hline
			Test of Significance & $p$-Value && \multicolumn{2}{c}{0.000} && \multicolumn{2}{c}{0.000}\\
			Standardized Test of Significance & $p$-Value && \multicolumn{2}{c}{0.000} && \multicolumn{2}{c}{0.000}\\
			Test of Heterogeneity& $p$-Value && \multicolumn{2}{c}{0.000} && \multicolumn{2}{c}{0.000}\\
			Standardized Test of Heterogeneity& $p$-Value && \multicolumn{2}{c}{0.000} && \multicolumn{2}{c}{0.000}\\
			\hline\hline
		\end{tabular}
	\caption{Empirical estimates and inference for the causal effects of UI benefits on unemployment durations based on the RKD and QRKD. The period of data is from September  1981 to September 1982. The numbers in parentheses indicate standard errors.}
	\label{tab:empirical1981}
\end{table}

\begin{table}
	\centering
		September 1982 -- December 1983
		\begin{tabular}{lccccccc}
			\hline\hline
			Dependent Variable &&& \multicolumn{2}{c}{UI Claimed} && \multicolumn{2}{c}{UI Paid} \\
			\hline
			RKD (Landais, 2015) &&& 0.046 & (0.006) && 0.042 & (0.006)\\
			\hline
			QRKD & $\tau=0.10$   && 0.030 & (0.014) && 0.029 & (0.014)\\
			     & $\tau=0.20$   && 0.067 & (0.019) && 0.066 & (0.019)\\
			     & $\tau=0.30$   && 0.083 & (0.019) && 0.082 & (0.021)\\
			     & $\tau=0.40$   && 0.091 & (0.021) && 0.085 & (0.023)\\
			     & $\tau=0.50$   && 0.112 & (0.016) && 0.118 & (0.017)\\
			     & $\tau=0.60$   && 0.072 & (0.021) && 0.075 & (0.020)\\
			     & $\tau=0.70$   && 0.094 & (0.016) && 0.100 & (0.020)\\
			     & $\tau=0.80$   && 0.026 & (0.014) && 0.032 & (0.015)\\
			     & $\tau=0.90$   && 0.065 & (0.034) && 0.068 & (0.037)\\
			\hline
			Test of Significance & $p$-Value && \multicolumn{2}{c}{0.002} && \multicolumn{2}{c}{0.005}\\
			Standardized Test of Significance & $p$-Value && \multicolumn{2}{c}{0.000} && \multicolumn{2}{c}{0.000}\\
			Test of Heterogeneity& $p$-Value && \multicolumn{2}{c}{0.140} && \multicolumn{2}{c}{0.142}\\
			Standardized Test of Heterogeneity& $p$-Value && \multicolumn{2}{c}{0.000} && \multicolumn{2}{c}{0.000}\\
			\hline\hline
		\end{tabular}
	\caption{Empirical estimates and inference for the causal effects of UI benefits on unemployment durations based on the RKD and QRKD. The period of data is from September  1982 to December 1983. The numbers in parentheses indicate standard errors.}
	\label{tab:empirical1982}
\end{table}

Observe the following patterns in these result tables.
First, the estimated causal effects have positive signs throughout all the quantiles but for one ($\tau=0.10$ in Table \ref{tab:empirical1981}), implying that higher benefit amounts cause longer unemployment durations consistently across the outcome levels.
Second, these causal effects are smaller at lower quantiles (e.g., $\tau = 0.10$), while they are larger at middle and higher quantiles.
This pattern implies that unemployed individuals who have longer unemployment durations tend to have larger unemployment elasticities with respect to benefit levels.
The extent of this increase of the causal effects in quantiles is more prominent for the results in Table \ref{tab:empirical1981} (1981--1982) than in Table \ref{tab:empirical1982} (1982--1983).\footnote{We remark that the qualitative differences in the results that we find between the non-recession period (1981--1982) and the recession period (1982--1983) can be perhaps useful for telling apart the two potential routes of the causal effects, namely the moral hazard and liquidity effects.}
Under the assumption of rank invariance, this result unambiguously implies that the causal effects are heterogeneous.
Without the rank invariance, one may want to argue that the heterogeneous quantile treatment effects can be attributed to just heterogeneous weights even without nonseparable heterogeneity.
However, in the absence of nonseparability, the weights would be also constant.
Hence, our results show that there is nonseparable heterogeneity in the causal structure even without the rank invariance.
Third, the causal effects are very similar between the results for claimed UI as the outcome and the results for paid UI as the outcome variable.
The respective standard errors are almost the same between these two outcome variables, but they are not exactly the same.
Fourth, the uniform tests show that the causal effects are significantly different from zero for the both time periods.
Lastly, the uniform tests show that the causal effects are also significantly heterogeneous for the both time periods.
Indeed, the heterogeneity is insignificant in Table \ref{tab:empirical1982} (1982--1983) according to the non-standardized test statistics, but it is significant according to the standardized ones.

\section{Summary}

Economists have taken advantage of policy irregularities to assess causal effects of endogenous treatment intensities.
A new approach along this line is the regression kink design (RKD) used by recent empirical papers, including Nielsen, S\o rensen and Taber (2010), Landais (2015), Simonsen, Skipper and Skipper (2015), Card, Lee, Pei and Weber (2016), and Dong (2016).
While the prototypical framework is only able to assess the average treatment effect at the kink point, inference for heterogeneous treatment effects using the RKD is of potential interest by empirical researchers (e.g., Landais (2011) considers it).
In this light, this paper develops causal analysis and methods of inference for the quantile regression kink design (QRKD).

We first develop causal interpretations of the QRKD estimand.
It is shown that the QRKD estimand measures the marginal effect of the treatment variable on the outcome variable at the conditional quantile of the outcome given the design point of the running variable provided that the causal structure exhibits rank invariance.
This result is generalized to the case of no rank invariance, where the QRKD estimand is shown to measure a weighted average of the marginal effects of the treatment variable on the outcome variable at the conditional quantile of the outcome given the design point of the running variable. 
Second, we propose a sample counterpart QRKD estimator, and develop its asymptotic properties for statistical inference of heterogeneous treatment effects. Under some extra assumptions, a variation of the QRKD estimand that accounts for covariates is also provided. 
We obtain weak convergence results for the QRKD estimators.
Applying the weak convergence results, we propose procedures for statistical tests of treatment significance and treatment heterogeneity.
Simulation studies support our theoretical results.
Applying our methods to the Continuous Wage and Benefit History Project (CWBH) data, we find significantly heterogeneous causal effects of unemployment insurance benefits on unemployment durations in the state of Louisiana for the period between September 1981 and December 1983.


\appendix
\section{Mathematical Appendix}\label{sec:math_appendix}

In this appendix, we abbreviate $Q_{Y|X}$ to $Q$.
We also use short-hand notations $d^+_u=\mathds{1}\{u>0\}$, $d^-_u=\mathds{1}\{u<0\}$, $K_{i,n,\tau}=K(\frac{x_i-x_0}{h_{n,\tau}})$, and
$
z_{i,n,\tau}=\Big[1,(\frac{x_i-x_0}{h_{n,\tau}})d^+_i,(\frac{x_i-x_0}{h_{n,\tau}})d^-_i,...,(\frac{x_i-x_0}{h_{n,\tau}})^p d^+_i,(\frac{x_i-x_0}{h_{n,\tau}})^p d^-_i \Big]'.
$

\subsection{Uniform Bahadur Representation}\label{sec:lemma:bahadur}

The following lemma states the uniform Bahadur representation from Qu and Yoon (2015a, Theorem 1) adapted to our framework.

\begin{lemma}\label{lemma:bahadur}
Under Assumption \ref{a:consistency}, we have
\begin{align*}
&\sqrt{nh^3_{n,\tau}}\Big( \hat \beta^+_1(\tau) - \frac{\partial Q(\tau|x^+_0)}{\partial x}  - h^p_{n,\tau}  \frac{\iota'_2 (N)^{-1}}{(p+1)!} \int_{\mathds{R}} \Big( \frac{\partial^{p+1} Q(\tau|x^+_0)}{\partial x^{p+1}}d^+_u + \frac{\partial^{p+1} Q(\tau|x^-_0)}{\partial x^{p+1}}d^-_u \Big)   u^{p+1}    \bar u K(u) du
\Big)\\
=&\frac{\iota'_2 N^{-1} \sum_{i=1}^{n} z_{i,n,\tau} K_{i,n,\tau} (\tau - \mathds{1}\{ y_i \le Q(\tau|x_i) \}) }{\sqrt{nh_{n,\tau}}f_X(x_0)f_{Y|X}(Q(\tau|x_0)|x_0)} +o_p(1) \qquad\text{and}\\
&\sqrt{nh^3_{n,\tau}}\Big( \hat \beta^-_1(\tau) - \frac{\partial Q(\tau|x^-_0)}{\partial x} 
- h^p_{n,\tau}  \frac{\iota'_3 (N)^{-1}}{(p+1)!} \int_{\mathds{R}} \Big( \frac{\partial^{p+1} Q(\tau|x^+_0)}{\partial x^{p+1}}d^+_u + \frac{\partial^{p+1} Q(\tau|x^-_0)}{\partial x^{p+1}}d^-_u \Big) u^{p+1}    \bar u K(u) du\Big)\\
=&\frac{\iota'_3 N^{-1} \sum_{i=1}^{n} z_{i,n,\tau} K_{i,n,\tau} (\tau - \mathds{1}\{ y_i \le Q(\tau|x_i) \}) }{\sqrt{nh_{n,\tau}}f_X(x_0)f_{Y|X}(Q(\tau|x_0)|x_0)}+o_p(1)
\end{align*}
uniformly in $\tau \in T.$
\end{lemma}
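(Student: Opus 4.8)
\emph{Proof proposal.} The plan is to adapt the argument behind Qu and Yoon (2015a, Theorem 1) to the present two-sided (kink-type) local polynomial specification, checking that Assumption \ref{a:consistency} supplies the needed ingredients and tracking the explicit $p$-th order bias term. First I would reparametrize: stacking $\theta=(\alpha,\beta^+_1,\beta^-_1,\dots,\beta^+_p,\beta^-_p)'$ and letting $\theta_0(\tau)$ collect the one-sided Taylor coefficients of $Q(\tau\mid\cdot)$ at $x_0$ up to order $p$, rescale the deviation as $\delta=D_{n,\tau}(\theta-\theta_0(\tau))$ with $D_{n,\tau}=\mathrm{diag}(1,h_{n,\tau},h_{n,\tau},\dots,h^p_{n,\tau},h^p_{n,\tau})$ (up to factorials). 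The weighted check-function objective $\sum_i K_{i,n,\tau}\rho_\tau(\cdot)$ is convex in $\theta$, hence in $\delta$; by the standard convexity/epi-convergence argument used in Qu and Yoon (2015a) it then suffices to show that the normalized centred objective converges, uniformly in $\tau\in T$, to a strictly convex quadratic whose minimizer is the right-hand side of the asserted representation, after which one projects with $\iota_2'$ and $\iota_3'$ and undoes the scaling $D_{n,\tau}$ to read off the two displays.

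For the deterministic part I would take conditional expectations and Taylor-expand $Q(\tau\mid x)$ about $x_0$ to order $p+1$, separately for $x>x_0$ and $x<x_0$, using the Lipschitz smoothness of $\partial^v Q(\tau\mid\cdot)$ in Assumption \ref{a:consistency}(iii). The expected normalized gradient at $\delta$ equals $f_X(x_0)\,f_{Y\mid X}(Q(\tau\mid x_0)\mid x_0)\,N\,\delta$ plus the $(p+1)$-st order Taylor remainder, and after the $D_{n,\tau}$ scaling this remainder produces precisely the displayed $h^p_{n,\tau}$ bias term, with a residual that is $o(h^p_{n,\tau})$ uniformly in $\tau$ by Lipschitz continuity and the rate $nh^{2p+3}_n\to0$ of Assumption \ref{a:consistency}(v); continuity of $f_X$ near $x_0$ and positivity/continuity of $f_{Y\mid X}$ along the conditional quantile curve (Assumption \ref{a:consistency}(ii)), together with positive definiteness of $N$ (Assumption \ref{a:consistency}(iv)), give the invertible quadratic form uniformly in $\tau$. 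The leading stochastic term is the one displayed, $\frac{1}{\sqrt{nh_{n,\tau}}}\sum_i z_{i,n,\tau}K_{i,n,\tau}(\tau-\mathds{1}\{y_i\le Q(\tau\mid x_i)\})$.

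The crux is then to show that the centred subgradient at $\delta$ differs from this leading term by $o_p(1)$, uniformly over bounded $\delta$ and over $\tau\in T$. This is a stochastic equicontinuity statement for a triangular array: the class of functions $(x,y)\mapsto K(\tfrac{x-x_0}{h_{n,\tau}})\big(\mathds{1}\{y\le Q(\tau\mid x)+\text{(local-polynomial perturbation in }\delta)\}-\mathds{1}\{y\le Q(\tau\mid x)\}\big)$, indexed by $\tau\in T$ and $\|\delta\|\le C$, is of VC type, and its increments have variance shrinking fast enough that a Bernstein-type maximal inequality, combined with $nh_n^3\to\infty$, yields the bound after dividing by $\sqrt{nh_{n,\tau}}$. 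Here I would use that $\tau\mapsto Q(\tau\mid x)$ is monotone and Lipschitz (Assumption \ref{a:consistency}(iii)), that $f_{Y\mid X}$ is bounded above and below near the quantile (Assumption \ref{a:consistency}(ii)(b)), and that $K$ is Lipschitz and compactly supported (Assumption \ref{a:consistency}(iv)); the Lipschitz boundedness of $c(\cdot)$ away from $0$ and $\infty$ (Assumption \ref{a:consistency}(v)) ensures the $\tau$-varying bandwidth $h_{n,\tau}=c(\tau)h_n$ does not disturb the entropy bound. Solving the limiting quadratic's first-order condition then gives $\hat\delta(\tau)=N^{-1}\cdot(\text{normalized score})+(\text{bias})+o_p(1)$ uniformly in $\tau$, and extracting the coordinates via $\iota_2'$ and $\iota_3'$ yields the two representations. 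I expect the main obstacle to be exactly this uniform-in-$\tau$ equicontinuity step: controlling a triangular array of independent summands whose localized laws drift with $n$, establishing the entropy bound for the doubly-indexed class, and checking that the increment variances vanish at a rate compatible with $nh_n^3\to\infty$ — and, within that, verifying that the kink-type design matrix $N=\int\bar u\bar u'K(u)du$ (in place of the usual interior-point Gram matrix) meets the nondegeneracy conditions required by Qu and Yoon (2015a).
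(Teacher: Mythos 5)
Your proposal is correct and follows essentially the same route as the paper, which proves this lemma by adapting Qu and Yoon (2015a, Theorem 1): the convexity/quadratic-approximation argument with the two-sided design matrix $N$, the explicit $(p+1)$-th order Taylor bias, and the uniform-in-$\tau$ stochastic equicontinuity step that the paper isolates as its Lemma \ref{lemma:b3} (Qu and Yoon's Lemma B.3 adapted). Your identification of the equicontinuity step as the crux matches exactly where the paper leans on the cited result.
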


\subsection{Stochastic Equicontinuity}\label{sec:lemma:aux_lem}

We state the stochastic equicontinuity lemma by Qu and Yoon (2015a, Lemma B.3) adapted to our framework.
Define the linear extrapolation error and the estimation error by
\begin{align*}
&e_i(\tau)=\Big[ Q(\tau|x_0)+ \sum_{v=1}^{p}\frac{(x_i-x_0)^v}{v!}\Big( \frac{\partial^v Q(\tau|x^+_0) }{\partial x^v}d^+_i + \frac{\partial^v Q(\tau|x^-_0) }{\partial x^v}d^-_i \Big)\Big] - Q(\tau|x_i) 
\qquad\text{and}\\
&\phi(\tau)=
\begin{array}{ccccc}
\sqrt{nh_{n,\tau}} \Big[ &
\alpha(\tau)- Q(\tau|x_0), &
h_{n,\tau}\Big( \beta^+_1(\tau)  - \frac{ \partial Q(\tau|x^+_0) }{\partial x} 
 \Big),&
h_{n,\tau}\Big(\beta^-_1(\tau)- \frac{ \partial Q(\tau|x^-_0) }{\partial x}\Big),
\\
&\cdots,&
(h^p_{n,\tau}/p!)\Big( \beta^+_p(\tau)   - \frac{ \partial^p Q(\tau|x^+_0) }{\partial x^p}  \Big),&
(h^p_{n,\tau}/p!)\Big( \beta^-_p(\tau) - \frac{ \partial^p Q(\tau|x^-_0) }{\partial x^p}\Big)
&
\Big],
\end{array}
\end{align*}
respectively.
The level estimator is denoted by
$$
\hat{\alpha}(\tau) = \iota_1 ' \underset{(\alpha, \beta^+_1, \beta^-_1,...,\beta^+_p, \beta^-_p)\in \mathds{R}^{1+2p}}{\text{argmin}} \sum_{i=1}^{n}K\Big(\frac{x_i-x_0}{h_{n,\tau}}\Big)\rho_\tau \Big( y_i- \alpha - \sum_{v=1}^{p}(\beta^+_v d^+_i + \beta^-_v d^-_i)\frac{(x_i-x_0)^v}{v!} \Big),
$$
where $\iota_1=[1,0,...,0]'\in \mathds{R}^{2p+1}$.
With these notations, we further define
\begin{align*}
S_n(\tau,\phi (\tau), e_i (\tau))&=\frac{1}{\sqrt{nh_n}} \sum_{i=1}^{n} \Big[ P((u^0_i (\tau) \le e_i (\tau)+(nh_{n,\tau} )^{-1/2} z'_{i,n,\tau} \phi (\tau) )|x_i) \\
&\qquad\qquad\qquad\qquad\qquad\qquad- \mathds{1}(u^0_i (\tau) \le e_i (\tau)+(nh_{n,\tau} )^{-1/2} z'_{i,n,\tau} \phi (\tau) ) \Big]z_{i,n,\tau} K_{i,n,\tau}.
\end{align*}
The following lemma corresponds to Lemma B.3 of Qu and Yoon (2015a) adapted to our framework. 
\begin{lemma}\label{lemma:b3}
Under Assumption \ref{a:consistency} (i)--(v), for any $\xi>0$ and $\eta>0$, there exists $\delta>0$ such that for $n$ large enough
$$
P\Big( \underset{\tau'',\tau' \in T, |\tau''-\tau'|\le \delta}{\sup} \norm{S_n(\tau'',0,0)-S_n(\tau',0,0)} \ge \xi \Big) <\eta.
$$
\end{lemma}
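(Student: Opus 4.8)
The plan is to follow the argument of Qu and Yoon (2015a, Lemma B.3), checking that the two features specific to our setting — the two-sided local polynomial design vector $z_{i,n,\tau}$ (carrying the $d_i^\pm$ indicators) and, more importantly, the quantile-dependent bandwidth $h_{n,\tau}=c(\tau)h_n$ — do not break the stochastic equicontinuity argument. The first step is to rewrite $S_n(\tau,0,0)$ as a localized, conditionally mean-zero empirical process. Writing $u_i^0(\tau)=y_i-Q(\tau\mid x_i)$, Assumption \ref{a:consistency}(ii) guarantees that $F_{Y\mid X}(\cdot\mid x)$ is continuous (indeed strictly increasing near the relevant quantiles), so $P(u_i^0(\tau)\le 0\mid x_i)=F_{Y\mid X}(Q(\tau\mid x_i)\mid x_i)=\tau$, and $U_i:=F_{Y\mid X}(y_i\mid x_i)$ satisfies $U_i\sim\mathrm{Uniform}(0,1)$ independently of $x_i$ with $\mathds{1}(u_i^0(\tau)\le 0)=\mathds{1}(U_i\le\tau)$ almost surely. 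Hence
\begin{align*}
S_n(\tau,0,0)=\frac{1}{\sqrt{nh_n}}\sum_{i=1}^n\big(\tau-\mathds{1}(U_i\le\tau)\big)\,z_{i,n,\tau}K_{i,n,\tau},
\end{align*}
a sum of i.i.d., conditionally mean-zero summands indexed by $\tau\in T$. Because $K$ is compactly supported and $c(\cdot)$ is bounded (Assumption \ref{a:consistency}(iv),(v)), each summand is supported on $\{\abs{x_i-x_0}\le Ch_n\}$ and there $\norm{z_{i,n,\tau}}$ is bounded, so $E\big[\norm{z_{i,n,\tau}K_{i,n,\tau}}^2\big]=O(h_n)$, matching the $1/\sqrt{nh_n}$ normalization and giving $\var(S_n(\tau,0,0))=O(1)$.

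Next I would, for $\tau',\tau''\in T$ with $\tau'<\tau''$ and $\tau''-\tau'\le\delta$, split the increment of the $i$-th summand as
\begin{align*}
&\big(\tau''-\mathds{1}(U_i\le\tau'')\big)\big(z_{i,n,\tau''}K_{i,n,\tau''}-z_{i,n,\tau'}K_{i,n,\tau'}\big)\\
&\qquad{}+\big((\tau''-\tau')-\mathds{1}(\tau'<U_i\le\tau'')\big)\,z_{i,n,\tau'}K_{i,n,\tau'}.
\end{align*}
For the weight-difference piece, Lipschitz continuity of $K$ and of $c(\cdot)$ together with $\abs{x_i-x_0}/h_n=O(1)$ on the support of the kernel yield $\norm{z_{i,n,\tau''}K_{i,n,\tau''}-z_{i,n,\tau'}K_{i,n,\tau'}}\le C\abs{\tau''-\tau'}\mathds{1}(\abs{x_i-x_0}\le Ch_n)$; since the factor $\tau''-\mathds{1}(U_i\le\tau'')$ is conditionally mean zero with bounded variance, the corresponding process has variance $O(\delta^2)$ uniformly over such pairs. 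For the indicator-difference piece, $\mathds{1}(\tau'<U_i\le\tau'')$ has conditional mean $\tau''-\tau'\le\delta$ and conditional variance at most $\delta$, so that process has variance $O(\delta)$ uniformly. In both cases the relevant function classes — the VC class of half-lines $\{u\mapsto\mathds{1}(u\le\tau):\tau\in T\}$, multiplied by the uniformly bounded, compactly supported, Lipschitz-in-$\tau$ weights $z_{n,\tau}K_{n,\tau}$ — are of VC type with an envelope of order one (hence second moment $O(h_n)$), with constants and VC indices not depending on $n$.

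The third step is to invoke a maximal inequality for empirical processes over VC-type classes (as used in Qu and Yoon (2015a, Lemma B.3), building on Kong, Linton and Xia (2010)) to obtain, uniformly in $n$,
\begin{align*}
E\Big[\sup_{\tau'',\tau'\in T,\ \abs{\tau''-\tau'}\le\delta}\norm{S_n(\tau'',0,0)-S_n(\tau',0,0)}\Big]\ \le\ C\big(\sqrt{\delta\log(1/\delta)}+r_n\big),
\end{align*}
where the remainder $r_n$ collects the diagonal-block and envelope-truncation terms of the maximal inequality and satisfies $r_n\to 0$ under $nh_n^3\to\infty$ (Assumption \ref{a:consistency}(v), which also forces $h_n\to 0$, whence $nh_n\to\infty$ faster than $\log^2 n$). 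Markov's inequality then gives, for $n$ large, $P(\sup_{\abs{\tau''-\tau'}\le\delta}\norm{S_n(\tau'',0,0)-S_n(\tau',0,0)}\ge\xi)\le\xi^{-1}C(\sqrt{\delta\log(1/\delta)}+r_n)$, and choosing $\delta$ small makes the right-hand side less than $\eta$.

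I expect the main obstacle to be the simultaneous control of the two $\tau$-dependencies: unlike in a fixed-bandwidth local polynomial quantile process, here the weight vector $z_{i,n,\tau}K_{i,n,\tau}$ itself moves with $\tau$ through $h_{n,\tau}=c(\tau)h_n$, so one cannot simply quote an off-the-shelf result for the quantile-indicator process. The decomposition above isolates this effect into the weight-difference piece, which is dominated using only the Lipschitz hypotheses on $c$ and on $K$; once that piece is absorbed, the rest is the standard quantile empirical process argument. The two-sided structure (the $d_i^\pm$ indicators inside $z_{i,n,\tau}$) merely requires applying the argument coordinatewise to the "left" and "right" blocks and does not change the entropy bounds.
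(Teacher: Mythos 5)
Your proposal is correct and is essentially the approach the paper relies on: the paper does not prove this lemma at all but simply imports Qu and Yoon (2015a, Lemma B.3) ``adapted to our framework,'' and your argument reconstructs exactly that empirical-process proof --- the increment decomposition into a weight-difference piece (controlled by the Lipschitz properties of $K$ and $c(\cdot)$ on the kernel window) and a quantile-indicator piece, followed by a VC-type maximal inequality in which the envelope's second moment $O(h_n)$ cancels against the $1/\sqrt{nh_n}$ normalization so that the entropy constants are free of $n$. One small caution: Assumption \ref{a:consistency} (ii) controls $F_{Y\mid X}$ only near the relevant conditional quantiles and for $x\in[\underline{x},\bar{x}]$, so the probability-integral-transform step (claiming $U_i\sim\mathrm{Uniform}(0,1)$ independent of $x_i$) should be invoked only for observations inside the kernel neighborhood of $x_0$ and for $\tau\in T$, which is all your argument actually needs since every summand is multiplied by $K_{i,n,\tau}$.
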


\subsection{Proof of Theorem \ref{theorem:weakconv1}}\label{sec:weakconv1}
\begin{proof}
Lemma \ref{lemma:bahadur} and Assumption \ref{a:consistency} (iii) (b), (v) imply $\mathds{Z}_n=\mathds{X}_n+o_p(1)$ uniformly in $(\tau,j)$, where
\begin{align*}
\mathds{X}_n(\tau,j)=\sum_{i=1}^{n}\frac{\iota'_j N^{-1}  z_{i,n,\tau} K_{i,n,\tau} (\tau - \mathds{1}\{ y_i \le Q(\tau|x_i) \}) }{\sqrt{nh_{n,\tau}}f_X(x_0)f_{Y|X}(Q(\tau|x_0)|x_0)}.
\end{align*}
Therefore, in light of Lemma 18.10 (iv), Theorem 18.14, and Lemma 18.15 of van der Vaart (1998), it suffices to show that the leading term in the uniform Bahadur representation $\mathds{X}_n(\tau,j)$ is asymptotically tight and has finite dimensional convergence in distribution to the Gaussian distribution with the proposed covariance function. 

The finite dimensional convergence follows from the multivariate CLT (van der Vaart, 1998, Example 2.18). For any fixed set of quantiles $T'=\{\tau_1,...,\tau_k\}\subset T$ and any $\tau\in T'$, the law of iterated expectations gives $E[\mathds{X}_n(\tau,j)]=0$ under Assumption \ref{a:consistency} (i), (ii)(b), (iv) and (v). For calculation of the covariance function, pick any $\tau_1$, $\tau_2\in T'$ and $j_1$, $j_2\in \{2,3\}$. The law of iterated expectations under Assumption \ref{a:consistency} (i), (ii)(b), (iv), and (v) implies that
\begin{align*}
&E\mathds{X}_n(\tau_1,j_1)\mathds{X}_n(\tau_2,j_2)\\
=&E\sum_{i=1}^{n}\frac{\iota'_{j_1} N^{-1}  z_{i,n,\tau_1}z'_{i,n,\tau_2} N^{-1}\iota_{j_2}K_{i,n,\tau_1}K_{i,n,\tau_2} (\tau_1 - \mathds{1}\{ y_i \le Q(\tau_1|x_i) \})(\tau_2 - \mathds{1}\{ y_i \le Q(\tau_2|x_i) \}) }{nh_n\sqrt{c(\tau_1)c(\tau_2)}f^2_X(x_0)f_{Y|X}(Q(\tau_1|x_0)|x_0)f_{Y|X}(Q(\tau_2|x_0)|x_0)}\\
=&E\sum_{i=1}^{n}\frac{\iota'_{j_1} N^{-1}  z_{i,n,\tau_1}z'_{i,n,\tau_2} N^{-1}\iota_{j_2}K_{i,n,\tau_1}K_{i,n,\tau_2} E[(\tau_1 - \mathds{1}\{ y_i \le Q(\tau_1|x_i) \})(\tau_2 - \mathds{1}\{ y_i \le Q(\tau_2|x_i) \}) |x_i]}{nh_n\sqrt{c(\tau_1)c(\tau_2)}f^2_X(x_0)f_{Y|X}(Q(\tau_1|x_0)|x_0)f_{Y|X}(Q(\tau_2|x_0)|x_0)}\\
=&E\frac{\iota'_{j_1} N^{-1}  z_{i,n,\tau_1}z'_{i,n,\tau_2} N^{-1}\iota_{j_2}K_{i,n,\tau_1}K_{i,n,\tau_2} (\tau_1\wedge \tau_2 -\tau_1\tau_2)}{h_n\sqrt{c(\tau_1)c(\tau_2)}f^2_X(x_0)f_{Y|X}(Q(\tau_1|x_0)|x_0)f_{Y|X}(Q(\tau_2|x_0)|x_0)}\\
=&\frac{\iota'_{j_1} N^{-1}  T(\tau_1,\tau_2) N^{-1}\iota_{j_2} (\tau_1\wedge \tau_2 -\tau_1\tau_2)}{f_X(x_0)f_{Y|X}(Q(\tau_1|x_0)|x_0)f_{Y|X}(Q(\tau_2|x_0)|x_0)}+o(h_n)=O(1).
\end{align*}
This establishes the finite dimensional asymptotic normality.
The tightness follows because the denominator is bounded away from zero by Assumption \ref{a:consistency} (i), (ii), and because the numerator is tight by Lemma \ref{lemma:b3}.
\end{proof}
\subsection{Proof of Corollary \ref{corollary:QRKD}}
\label{sec:corollary:QRKD}
The result follows from an application of the delta method under Assumption \ref{a:policy treatment}. 
That the limiting distribution in Theorem \ref{theorem:weakconv1} is zero-mean Gaussian implies that the limiting distribution $\frac{G(\tau,2)-G(\tau,3)}{b'(x^+_0)-b'(x^-_0)}$ is also zero-mean Gaussian. 
The covariance is obtained by
\begin{align*}
&E\Big[\Big(\frac{G(\tau_1,2)-G(\tau_1,3)}{b'(x^+_0)-b'(x^-_0)}\Big)\Big(\frac{G(\tau_2,2)-G(\tau_2,3)}{b'(x^+_0)-b'(x^-_0)}\Big)\Big]\\
=&\frac{1}{(b'(x^+_0)-b'(x^-_0))^2}E[G(\tau_1,2)G(\tau_2,2)+G(\tau_1,3)G(\tau_2,3)-G(\tau_1,2)G(\tau_2,3)-G(\tau_1,3)G(\tau_2,2)]\\
=&\frac{(\tau_1 \wedge \tau_2-\tau_1\tau_2)}{(b'(x^+_0)-b'(x^-_0))^2} \times
\\& \frac{\iota'_2 N^{-1}T(\tau_1,\tau_2)N^{-1}\iota_2+\iota'_3 N^{-1}T(\tau_1,\tau_2)N^{-1}\iota_3-\iota'_2 N^{-1}T(\tau_1,\tau_2)N^{-1}\iota_3-\iota'_3 N^{-1}T(\tau_1,\tau_2)N^{-1}\iota_2}{f_X(x_0)f_{Y|X}(Q_{Y|X}(\tau_1|x_0)|x_0)f_{Y|X}(Q_{Y|X}(\tau_2|x_0)
|x_0)}
\end{align*}
for each $\tau_1$, $\tau_2\in T$,
where the last equality follows from the covariance expression of $G$ derived in Theorem \ref{theorem:weakconv1}.

\subsection{Proof of Corollary \ref{corollary:test}}\label{sec:corollary:test}
\begin{proof}
We focus on the non-standardized tests since results for the standardized tests will follow from those for the standardized ones through Slutsky's Theorem under the stated assumptions that $\sigma^s$, $\sigma^h$ are bounded away from zero uniformly on $T$. 
Part (i) of the corollary follows straightforwardly from Corollaries \ref{corollary:QRKD} and \ref{corollary:student}.
Part (ii) of the corollary follows by an application of the functional delta method (van der Vaart,1998; Theorem 20.8) with Corollaries \ref{corollary:QRKD} and \ref{corollary:student}.
It suffices to show that the linear functional $\phi: g \mapsto g-|T|^{-1}\int_T g d\tau $ is Hadamard differentiable at $QRKD$ tangentially to $\ell^\infty(T)$.
The linearity of $\phi'_{QRKD}$ is obvious, and the continuity is implied by its boundedness as $\norm{\phi'_{QRKD}(g)} \le \norm{g} \cdot |1+diam(T)|$ for all $g \in \ell^\infty(T)$.
Let $\{g_n\}_n \subset \ell^\infty(T)$ be a sequence converging to $g \in \ell^\infty(T)$ and $t_n \to 0$.
We have
$$
\frac{\phi(QRKD+t_n g_n)-\phi(QRKD)}{t_n}-\phi'_{QRKD}(g) \to 0 \qquad\text{in } \ell^\infty(T)
$$
by the bounded convergence theorem.
This shows the Hadamard differentiability.
\end{proof}

\subsection{Sufficient Primitive Conditions}\label{sec:lemma:primitive}

In this section, we show that the primitive conditions stated in Assumption \ref{a:primitive} are sufficient for the high-level statements in parts (ii) and (iii) of Assumption \ref{a:consistency}.

\begin{lemma}\label{lemma:primitive}
Assumptions \ref{a:regularity}, \ref{a:boundary}, \ref{a:pq}, \ref{a:DCT}, and \ref{a:primitive} imply parts (ii) and (iii) of Assumption \ref{a:consistency}.
\end{lemma}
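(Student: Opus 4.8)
The plan is to route every high-level object in parts (ii) and (iii) of Assumption~\ref{a:consistency} through the structural representations already used in the proof of Theorem~\ref{theorem:identification} (Lemma~1 of Sasaki, 2015). Under Assumptions~\ref{a:regularity}, \ref{a:boundary}, \ref{a:pq} and \ref{a:DCT}, writing $y=Q_{Y|X}(\tau|x)$, one has $F_{Y|X}(y|x)=\int_{V(y,x)}f_{\varepsilon|X}(\epsilon|x)\,dm^M(\epsilon)$, with $\partial_yF_{Y|X}(y|x)=f_{Y|X}(y|x)=c_M f_3(y,x)$ for a fixed dimensional constant $c_M>0$ ($c_1=1$), and $\partial_xF_{Y|X}(y|x)=f_2(y,x)-c_M f_1(y,x)$ on $[\underline x,x_0)$ and on $(x_0,\bar x]$; here $f_1,f_2,f_3$ are the primitives defined above the statement. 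Combined with $F_{Y|X}(Q_{Y|X}(\tau|x)|x)=\tau$, this reduces parts (ii)--(iii) of Assumption~\ref{a:consistency} to the smoothness and nondegeneracy of $f_1,f_2,f_3$, which is exactly what Assumption~\ref{a:primitive} delivers.

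I would first dispatch (ii)(b), (ii)(a) and (iii)(a). For (ii)(b): by Assumption~\ref{a:primitive}(iv) the finite numbers $y_*,y^*$ satisfy $Q_{Y|X}(\tau|x)\in[y_*,y^*]$ for all $(\tau,x)\in T\times[\underline x,\bar x]$, so taking $\xi=1$ and $\kappa=\max\{|y_*|,|y^*|\}+1$ puts every argument $Q_{Y|X}(\tau|x)+\eta$ in $[-\kappa,\kappa]$, on which Assumption~\ref{a:primitive}(iii) bounds $f_3$ -- hence $f_{Y|X}=c_M f_3$ -- between positive constants. For (ii)(a): $\tau\mapsto Q_{Y|X}(\tau|x_0)$ is Lipschitz on $T$ (Assumption~\ref{a:primitive}(v)) and $f_3(\cdot,x_0)$ is Lipschitz on $[y_*,y^*]$ (Assumption~\ref{a:primitive}(ii)), so $\tau\mapsto f_{Y|X}(Q_{Y|X}(\tau|x_0)|x_0)=c_M f_3(Q_{Y|X}(\tau|x_0),x_0)$ is Lipschitz as a composition. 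For (iii)(a): $Q_{Y|X}(\cdot|x_0)$ is Lipschitz as above, and differentiating the defining identity in $\tau$ (valid since $\partial_yF_{Y|X}=c_M f_3>0$) gives $\partial_\tau Q_{Y|X}(\tau|x)=1/(c_M f_3(Q_{Y|X}(\tau|x),x))$; since $f_3$ is jointly Lipschitz and bounded below on $[y_*,y^*]\times[\underline x,\bar x]$ (Assumption~\ref{a:primitive}(ii),(iii)) it is in particular continuous in $x$ at $x_0$, so $\partial_\tau Q_{Y|X}(\cdot|x_0^+)=\partial_\tau Q_{Y|X}(\cdot|x_0^-)=1/(c_M f_3(Q_{Y|X}(\cdot|x_0),x_0))$, which is Lipschitz on $T$ because the reciprocal of a positive Lipschitz function composed with the Lipschitz map $Q_{Y|X}(\cdot|x_0)$ is Lipschitz.

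The substantive step is (iii)(b). Continuity of $Q_{Y|X}(\tau|\cdot)$ at $x_0$ is contained in Assumption~\ref{a:primitive}(v). For the $x$-derivatives I would induct on $v$: differentiating $F_{Y|X}(Q_{Y|X}(\tau|x)|x)=\tau$ $v$ times in $x$ and isolating the term $\partial_yF_{Y|X}\cdot\partial_x^v Q_{Y|X}$ expresses $\partial_x^v Q_{Y|X}$, via the chain rule for higher derivatives, as a finite algebraic combination of (1) mixed $(x,y)$-partials of $F_{Y|X}$ of total order $\le v$ evaluated at $(Q_{Y|X}(\tau|x),x)$, which by the representations above are finite combinations of $(x,y)$-partials of $f_1,f_2,f_3$ of total order $\le v-1$ and are Lipschitz on $[y_*,y^*]\times(x_0,\bar x]$ by Assumption~\ref{a:primitive}(i),(ii); (2) the lower-order $\partial_x^j Q_{Y|X}$, $j<v$, Lipschitz by the induction hypothesis, with base cases $v=0$ ($Q_{Y|X}$ is Lipschitz in $\tau$ uniformly in $x$ since $\partial_\tau Q_{Y|X}=1/(c_M f_3)$ is bounded by (ii)(b), and Lipschitz in $x$ once the $v=1$ bound is available) and $v=1$ ($\partial_x Q_{Y|X}=-(f_2-c_M f_1)/(c_M f_3)$); and (3) negative powers of $\partial_yF_{Y|X}=c_M f_3$, which is bounded away from zero by Assumption~\ref{a:primitive}(iii). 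Since the Lipschitz functions on the bounded set $[y_*,y^*]\times[\underline x,\bar x]$ form an algebra stable under composition and under $u\mapsto 1/u$ where $|u|$ is bounded below, $\partial_x^v Q_{Y|X}$ then exists and is jointly Lipschitz on $\{(x,\tau):x\in(x_0,\bar x],\tau\in T\}$ for $v=0,\dots,p+1$, and the identical argument on $\{x\in[\underline x,x_0)\}$ completes the proof.

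The main obstacle I anticipate is the bookkeeping in this last step: one must verify that the orders of the $(x,y)$-partials of $f_1,f_2,f_3$ consumed while solving for $\partial_x^{p+1}Q_{Y|X}$ are precisely those whose Lipschitz-ness Assumption~\ref{a:primitive}(i)--(ii) grants, and -- since Assumption~\ref{a:primitive}(v) by itself supplies only $C^{p-1}$ smoothness of $Q_{Y|X}(\tau|\cdot)$ on $[\underline x,\bar x]$ -- that the two further orders of one-sided $x$-differentiability are genuinely recovered by the bootstrap, each additional differentiation of $F_{Y|X}(Q_{Y|X}(\tau|x)|x)=\tau$ raising the differentiability order of $Q_{Y|X}$ by one as long as $\partial_yF_{Y|X}$ stays bounded away from zero.
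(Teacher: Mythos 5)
Your proposal is correct and follows essentially the same route as the paper's proof: both express $\partial Q_{Y|X}(\tau|x)/\partial x$ and $\partial Q_{Y|X}(\tau|x)/\partial \tau$ through the structural quantities $f_1,f_2,f_3$ via the representation underlying Lemma~1 of Sasaki (2015), verify parts (ii)(a)--(b) and (iii)(a) by the $[y_*,y^*]$-bound, the choice of a large $\kappa$ in Assumption~\ref{a:primitive}(iii), and Lipschitz compositions, and obtain the higher-order one-sided $x$-derivatives in (iii)(b) by repeatedly differentiating the identity $F_{Y|X}(Q_{Y|X}(\tau|x)|x)=\tau$ with the chain rule, using the Lipschitzness of the partials of $f_1,f_2,f_3$ and the lower bound on $f_3$. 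Your explicit induction and order-bookkeeping for $v=0,\dots,p+1$ is in fact more detailed than the paper's ``a similar argument holds for higher order derivatives.''
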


\begin{proof}
	Following the proof of Lemma 1 of Sasaki (2015) under Assumptions \ref{a:regularity}, \ref{a:boundary}, \ref{a:pq}, and \ref{a:DCT}, we obtain constants $c_j$, $j=1$, $2$, $3$, such that $c_3 \neq 0$,
	\begin{align}
	&\frac{\partial }{\partial x}Q(\tau|x)=-\frac{\frac{\partial}{\partial x}F_{Y|X}(Q(\tau|x)|x)}{f_{Y|X}(Q(\tau|x)|x)}=-\frac{c_1f_1(Q(\tau|x),x)-c_2f_2(Q(\tau|x),x)}{c_3f_3(Q(\tau|x),x)},
	\qquad\text{and}
	\label{eq:dqdx}\\
	&\frac{\partial }{\partial \tau}Q(\tau|x)=\frac{1}{f_{Y|X}(Q(\tau|x)|x)}=\frac{1}{c_3f_3(Q(\tau|x),x)}.\label{eq:dqdt}
	\end{align}
	Assumption \ref{a:primitive} (iv) implies $\underset{(\tau,x)\in T\times [\underline x, \overline x]}{\sup} |Q(\tau|x)|<\infty$.
	Assumption \ref{a:primitive} (iii) allows us to pick $\kappa$ large enough to ensure that the denominator $f_3(Q(\tau|x),x)$ is uniformly bounded away from zero.
	Using these calculations, we argue that Assumption \ref{a:primitive} implies parts (ii) and (iii) of Assumption \ref{a:consistency}.
	
	First, as in the calculation for (\ref{eq:dqdt}) above under Assumptions  \ref{a:regularity}, \ref{a:boundary}, \ref{a:pq}, and \ref{a:DCT}, we can write
\begin{align*}
f_{Y|X}(Q_{Y|X}( \ \cdot \ | x_0) | x_0)=c_3 f_3\Big(\inf\Big\{y\in \mathcal{Y}\Big|\int_{V(y,x_0)} f_{\varepsilon|X}(\epsilon|x_0)dm^M(\epsilon)\ge \tau \Big\},x_0\Big).
\end{align*}
By Assumption \ref{a:primitive} (v) and (vi), $f_{Y|X}(Q_{Y|X}( \ \cdot \ | x_0) | x_0)$ is Lipschitz on $T$.
This shows that Assumption \ref{a:consistency} (ii) (a) holds.

	Second, 
set $\kappa=\max\{|y_*|,|y^*|\}+\delta$ for a small $\delta>0$ and invoke Assumption \ref{a:primitive} (iii), (iv), so we have $0<f'_L(\kappa)<f_{Y|X}(y|x)/c_3=f_3(y,x)<f'_U(\kappa)<\infty$ uniformly in $(y,x)$ on $[-\kappa,\kappa]\times[\underline x, \overline x]$. By definitions of $\kappa$, $y_*$, and $y^*$, it holds that $-\kappa <y_*-\delta/2<y_*\le Q(\tau|x) \le y^*<y^*+\delta/2<\kappa$ on $T\times [\underline x, \overline x]$. Take $\xi=\delta/2$, and we have $f_L(\kappa) \leq f_{Y|X}(Q_{Y|X}(\tau|x)+\eta |x) \leq f_U(\kappa)$ for all $\tau \in T$, $\abs{\eta} \leq \xi$ and $x \in [\underbar{x},\bar{x}]$. This shows that Assumption \ref{a:consistency} (ii) (b) holds.
	
	Third, 
Assumption \ref{a:primitive} (v) implies that $Q(\tau|x_0)$ is Lipschitz. For $\frac{\partial}{\partial \tau}Q(\tau|x^+_0)$,  
since $\lim_{x\to x^+_0}f_3(\cdot|x)$ is uniformly bounded away from zero and is Lipchitz in $y$ by the argument in the second step under Assumption \ref{a:primitive} (ii), (iii), and (iv), (\ref{eq:dqdt}) is well defined when it is evaluated at $x=x^+_0$. We can then conclude that it is Lipschitz using the Lipschitzness of $Q(\tau|x_0)$, which also follows from Assumption \ref{a:primitive} (v). The same reasoning applies to $\frac{\partial}{\partial \tau}Q(\tau|x^-_0)$. This shows that Assumption \ref{a:consistency} (iii) (a) holds.
	
	Fourth, Assumption \ref{a:primitive} (v) implies that $Q(\tau|\cdot)$ is continuous at $x_0$ for each $\tau\in T$.
	This shows that the first statement of Assumption \ref{a:consistency} (iii) (b) holds. Finally, 
note that, by Assumption \ref{a:primitive} (i)-(v) and by the chain rule,
\small
\begin{align*}
&\frac{\partial^2 }{\partial x^2}Q(\tau|x)\\
=&\frac{[c_1 f_1(Q(\tau|x),x)-c_2f_2(Q(\tau|x),x)] c_3\frac{\partial}{\partial x}f_3(Q(\tau|x),x)-c_3 f_3(Q(\tau|x),x)\frac{\partial}{\partial x}[c_1 f_1(Q(\tau|x),x)-c_2f_2(Q(\tau|x),x)]}{c_3^2 f^2_3(Q(\tau|x),x)}
\end{align*}
\normalsize
exists and is Lipschitz. A similar argument holds for higher order derivatives.
This shows that the second statement of Assumption \ref{a:consistency} (iii) (b) holds. 
\end{proof}



\section{Practical Guideline}\label{sec:guide_to_practice}

\subsection{Bandwidth Choice}\label{sec:guide_to_practice_bandwidth}
This section presents a guide to practice for bandwidth choice. 
Imbens and Kalyanaraman (2012), Calonico, Cattaneo and Titiunik (2014), and Arai and Ichimura (2016) provide data-driven optimal bandwidth selection algorithms for the mean regression discontinuity design.
In this section, we propose a bandwidth selection rule based on the MSE for the local linear estimation of the conditional CDF, which is compatible with orders $p>1$ for biased-corrected estimation.

We define the following notations: $ u_1=[1, ud^+_u, ud^-_u]'$, $ N_1=\int_{\mathds{R}} [1,ud^+_u, ud^-_u] [1,ud^+_u, ud^-_u]'_1 K(u)du$, $ T_1=(c(\tau))^{-1}\int_{\mathds{R}}  [1,ud^+_u, ud^-_u]  [1,ud^+_u, ud^-_u]' K^2(u)du$ and $ \jmath_2=[0,1,0]'$, $ \jmath_3=[0,0,1]'$.
With the order of polynomial set to one, Lemma \ref{lemma:bahadur} and Theorem \ref{theorem:weakconv1} together imply that the approximate MSE is
$
MSE(\hat \beta^+_1(\tau)-\hat\beta^-_1(\tau))
=Bias^2( \hat \beta^+_1(\tau)-\hat\beta^-_1(\tau) ) + Var( \hat \beta^+_1(\tau)-\hat\beta^-_1(\tau) ),
$
where
\begin{align*}
Bias(\hat \beta^+_1(\tau)-\hat\beta^-_1(\tau))=&h_{n,\tau}\Big[\frac{ \jmath'_2 ( N_1)^{-1}}{2!} \int_{\mathds{R}} \Big( \frac{\partial^{2} Q(\tau|x^+_0)}{\partial x^{2}}d^+_u + \frac{\partial^{2} Q(\tau|x^-_0)}{\partial x^{2}}d^-_u \Big)  u^{2}    u'_1 K(u) du\\
&-\frac{\jmath'_3 (N_1)^{-1}}{2!} \int_{\mathds{R}} \Big( \frac{\partial^{2} Q(\tau|x^+_0)}{\partial x^{2}}d^+_u + \frac{\partial^{2} Q(\tau|x^-_0)}{\partial x^{2}}d^-_u \Big)  u^{2}     u'_1 K(u) du\Big] \qquad\text{and}\\
Var(\hat \beta^+_1(\tau)-\hat\beta^-_1(\tau))=&\frac{1}{nh^3_{n,\tau}}\frac{ \tau(1-\tau)( \jmath'_2 N^{-1}_1 T_1(\tau)  N^{-1}_1 \jmath_2+ \jmath'_3 N^{-1}_1 T_1(\tau)  N^{-1}_1 \jmath_3-2\jmath'_2 N^{-1}_1 T_1(\tau)  N^{-1}_1 \jmath_3)}{f_X(x_0)f_{Y|X}(Q(\tau|x_0)|x_0)}
\end{align*}
Taking the first order condition with respect to the bandwidth, 
	under Assumption \ref{a:consistency}, we obtain the approximate optimal choice of $h_{n,\tau}$ for the QRKD estimand:
	\begin{align*}
	h^*_{n,\tau}(s)=\Big( \frac{3}{2}\frac{C_2(\tau)}{C^2_1(\tau)} \Big)^{\frac{1}{2}}n^{-\frac{1}{5}},
	\end{align*}
	where
	\begin{align*}
	C_1(\tau)=&\frac{ \jmath'_2 ( N_1)^{-1}}{2!} \int_{\mathds{R}} \Big( \frac{\partial^{2} Q(\tau|x^+_0)}{\partial x^{2}}d^+_u + \frac{\partial^{2} Q(\tau|x^-_0)}{\partial x^{2}}d^-_u \Big)  u^{2}    u'_1 K(u) du\\
	-&\frac{\jmath'_3 (N_1)^{-1}}{2!} \int_{\mathds{R}} \Big( \frac{\partial^{2} Q(\tau|x^+_0)}{\partial x^{2}}d^+_u + \frac{\partial^{2} Q(\tau|x^-_0)}{\partial x^{2}}d^-_u \Big)  u^{2}     u'_1 K(u) du \qquad\text{and}\\
	C_2(\tau)=&\frac{ \tau(1-\tau)( \jmath'_2 N^{-1}_1 T_1(\tau)  N^{-1}_1 \jmath_2+ \jmath'_3 N^{-1}_1 T_1(\tau)  N^{-1}_1 \jmath_3-2\jmath'_2 N^{-1}_1 T_1(\tau)  N^{-1}_1 \jmath_3)}{f_X(x_0)f_{Y|X}(Q(\tau|x_0)|x_0)}.
	\end{align*}

For bias-corrected estimation with an order $p>1$, this bandwidth rule above provides a rate that is required by Assumption \ref{a:consistency} (v).
In the above formulas, the unknown densities, $f_X$ and $f_{Y \mid X}$, and the unknown conditional quantile function $Q$ and its derivative $\frac{\partial^2}{\partial x^2}Q$ need to be replaced by the respective estimates $\hat f_X$, $\hat  f_{Y \mid X}$, $\check \alpha$, and $\check \beta^\pm_2$.
We thus propose to replace $C_1(\tau)$ and $C_2(\tau)$ by
\begin{align*}
\widehat C_1(\tau)=&\frac{ \jmath'_2 ( N_1)^{-1}}{2!} \int_{\mathds{R}} \Big( \check \beta^+_2(\tau) d^+_u + \check \beta^-_2(\tau)d^-_u \Big)  u^{2}    u'_1 K(u) du\\
-&\frac{\jmath'_3 (N_1)^{-1}}{2!} \int_{\mathds{R}} \Big( \check \beta^+_2(\tau) d^+_u + \check \beta^-_2(\tau)d^-_u \Big)  u^{2}     u'_1 K(u) du \qquad\text{and}\\
\widehat C_2(\tau)=&\frac{ \tau(1-\tau)( \jmath'_2 N^{-1}_1 T_1(\tau)  N^{-1}_1 \jmath_2+ \jmath'_3 N^{-1}_1 T_1(\tau)  N^{-1}_1 \jmath_3-2\jmath'_2 N^{-1}_1 T_1(\tau)  N^{-1}_1 \jmath_3)}{\hat f_X(x_0)\hat f_{Y|X}(\check \alpha(\tau) |x_0)},
\end{align*}
respectively, where
\begin{align*}
\check \alpha (\tau)=&\iota'_1\underset{(\alpha, \beta^+_1, \beta^-_1,\beta^+_2, \beta^-_2)\in \mathds{R}^{5}}{\text{argmin}} \sum_{i=1}^{n}\rho_\tau \Big( y_i- \alpha - \sum_{v=1}^{2}(\beta^+_v d^+_i + \beta^-_v d^-_i)\frac{(x_i-x_0)^v}{v!} \Big),\\
\check \beta^+_2(\tau)=&\iota'_4\underset{(\alpha, \beta^+_1, \beta^-_1,\beta^+_2, \beta^-_2)\in \mathds{R}^{5}}{\text{argmin}} \sum_{i=1}^{n}\rho_\tau \Big( y_i- \alpha - \sum_{v=1}^{2}(\beta^+_v d^+_i + \beta^-_v d^-_i)\frac{(x_i-x_0)^v}{v!} \Big), \qquad\text{and}\\
\check \beta^-_2(\tau)=&\iota'_5\underset{(\alpha, \beta^+_1, \beta^-_1,\beta^+_2, \beta^-_2)\in \mathds{R}^{5}}{\text{argmin}} \sum_{i=1}^{n}\rho_\tau \Big( y_i- \alpha - \sum_{v=1}^{2}(\beta^+_v d^+_i + \beta^-_v d^-_i)\frac{(x_i-x_0)^v}{v!} \Big).
\end{align*}

Bandwidth choices for the preliminary estimates, $\hat f_X$ and $\hat f_{Y \mid X}$, can be conducted by standard rule-of-thumb or data-driven methods.
Let $h^x_{n}$ and $(\bar h^y_{n}, \bar h^x_{n})^\prime$ denote the bandwidths used for estimating $\hat f_X$ and $\hat f_{Y \mid X}$, respectively.
First, $h^x_{n}$ may be obtained by minimizing approximate MISE.
In other words,
$
h^x_n=\big( \int u^2 K(u)du \big)^{-2/5} \big( \int K(u)^2 du \big)^{1/5}\big( \frac{3}{8\sqrt{\pi} } \sigma^{-5}_X \big)^{-1/5} n^{-1/5}
$, where $\sigma_X$ can be estimated by sample variance of $X$.
See Sections 3.3 and 3.4 of Silverman (1986).
Second, Bashtannyk and Hyndman (2001) suggest that $(\bar h^y_{n}, \bar h^x_{n})^\prime$ may be obtained by
\begin{align*}
(\bar h^y_{n}, \bar h^x_{n})^\prime=\bigg( \Big( \frac{d^2v}{2.85\sqrt{2\pi} \sigma^5_{X} }\Big)^{1/4}\bar h^x_{n}, \Big( \frac{32 R^2(K)\sigma^5_{Y} (260 \pi^9 \sigma^{58}_{X} )^{1/8}}{n \sigma^4_K d^{5/2} v^{3/4} [v^{1/2} + d(16.25 \pi \sigma^{10}_{X})^{1/4}]} \Big)^{1/6},
\bigg)^\prime,
\end{align*}
where $R(K)=\int K^2(u)du$, $v= 0.95  \sqrt{2\pi} \sigma^3_{X} (3 d^2 \sigma^2_{X} + 8\sigma^2_{Y})-32\sigma^2_{X} \sigma^2_{Y} \exp(-2)$, and $d$ is the slope of an OLS of $y_i$ on $[1,x_i]'$. 
Here, $\sigma^2_K$ is the variance with respect to the kernel function $K$.
The variances, $\sigma^2_{X}$ and $\sigma^2_{Y}$, can be replaced by sample variances of $x_i$ and $y_i$, respectively.


\subsection{Pivotal Simulation and Implementation of Uniform Inference }\label{sec:bias}

As pointed out in Section 6 of Qu and Yoon (2015a) and Remark 2 of Qu and Yoon (2015b), the distribution of the process $G(\tau,j)$ is conditionally pivotal and the randomness of Uniform Bahadur Representations come only from $\{\tau-\mathds{1}\{y_i\le Q(\tau|x_i)\}\}^n_{i=1}$ conditional on data.
In this light, we can simulate the distribution of $G(\tau,j)$ in the following manner.
In each iteration, we generate $\{u_i\}^n_{i=1} \stackrel{i.i.d.}{\sim} Uniform(0,1)$ independently from data, and evaluate $\{\tau-\mathds{1}\{u_i\le \tau\}\}^n_{i=1}$ in place of $\{\tau-\mathds{1}\{y_i\le Q(\tau|x_i)\}\}^n_{i=1}$ in the Uniform Bahadur representations.
Repeat this process many times.
With this procedure, we can perform the tests of significance and heterogeneity as in Section \ref{sec:test} via simulating the supremum of $G(\tau,j)$.
The following algorithm presents a complete procedure to implement the non-standardized test of significance and test of heterogeneity in corollary \ref{corollary:test}.

\begin{algorithm}\label{algorithm:SQRKD_non-std}\ \
	\begin{enumerate}
		\item Discretize $T$ into a grid points $T_d=\{t_1,...,t_T\}$. For each $\tau\in T_d$, estimate $(\hat \alpha(\tau),\hat\beta^+_1(\tau),\hat\beta^-_1(\tau))$. 
		\item Estimate $\hat f_X(x_0)$ and estimate $\hat f_{Y|X}(\hat \alpha(\tau)|x_0)$ for each $\tau\in T_d$.
		\item Generate $\{u_i\}^n_{i=1} \stackrel{i.i.d.}{\sim} Uniform(0,1)$ independently from data.
		\item For each $\tau\in T_d$, compute
	\begin{align*}
\widehat Y_1(\tau)=\frac{(\iota'_2 -\iota'_3)N^{-1} \sum_{i=1}^{n} z_{i,n,\tau} K_{i,n,\tau} (\tau-\mathds{1}\{u_i\le \tau \}) }{(b'(x^+_0)-b'(x^-_0))\sqrt{nh_{n,\tau}}\hat f_X(x_0) \hat f_{Y|X}(\hat \alpha(\tau)|x_0)}
	\end{align*}
	\item Iterate the third and fourth steps $M$ times to obtain $\{\widehat Y_j(\cdot)\}_{j=1}^{M}$ on $T_d$.
	\item Compute the test statistic(s) $WS_n(T_d)$ and/or $WH_n(T_d)$.
	\item Compute the $p$-th quantile(s) of $\max_{\tau \in T_d}|\widehat Y_j(\tau)|$ and/or $\max_{\tau \in T_d}|\phi'_{QRKD}(\widehat Y_j)(\tau)|$, the simulated critical values for the test statistic(s), $WS_n(T_d)$ and/or $WH_n(T_d)$, respectively.
	\end{enumerate}
\end{algorithm}

To compute standardized version of the test statistics, we also need to obtain estimates for $\widehat \sigma^s$ and $\widehat \sigma^h$. 
We compute them based on the standard deviations of
\begin{align*}
A_n^s(\tau)&=\sqrt{n h_{n,\tau}^3}\widehat{QRKD}(\tau)
\qquad\text{and}\\
A_n^h(\tau)&=\sup_{\tau \in T}\sqrt{n h_{n,\tau}^3}\Big[\widehat{QRKD}(\tau)- |T|^{-1}\int_{T} \widehat{QRKD}(\tau') d\tau^\prime\Big].
\end{align*} 
The following algorithm outlines a procedure for the standardized version of the test.

\begin{algorithm}\label{algorithm:SQRKD_std}
Steps 1--5 remain the same as those in Algorithm \ref{algorithm:SQRKD_non-std}.
\begin{enumerate}
	\setcounter{enumi}{5}
	\item Compute the test statistic(s) $WS^{std}_n(T_d)$ and/or $WH^{std}_n(T_d)$.
	\item Compute the $p$-th quantile(s) of $\max_{\tau \in T_d}|\widehat Y_j(\tau)/\widehat \sigma^s(\tau)|$ and/or $\max_{\tau \in T_d}|\phi'_{QRKD}(\widehat Y_j)(\tau)/\widehat \sigma^h(\tau)|$, the simulated critical values for the test statistic(s), $WS^{std}_n(T_d)$ and/or $WH^{std}_n(T_d)$, respectively.
\end{enumerate}
\end{algorithm}

\end{document}